\theoremstyle{definition}
\newtheorem{theorem}{Theorem}
\newtheorem{lemma}{Lemma}
	\renewcommand*{\bm}[1]{#1}%
\def\thanks#1{\protected@xdef\@thanks{\@thanks
		\protect\footnotetext{#1}}}
\newcounter{MYtempeqncnt}
\begin{document}
\title{On the Capacity Region of Optical Intensity Broadcast Channels}

\author{Sufang~Yang,~\IEEEmembership{Graduate Student Member,~IEEE,}
	Longguang~Li,~\IEEEmembership{Member,~IEEE,}
	and~Jintao~Wang,~\IEEEmembership{Senior~Member,~IEEE}
	\thanks{This work was supported in part by the National Natural Science Foundation of China under Grant No. 62101192, in part by Shanghai Sailing Program under Grant No. 21YF1411000, and in part by Tsinghua University-China Mobile Research Institute Joint Innovation Center.}
	\thanks{Sufang~Yang is with the Department of Electronic Engineering, Tsinghua University, Beijing 100084, China (e-mail: ysf20@mails.tsinghua.edu.cn).}
	\thanks{Longguang Li is with the Shanghai Key Laboratory of Multidimensional Information Processing, and the Department of Communication and Electronic Engineering, East China Normal University, Shanghai 200241, China (e-mail: lgli@cee.ecnu.edu.cn).}
	\thanks{Jintao~Wang is with the Beijing National Research Center for Information Science and Technology (BNRist), Tsinghua University, Beijing 100084, China, also with the Department of Electronic Engineering, Tsinghua University, Beijing 100084, China, and also with the Research Institute, Tsinghua University in Shenzhen, Shenzhen 518057, China (e-mail: wangjintao@tsinghua.edu.cn).}
}

\maketitle

\vspace{-1.7cm}
\begin{abstract}
This paper investigates the capacity region of the optical intensity broadcast channels (OI-BCs), where the input is subject to a peak-intensity constraint, an average-intensity constraint, or both. By leveraging the decomposition results of several random variables, i.e., uniform, exponential, and truncated exponential random variables, and adopting a superposition coding (SC) scheme, the inner bound on the capacity region is derived. Then, the outer bound is derived by applying the conditional entropy power inequality (EPI). In the high signal-to-noise ratio (SNR) regime, the inner bound asymptotically matches the outer bound, thus characterizing the high-SNR asymptotic capacity region. The bounds are also extended to the general $K$-user BCs without loss of high-SNR asymptotic optimality.
\end{abstract}

\begin{IEEEkeywords}
Capacity region, optical broadcast channel, intensity modulation-direct detection, optical wireless communication, peak- or/and average-intensity constraints.
\end{IEEEkeywords}

\section{Introduction}
\label{sec:introduction}
\IEEEPARstart{R}{ecent} years have witnessed rapid development and improvement of wireless communication technology, which has brought the vast popularity of the application of smartphones, intelligent robots, and driverless cars \cite{Yang2019,Giordani,Chowdhury2020}. These mobile terminals provide great convenience to our daily lives. With the increasing number of mobile terminals and the demand for stable and high-speed data transmission, the traditional radio frequency (RF) wireless spectrum is facing severe congestion. To cope with this problem, we urgently need an emerging wireless communication to compensate for the shortcomings of traditional RF communication. 

Optical wireless communication (OWC), with the advantages of low cost and abundant bandwidth, has attracted extensive attention \cite{Khalighi2014,Kaushal2017,Pathak2015,Karunatilaka2015}. In an OWC system, information is carried on the intensity of optical light emitted by light-emitting diodes (LEDs) or laser diodes (LDs) and detected by photodetectors, which is known as the intensity-modulation and direct-detection (IM-DD) scheme \cite{Cox1997,Lee2009}. This unique scheme leads to fundamental differences between the OWC and RF systems. Here, the input corresponds to the optical intensity signal. Hence it is real-valued and non-negative. Furthermore, the peak or/and average intensity are often constrained for the input with battery limitation and safety considerations \cite{Lapidoth2009,Chaaban2016-1,Zhou2017,Chaaban2017-1}. With the advancement of coding, modulation, and detection technologies, the OWC systems have been successfully applied in offices, hospitals, and airport cabins \cite{Fath2013,Uysal2014,Ghassemlooy2015}. 

Generally, the OWC systems can be classified as outdoor free-space optical communication (FSO) and indoor visible light communication (VLC) systems \cite{Chan2006,Komine2004,Elgala2011}. In this paper, we mainly focus on the indoor VLC systems and explore them from an information-theoretic perspective. Extensive studies have been done on the channel capacity for single-user indoor VLC systems \cite{Lapidoth2009,Chaaban2016-1,McKellips2004,Hranilovic2004,Farid2010,Chaaban2018,Li2020}. For example, the authors in \cite{Lapidoth2009} derived the upper and lower bounds on the single-user channel capacity when the input satisfies peak or/and average-intensity constraints. Besides, the asymptotic capacity results at low and high signal-to-noise ratios (SNRs) were also proposed. In recent years, there has been an increasing interest in the multi-user VLC networks, where the transmitter serves multiple users so that each user reliably receives messages simultaneously \cite{Gamal2011}.

Considering the input satisfies peak- or/and average-intensity constraints, several studies have been done on the capacity region of the VLC networks, such as broadcast channels (BCs), multiple access channels (MACs), and interference channels (ICs) \cite{Chaaban2016-2,Chaaban2017-2,Zhou2019,Zhang2022}. For the BC network, the authors in \cite{Chaaban2016-2} considered both peak- and average-intensity constraints for the input. They derived the inner bound on the capacity region based on the truncated Gaussian or on-off keying (OOK) coding scheme and the outer bound based on Bergmans' approach. These bounds asymptotically matched at low SNR and achieved a constant gap at high SNR. For the MAC network, the authors in \cite{Chaaban2017-2} adopted a similar method as in \cite{Chaaban2016-2} and the derived bounds achieved similar performances. Different from \cite{Chaaban2017-2}, the authors in \cite{Zhou2019} considered the peak- or average-intensity constraint for the input. They derived the inner bound based on the uniform or exponential coding scheme, which achieved asymptotic optimality at high SNR. For the IC network, the authors in \cite{Zhang2022} considered both peak- and average-intensity constraints for the input. They derived the inner bound based on the treating-interference-as-noise (TIN) or Han-Kobayashi (HK) schemes and the outer bounds by providing different side information to the receivers.

In this paper, we investigate the optical intensity BCs (OI-BCs). Different from \cite{Chaaban2016-2}, here we consider the input is subject to three different intensity constraints: (1) only peak-intensity constraint; (2) only average-intensity constraint; (3) both peak- and average-intensity constraints. We provide new inner and outer bounds on the capacity region of the considered OI-BCs. Instead of the truncated Gaussian codes \cite{Chaaban2016-2}, we adopt a superposition coding (SC) scheme from the decomposition results of several random variables to derive the inner bound. These random variables are the maximal entropy-achieving inputs when considering peak- or/and average-intensity constraints \cite{Lapidoth2009}. The outer bound is derived by applying the conditional entropy power inequality (EPI) \cite{Verdu2006}. We show that the bounds are asymptotically optimal in the high-SNR regime. Moreover, we extend the bounds to the $K$-user OI-BCs without loss of high-SNR asymptotic optimality.

The remaining part of this paper is organized as follows. Sec.~\ref{sec:channel-model} introduces the channel model of the OI-BCs. Secs.~\ref{sec: two-user capacity region} and~\ref{sec: K-user capacity region} analyze the capacity regions of two-user and $K$-user OI-BCs, respectively. We conclude this paper in Sec.~\ref{sec:conclusion}. A few proofs are given in the Appendices.

\textbf{Notation:} A random variable is denoted by an uppercase letter, e.g., $X$, while its realization by a smallcase letter, e.g., $x$. The support of a random variable is denoted by $\mathsf{supp}(\cdot)$, the expectation by $\mathbb{E}[\cdot]$, and the characteristic function by $\phi(\cdot)$. Differential entropy is denoted by $\mathsf{h}(\cdot)$ and mutual information by $\mathsf{I}(\cdot;\cdot)$. The set of non-negative real numbers is denoted by $\mathbb{R}^+$ and the set of positive natural numbers by $\mathbb{N}^{+}$. Dirac delta function is denoted by $\delta (\cdot)$ and the convex hull of a set by $\mathsf{Conv}\{ \cdot\}$. The index set $\{1,2,\ldots,k\}$ is denoted by $[k]$, $k\in\mathbb{N}^{+}$, and $j=\sqrt{-1}$. Function $A(t)\doteq B(t)$ denotes that $\lim_{t\rightarrow\infty} \{A(t)-B(t)\}=0$.


\section{Channel Model}\label{sec:channel-model}
For a single-user optical intensity channel, the channel output is given by
{\setlength\abovedisplayskip{4.85pt} 
\setlength\belowdisplayskip{4.85pt}
\begin{IEEEeqnarray}{rCl}
	Y =X+Z,	\label{eq:singlechannelmodel}
\end{IEEEeqnarray}}where $X$ denotes channel input and  $Z$ denotes Gaussian noise with variance $\sigma^2$, i.e., 
{\setlength\abovedisplayskip{4.85pt} 
\setlength\belowdisplayskip{4.85pt}
\begin{IEEEeqnarray}{c}
	Z\sim \mathcal{N}(0,\sigma^2).
	\label{gaussiannoise}
\end{IEEEeqnarray}}Since $X$ is proportional to optical intensity, its support must satisfy
{\setlength\abovedisplayskip{4.85pt} 
	\setlength\belowdisplayskip{4.85pt}
\begin{IEEEeqnarray}{c}
	\mathsf{supp} (X) \subset \mathbb{R}^+.
	\label{eq: positive support}
\end{IEEEeqnarray}}Considering the limited dynamic range of LED devices and the requirement of illumination quality or energy consumption, the input is usually subject to a peak-intensity constraint:
{\setlength\abovedisplayskip{4.85pt} 
\setlength\belowdisplayskip{4.85pt}
\begin{IEEEeqnarray}{c}
	\textnormal{Pr}( X \leq \amp ) = 1, \label{eq:peak cons}
\end{IEEEeqnarray}}or/and an average-intensity constraint:
{\setlength\abovedisplayskip{4.85pt} 
\setlength\belowdisplayskip{4.85pt}
\begin{IEEEeqnarray}{c}
	\mathbb{E} [X]  \leq \EE. 
	\label{eq:ave cons}
\end{IEEEeqnarray}}For convenience, we denote the ratio between the maximal instantaneous intensity and average intensity as $\alpha$, i.e., $\alpha=\frac{\EE}{\amp}$, which is limited in $\left(0,\frac{1}{2}\right]$.

In this paper, we mainly focus on the OI-BCs. We first consider a two-user OI-BC, and the channel output at user $k$ is given by
{\setlength\abovedisplayskip{4.85pt} 
\setlength\belowdisplayskip{4.85pt}
\begin{IEEEeqnarray}{rCl}
	Y_k =X+Z_k,\quad k\in\{1,2\},
	\label{eq:channelmodel}
\end{IEEEeqnarray}}where $X$ denotes the channel input, whose support satisfies~\eqref{eq: positive support}, and is also subject to~\eqref{eq:peak cons}, or/and~\eqref{eq:ave cons}; and $Z_k$ denotes the Gaussian noise at user $k$, and 
{\setlength\abovedisplayskip{4.85pt} 
\setlength\belowdisplayskip{4.85pt}
\begin{IEEEeqnarray}{rCl}
	Z_k\sim \mathcal{N}(0,\sigma_k^2), \quad k\in\{1,2\}.
\end{IEEEeqnarray}}Without loss of generality, we assume $\sigma_1 \leq \sigma_2$. Besides, a high SNR regime corresponds to the regime where $\amp\gg\sigma_k$ or $\EE\gg\sigma_k$, $k\in\{1,2\}$.

In a two-user OI-BC, the desired messages for users 1 and 2 are denoted by $M_1$ and $M_2$, respectively. We assume $M_1$ and $M_2$ are independent. The transmitter encodes and sends them at coding rates $R_1$, and $R_2$, respectively, and simultaneously. If both users can decode their messages with vanishing error probabilities as the code length tends to infinity, we say that the rate pair $(R_1,R_2)$ is achievable. The capacity region of this channel is the closure of the set of all achievable rate pairs.

Note that the channel in~\eqref{eq:channelmodel} belongs to the class of the additive white Gaussian broadcast channels, which is a physically degraded channel \cite{Gamal2011}. As a consequence, we have
{\setlength\abovedisplayskip{4.85pt} 
\setlength\belowdisplayskip{4.85pt}
\begin{IEEEeqnarray}{rCl}
	Y_2 = Y_1 + \widetilde{Z}_2,
\end{IEEEeqnarray}}where $\widetilde{Z}_2\sim\mathcal{N}(0,\sigma_2^2-\sigma_1^2)$. The capacity region of a two-user physically degraded channel is characterized by~\cite[Theorem 15.6.2]{Cover2006} and summarized in the following lemma:
\begin{lemma}\label{lemma: twouser}
	The capacity region of the two-user OI-BC in~\eqref{eq:channelmodel}, where $X\rightarrow Y_1 \rightarrow Y_2$ forms a Markov chain, is the set of rate pair $(R_1,R_2)$ satisfying
	{\setlength\abovedisplayskip{4.85pt} 
	\setlength\belowdisplayskip{4.85pt}
	\begin{IEEEeqnarray}{rCl}
		R_1 &\leq& \mathsf{I}(X;Y_1|U), \\
		R_2 &\leq& \mathsf{I}(U;Y_2),
	\end{IEEEeqnarray}}for some $p_{U,X}(u,x)$ subject to~\eqref{eq: positive support},~\eqref{eq:peak cons}, or/and~\eqref{eq:ave cons}.
\end{lemma}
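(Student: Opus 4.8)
The plan is to establish Lemma~\ref{lemma: twouser} by proving the inner (achievability) and outer (converse) bounds separately, following the classical treatment of the physically degraded broadcast channel and adapting it to the continuous-input, intensity-constrained setting at hand. Since the channel in~\eqref{eq:channelmodel} is physically degraded with $X\rightarrow Y_1\rightarrow Y_2$, the auxiliary variable $U$ plays the role of a ``cloud center'' that carries the message $M_2$ of the weaker user~2, while $X$ resolves the ``satellite'' codeword carrying $M_1$ for the stronger user~1.

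For achievability, I would employ superposition coding. Fix any joint law $p_{U,X}$ whose $X$-marginal respects~\eqref{eq: positive support},~\eqref{eq:peak cons}, or/and~\eqref{eq:ave cons}. First generate $2^{nR_2}$ cloud centers $u^n$ i.i.d.\ according to $p_U$, and for each cloud center generate $2^{nR_1}$ satellite codewords $x^n$ i.i.d.\ according to $\prod_i p_{X|U}(x_i|u_i)$; because $x^n$ is drawn from the prescribed conditional law, the transmitted sequence automatically obeys the imposed intensity constraints. User~2 performs joint typicality decoding of the cloud center from $Y_2^n$, which succeeds with vanishing error probability provided $R_2<\mathsf{I}(U;Y_2)$. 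User~1, enjoying the less noisy channel, first decodes the same cloud center and then the satellite codeword, succeeding provided $R_1<\mathsf{I}(X;Y_1|U)$, with the cloud-center step imposing no extra rate penalty by degradedness. Taking $n\rightarrow\infty$ yields the inner bound.

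For the converse, I would invoke Fano's inequality to convert reliable decoding into the multi-letter bounds $nR_2\le \mathsf{I}(M_2;Y_2^n)+n\epsilon_n$ and $nR_1\le \mathsf{I}(M_1;Y_1^n|M_2)+n\epsilon_n$, with $\epsilon_n\rightarrow 0$. The crux is single-letterization: following Bergmans' argument, I would identify the auxiliary $U_i \triangleq (M_2,Y_2^{i-1})$ (or an equivalent choice) and use the chain rule together with the degradedness $X_i\rightarrow Y_{1,i}\rightarrow Y_{2,i}$ to telescope the multi-letter mutual informations into the single-letter forms $\mathsf{I}(U;Y_2)$ and $\mathsf{I}(X;Y_1|U)$. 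A standard time-sharing variable then collapses the per-index expressions into the stated region, and since each $X_i$ satisfies the same support and peak constraints while the average constraint is preserved in expectation over the index, the optimizing $p_{U,X}$ inherits~\eqref{eq: positive support},~\eqref{eq:peak cons}, or/and~\eqref{eq:ave cons}.

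I expect the main obstacle to lie in the converse rather than in the achievability: namely, in verifying that the single-letterized auxiliary $U$ and input $X$ genuinely satisfy the intensity constraints, and in justifying the entropy manipulations over the continuous, real-valued alphabet, where the Markov/degradedness structure of~\eqref{eq:channelmodel} must be used carefully to guarantee that $\mathsf{I}(X;Y_1|U)$ and $\mathsf{I}(U;Y_2)$ are the correct single-letter limits. Once the auxiliary is correctly identified and the constraints are tracked through the averaging over the time index, the remainder reduces to routine chain-rule bookkeeping.
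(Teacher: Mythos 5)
Your proposal is correct and takes essentially the same route as the paper, which does not prove Lemma~\ref{lemma: twouser} from scratch but invokes the classical degraded broadcast channel capacity theorem \cite[Theorem 15.6.2]{Cover2006}, whose standard proof is exactly your outline: superposition coding for achievability and a Fano/Bergmans-style converse with a single-letterizing auxiliary. Your auxiliary choice $U_i=(M_2,Y_2^{i-1})$ does work here (even though the textbook hint is $U_i=(M_2,Y_1^{i-1})$), because physical degradedness gives the Markov chain $Y_2^{i-1}\rightarrow (M_2,Y_1^{i-1})\rightarrow Y_{1,i}$ and hence $\mathsf{h}(Y_{1,i}|M_2,Y_1^{i-1})=\mathsf{h}(Y_{1,i}|M_2,Y_1^{i-1},Y_2^{i-1})\leq \mathsf{h}(Y_{1,i}|M_2,Y_2^{i-1})$, which is precisely the inequality your converse needs.
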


Above notations and assumptions can be directly extended to the $K$-user OI-BC, i.e.,
{\setlength\abovedisplayskip{4.85pt} 
\setlength\belowdisplayskip{4.85pt}
\begin{IEEEeqnarray}{rCl}
	Y_k = X + Z_k, \quad k\in[K], \label{K-user channel model}
\end{IEEEeqnarray}}where $X$ is a codeword of independent messages $M_1,M_2\cdots,M_K$, and satisfies \eqref{eq: positive support},~\eqref{eq:peak cons}, or/and~\eqref{eq:ave cons}; and $Z_k\sim\mathcal{N}(0,\sigma_k^2)$, $k\in[K]$. Without loss of generality, we also assume $\sigma_1 \leq \sigma_2 \leq \ldots \leq \sigma_K$. It should be noted that the above $K$-user OI-BC is also a physically degraded channel, and we have
{\setlength\abovedisplayskip{4.85pt} 
\setlength\belowdisplayskip{4.85pt}
\begin{IEEEeqnarray}{rCl}
	Y_{k+1} = Y_k + \widetilde{Z}_{k+1}, \quad k \in [K-1],
\end{IEEEeqnarray}}where $\widetilde{Z}_{k+1}\sim\mathcal{N}(0,\sigma_{k+1}^2-\sigma_{k}^2)$. The capacity region of a $K$-user physically degraded channel is characterized by~\cite[Chapter 5.7]{Gamal2011}, \cite[Theorem 3]{Nair2011} and summarized in the following lemma:
\begin{lemma}\label{lemma: Kuser}
	The capacity region of the $K$-user OI-BC in \eqref{K-user channel model}, where $X\rightarrow Y_1 \rightarrow Y_2 \rightarrow \ldots \rightarrow Y_K$ forms a Markov chain, is the set of rate tuple $(R_1,R_2,\ldots,R_K)$ satisfying
	{\setlength\abovedisplayskip{4.85pt} 
	\setlength\belowdisplayskip{4.85pt}
	\begin{IEEEeqnarray}{rCl}
		R_k &\leq& \mathsf{I}(U_k;Y_k|U_{k+1}), \quad k \in [K],
	\end{IEEEeqnarray}}for some $p_{U_{K-1}|U_K}(u_{K-1}|u_{K})p_{U_{K-2}|U_{K-1}}(u_{K-2}|u_{K-1})\cdots p_{U_1|U_2}(u_1|u_2)$ subject to~\eqref{eq: positive support},~\eqref{eq:peak cons}, or/and~\eqref{eq:ave cons}, where $U_1=X$, $U_{K+1}=\emptyset$, .
\end{lemma}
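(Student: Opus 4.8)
The plan is to invoke the general characterization of the capacity region of physically degraded broadcast channels and to verify that the $K$-user OI-BC in~\eqref{K-user channel model} falls within its scope. The excerpt already establishes that $X \to Y_1 \to Y_2 \to \cdots \to Y_K$ is a Markov chain, so the channel is physically degraded with user $1$ the strongest and user $K$ the weakest. The statement then follows from the standard superposition coding theorem of~\cite[Chapter 5.7]{Gamal2011} and~\cite[Theorem 3]{Nair2011}; the work is to adapt its achievability and converse to the present setting with a continuous input alphabet and the intensity constraints~\eqref{eq: positive support},~\eqref{eq:peak cons}, or/and~\eqref{eq:ave cons}.

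For achievability I would use $K$-level superposition coding. Fix a joint law that factors as $p_{U_K}\, p_{U_{K-1}|U_K}\cdots p_{U_2|U_3}\, p_{X|U_2}$ with $U_1 = X$ obeying the required input constraints. Generate the coarsest cloud centers $U_K^n$, then for each superimpose $U_{K-1}^n$, and continue refining down to $U_1^n = X^n$, which is the transmitted codeword. Because the channel is degraded, user $k$ has a statistically better observation than users $k+1,\ldots,K$, so it can reliably decode the coarser layers $(U_K,\ldots,U_{k+1})$ and then, conditioned on them, its own layer $U_k$. A standard joint-typicality analysis shows that layer $k$ is decoded at user $k$ whenever $R_k < \mathsf{I}(U_k; Y_k \mid U_{k+1})$, which yields the inner bound.

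For the converse I would combine Fano's inequality with the degradedness ordering. Using the usual single-letterization — e.g. $U_{k,i} = (M_k,\ldots,M_K, Y_1^{i-1})$ — together with the Csisz\'ar sum identity and the Markov structure, each rate is bounded as $R_k \leq \mathsf{I}(U_k; Y_k \mid U_{k+1})$ for an auxiliary chain $U_K \to \cdots \to U_2 \to U_1 = X$, where $U_{K+1}=\emptyset$. Since the constraints~\eqref{eq: positive support},~\eqref{eq:peak cons}, or/and~\eqref{eq:ave cons} are imposed only on $X=U_1$ and are preserved under time-sharing by their linearity and convexity, the induced marginal of $X$ inherits them, and the outer bound matches the inner bound.

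The main obstacle is ensuring rigor at the continuous-alphabet level: the cited theorems are phrased for discrete memoryless channels, so I would need to justify the extension to the Gaussian OI-BC with an input alphabet that is unbounded under the average-only constraint and with an expected-cost constraint. This is handled by the standard discretization and limiting arguments, using the concavity of the relevant information quantities in the input distribution to control the auxiliary cardinalities and to verify that the cost constraint is respected in the limit.
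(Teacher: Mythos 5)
Your proposal is correct and matches the paper's approach: the paper offers no proof of this lemma at all, simply citing the standard superposition-coding characterization of physically degraded broadcast channels in \cite[Chapter 5.7]{Gamal2011} and \cite[Theorem 3]{Nair2011}, which are exactly the results you invoke and whose standard achievability (layered superposition coding) and converse (Fano plus degradedness) arguments you sketch. Your added remarks on the continuous-alphabet and cost-constraint extension go beyond what the paper records but are consistent with the cited results.
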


\section{Capacity Region Characterization of Two-User OI-BCs}\label{sec: two-user capacity region}
In this section, we characterize the capacity region of two-user OI-BCs with three different input constraints. In each case of input constraints, we first present some preliminaries about the decomposition of the random variable and the existing single-user channel capacity, then derive the inner and outer bounds on the capacity region. Finally, the high-SNR capacity region is characterized based on these bounds.

\subsection{Peak-Intensity Constrained OI-BC}
\subsubsection{Preliminary}
Considering the peak-intensity constraint in \eqref{eq:peak cons}, the decomposition of a uniform random variable and the existing single-user channel capacity is introduced in the following.

\begin{lemma}[\cite{Li2022}]\label{lemma: decom of uniform}
	Given any integer $N\in\mathbb{N}^{+}$, a uniform random variable $S$, i.e.,
	{\setlength\abovedisplayskip{4.85pt} 
	\setlength\belowdisplayskip{4.85pt}
	\begin{IEEEeqnarray}{rCl}
		p_{S}(s) = \frac{1}{\amp},\quad s\in[0,\amp], \label{eq: uniform distribution}
	\end{IEEEeqnarray}}can be decomposed as a sum of two independent random variables $S_1$ and $S_2$, i.e.,
	{\setlength\abovedisplayskip{4.85pt} 
	\setlength\belowdisplayskip{4.85pt}
	\begin{IEEEeqnarray}{rCl}
		p_{S_1}(s_1) &=& \frac{N}{\amp},\quad s_1 \in \left[ 0,\frac{\amp}{N} \right],\\
		p_{S_2}(s_2) &=& \frac{1}{N} \sum_{n=0}^{N-1} \delta \left(s_2-\frac{n}{N}\times\amp \right).
	\end{IEEEeqnarray}}	
\end{lemma}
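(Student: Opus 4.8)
The plan is to verify the asserted decomposition directly: I would show that when $S_1$ and $S_2$ are drawn independently with the stated densities, the law of $S_1+S_2$ coincides with the uniform law on $[0,\amp]$. Since the paper's notation already supplies the characteristic function $\phi(\cdot)$ and $j=\sqrt{-1}$, I would argue through characteristic functions, which convert the independent sum into a product and expose the cancellation cleanly.

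First I would record the three characteristic functions. For the target uniform variable $S$ on $[0,\amp]$,
\[
\phi_{S}(t)=\frac{1}{\amp}\int_{0}^{\amp} e^{jts}\,ds=\frac{e^{jt\amp}-1}{jt\amp}.
\]
For the continuous part $S_1$, uniform on $[0,\amp/N]$,
\[
\phi_{S_1}(t)=\frac{N\bigl(e^{jt\amp/N}-1\bigr)}{jt\amp},
\]
and for the discrete part $S_2$, which places mass $1/N$ on each grid point $n\amp/N$ for $n=0,\dots,N-1$, summing the finite geometric series gives
\[
\phi_{S_2}(t)=\frac{1}{N}\sum_{n=0}^{N-1}e^{jtn\amp/N}=\frac{1}{N}\cdot\frac{e^{jt\amp}-1}{e^{jt\amp/N}-1}.
\]

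Next, using independence, the characteristic function of the sum factors as $\phi_{S_1}(t)\,\phi_{S_2}(t)$, and I would multiply the last two displays. The factor $e^{jt\amp/N}-1$ cancels, the two copies of $N$ cancel, and the product collapses to $(e^{jt\amp}-1)/(jt\amp)=\phi_{S}(t)$. Since a characteristic function determines its distribution uniquely, this identity yields $S=S_1+S_2$ in distribution, which is the claim.

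The one point that needs care — and the main (if minor) obstacle — is the set of removable singularities at $t=0$ and at $t\amp/N\in 2\pi\mathbb{Z}$, where the denominator $e^{jt\amp/N}-1$ in $\phi_{S_2}$ vanishes and the closed-form geometric sum is not literally defined. At these points the finite sum $\tfrac{1}{N}\sum_{n}e^{jtn\amp/N}$ is still well defined, and the product identity extends there by continuity of all three characteristic functions. Alternatively, I would offer a fully elementary cross-check by convolution: since $p_{S_1+S_2}(s)=\sum_{n=0}^{N-1}\tfrac{1}{N}\,p_{S_1}\!\bigl(s-\tfrac{n}{N}\amp\bigr)$ and each shifted box $p_{S_1}(\cdot-\tfrac{n}{N}\amp)$ equals $N/\amp$ exactly on $[\tfrac{n}{N}\amp,\tfrac{n+1}{N}\amp]$, the $N$ boxes tile $[0,\amp]$ and the weighted sum collapses to the constant $1/\amp$, reproducing $p_S$. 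I expect the characteristic-function route to give the shorter write-up, with the tiling argument serving as an equally valid elementary verification.
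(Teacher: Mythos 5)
Your proof is correct: the characteristic-function computation (geometric-series closed form for $\phi_{S_2}$, cancellation of the factor $e^{jt\amp/N}-1$, continuity extension at the removable singularities $t\amp/N\in 2\pi\mathbb{Z}$) is sound, and the convolution tiling of $[0,\amp]$ by the $N$ shifted boxes is an equally valid elementary check. The paper itself gives no proof of this lemma — it is imported verbatim from the cited reference \cite{Li2022} — and your characteristic-function route is essentially the standard argument used there, so there is nothing to flag.
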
For convenience, the distribution in \eqref{eq: uniform distribution} is denoted by $\textsf{U}[0,\amp]$.

\begin{lemma}[\cite{McKellips2004,Lapidoth2009}]\label{lemma: 1-user capacity only peak}
	The channel capacity $\mathsf{C}(\amp,\sigma)$ of a peak-intensity constrained optical intensity channel, is upper bounded by
	{\setlength\abovedisplayskip{4.85pt} 
	\setlength\belowdisplayskip{4.85pt}
	\begin{IEEEeqnarray}{rCl}
	\mathsf{C}_{\textnormal{ub}}(\amp,\sigma) = \min \left\{\frac{1}{2}\log \left(1+\frac{\amp^2}{4\sigma^2} \right),\,\log \left(1+ \frac{\amp}{\sqrt{2\pi e}\sigma}\right) \right\}, \label{eq: 1-user ub capacity only peak}
	\end{IEEEeqnarray}}At high SNR,\footnote{It should be noted that $\mathsf{C}_{\textnormal{ub}}(\amp,\sigma)$ can be further simplified to $\mathsf{C}_{\textnormal{ub}}(\amp,\sigma)\doteq \frac{1}{2}\log \left(\frac{\amp^2}{2\pi e \sigma^2} \right)$, but we keep the current form in \eqref{eq: only peak high-SNR capacity} to simplify the subsequent calculations.}
	{\setlength\abovedisplayskip{4.85pt} 
	\setlength\belowdisplayskip{4.85pt}
	\begin{IEEEeqnarray}{rCl}
		\mathsf{C}_{\textnormal{ub}}(\amp,\sigma) \doteq \frac{1}{2}\log \left(1+\frac{\amp^2}{2\pi e \sigma^2} \right). \label{eq: only peak high-SNR capacity}
	\end{IEEEeqnarray}}
\end{lemma}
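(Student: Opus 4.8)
The plan is to work from $\mathsf{C}(\amp,\sigma)=\max_{p_X}\mathsf{I}(X;Y)$, where the maximum is over all input laws supported on $[0,\amp]$, and to establish the two terms inside the minimum in~\eqref{eq: 1-user ub capacity only peak} separately; the claimed upper bound then follows by retaining whichever term is smaller. For both terms I would start from the decomposition $\mathsf{I}(X;Y)=\mathsf{h}(Y)-\mathsf{h}(Y\mid X)=\mathsf{h}(Y)-\frac12\log(2\pi e\sigma^2)$, using $\mathsf{h}(Y\mid X)=\mathsf{h}(Z)$, so that only the output entropy $\mathsf{h}(Y)$ must be controlled.

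For the first term I would invoke the Gaussian maximum-entropy principle $\mathsf{h}(Y)\le\frac12\log\!\big(2\pi e\,\mathsf{Var}(Y)\big)$ together with $\mathsf{Var}(Y)=\mathsf{Var}(X)+\sigma^2$. Since $X$ is supported on an interval of length $\amp$, Popoviciu's inequality gives $\mathsf{Var}(X)\le\amp^2/4$, with equality for the equiprobable two-point input on $\{0,\amp\}$. Substituting yields $\mathsf{I}(X;Y)\le\frac12\log\!\big(1+\amp^2/(4\sigma^2)\big)$, which is routine.

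The second term is the McKellips-type bound and is where the main difficulty lies. Here I would not bound $\mathsf{h}(Y)$ through its variance, but instead use the duality upper bound: for any fixed output density $q_Y$, $\mathsf{I}(X;Y)\le\mathbb{E}_{p_X}\!\big[D\big(p_{Y\mid X}(\cdot\mid X)\,\|\,q_Y\big)\big]$. The key is to choose $q_Y$ as a box-with-Gaussian-tails density: flat on $[0,\amp]$ and decaying like $\exp\!\big(\tfrac12-y^2/(2\sigma^2)\big)$ for $y<0$ and like $\exp\!\big(\tfrac12-(y-\amp)^2/(2\sigma^2)\big)$ for $y>\amp$, normalized by $\amp+\sqrt{2\pi e}\,\sigma$; the factor $e^{1/2}$ on the tails is precisely what makes the normalizing constant come out to $\amp+\sqrt{2\pi e}\,\sigma$. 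With this choice the divergence splits as $D\big(p_{Y\mid X=x}\,\|\,q_Y\big)=-\mathsf{h}(Z)+\log(\amp+\sqrt{2\pi e}\,\sigma)+G(x)$, where $G(x)=\mathbb{E}_{Y\mid X=x}[-\log(\text{tail shape of }q_Y)]$. The crux is to prove $G(x)\le0$ uniformly for $x\in[0,\amp]$; this reduces to evaluating two truncated Gaussian integrals and verifying that the resulting function is non-positive (it vanishes in the bulk and is strictly negative near the endpoints). Note that tightness is not needed, only the inequality. Granting $G\le0$, one gets $\mathsf{I}(X;Y)\le\log(\amp+\sqrt{2\pi e}\,\sigma)-\frac12\log(2\pi e\sigma^2)=\log\!\big(1+\amp/(\sqrt{2\pi e}\,\sigma)\big)$, which completes~\eqref{eq: 1-user ub capacity only peak}.

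Finally, for the high-SNR statement~\eqref{eq: only peak high-SNR capacity} I would compare the two terms as $\amp/\sigma\to\infty$. Both behave like $\log(\amp/\sigma)$ minus a constant: the first tends to $\log(\amp/\sigma)-\log2$ and the second to $\log(\amp/\sigma)-\tfrac12\log(2\pi e)$. Since $\tfrac12\log(2\pi e)>\log2$, the McKellips term is the smaller one at high SNR, so $\mathsf{C}_{\textnormal{ub}}(\amp,\sigma)\doteq\log(1+\amp/(\sqrt{2\pi e}\,\sigma))$. A direct expansion then shows that $\log(1+\amp/(\sqrt{2\pi e}\,\sigma))$ and $\tfrac12\log(1+\amp^2/(2\pi e\sigma^2))$ both equal $\log(\amp/\sigma)-\tfrac12\log(2\pi e)+o(1)$, so their difference tends to $0$; by the definition of $\doteq$ this yields~\eqref{eq: only peak high-SNR capacity}.
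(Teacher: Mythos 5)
The paper offers no proof of this lemma at all---it is imported by citation---and your reconstruction follows exactly the arguments of the cited sources \cite{McKellips2004,Lapidoth2009}: the Gaussian maximum-entropy/Popoviciu variance bound for the first term, the duality bound with a uniform-plus-Gaussian-tails output law (where the $e^{1/2}$ tail factor is precisely what yields the normalizer $\amp+\sqrt{2\pi e}\,\sigma$) for the McKellips term, and a direct asymptotic comparison of the two terms for \eqref{eq: only peak high-SNR capacity}. Your one deferred step, $G(x)\le 0$, is indeed the crux and is true: writing $G(x)=\tfrac12\bigl[g(x/\sigma)+g((\amp-x)/\sigma)\bigr]$ with $g(a)=\int_a^\infty\bigl((w-a)^2-1\bigr)\varphi(w)\,\mathrm{d}w$ (here $\varphi$ is the standard normal density), one checks $g(0)=0$, $g(a)\to 0$ as $a\to\infty$, and $g'(a)=2a\,\mathcal{Q}(a)-\varphi(a)$ is negative for small $a$ and positive for $a\ge 1$ with a single sign change, so $g\le 0$ everywhere and the bound $\log\bigl(1+\amp/(\sqrt{2\pi e}\,\sigma)\bigr)$ follows as you claimed.
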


\subsubsection{Bounds on Capacity Region}
The inner bound on the capacity region is derived based on the SC scheme.
\begin{theorem}[\textbf{Inner Bound}]\label{theorem: inner bound only peak}
	When the input is only subject to the peak-intensity constraint in \eqref{eq:peak cons}, the rate pairs in the set $\mathsf{Conv}\{ \cup_{N\in\mathbb{N}^{+}} \left( R_{1}^{\textnormal{in}}(N),R_{2}^{\textnormal{in}}(N) \right) \}$ are all achievable for a two-user OI-BC, where
	{\setlength\abovedisplayskip{4.85pt} 
	\setlength\belowdisplayskip{4.85pt}
	\begin{IEEEeqnarray}{rCl}
		\subnumberinglabel{eq: only peak IB}
		R_{1}^{\textnormal{in}}(N) &=& \frac{1}{2} \log \left( 1+\frac{\amp^2}{2 \pi e N^2\sigma_1^2} \right),\label{eq: R1 Lb only peak}\\
		R_{2}^{\textnormal{in}}(N) &=& \frac{1}{2} \log \left( 1+\frac{\amp^2}{2 \pi e \sigma_2^2} \right) - \mathsf{C}_{\textnormal{ub}} \left(\frac{\amp}{N}, \sigma_2\right). \label{eq: R2 Lb only peak}
	\end{IEEEeqnarray}}
\end{theorem}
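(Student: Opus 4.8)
The plan is to instantiate the superposition coding characterization of Lemma~\ref{lemma: twouser} with a single, carefully chosen joint distribution $p_{U,X}$ built from the uniform decomposition of Lemma~\ref{lemma: decom of uniform}; the convex hull then follows by time-sharing over $N$. Concretely, I would fix $N\in\mathbb{N}^{+}$, let $S\sim\textsf{U}[0,\amp]$, and write $S=S_1+S_2$ with $S_1$ and $S_2$ the independent components of Lemma~\ref{lemma: decom of uniform}. I would set $X=S$ (so the peak constraint $X\leq\amp$ holds by construction) and identify the auxiliary variable of the degraded channel as the discrete part, $U=S_2$. Since $\sigma_1\leq\sigma_2$, this assigns the continuous refinement $S_1\sim\textsf{U}[0,\amp/N]$ to the stronger user~1 and the $N$-ary cloud center $S_2$ to the weaker user~2, which is the natural superposition structure.

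For user~1, conditioning on $U=S_2$ freezes the cloud center, so that $\mathsf{I}(X;Y_1\mid U)=\mathsf{I}(S_1;S_1+Z_1)$ with $S_1$ independent of $Z_1\sim\mathcal{N}(0,\sigma_1^2)$. Writing this as $\mathsf{h}(S_1+Z_1)-\mathsf{h}(Z_1)$ and applying the entropy power inequality to the first term, together with $\mathsf{h}(S_1)=\log(\amp/N)$ and $\mathsf{h}(Z_1)=\tfrac12\log(2\pi e\sigma_1^2)$, yields $\mathsf{I}(X;Y_1\mid U)\geq\tfrac12\log\!\left(1+\frac{\amp^2}{2\pi e N^2\sigma_1^2}\right)=R_{1}^{\textnormal{in}}(N)$, which is exactly~\eqref{eq: R1 Lb only peak}. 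Here $R_1^{\textnormal{in}}(N)$ is the EPI lower bound on the true mutual information, and achievability only requires a lower bound, so this suffices.

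For user~2 I would expand $\mathsf{I}(U;Y_2)=\mathsf{h}(S_1+S_2+Z_2)-\mathsf{h}(S_1+Z_2)$, using independence of $S_1,S_2,Z_2$ in the conditional term. The key structural observation is that $S_1+S_2=S\sim\textsf{U}[0,\amp]$ by Lemma~\ref{lemma: decom of uniform}, so the first term is the output entropy of a \emph{full-amplitude} uniform input through the $\sigma_2$-channel, lower bounded by the entropy power inequality as $\mathsf{h}(S+Z_2)\geq\tfrac12\log(\amp^2+2\pi e\sigma_2^2)$. The second term, $\mathsf{h}(S_1+Z_2)$, is the output entropy of a peak-constrained single-user channel of amplitude $\amp/N$ and noise $\sigma_2$; hence $\mathsf{h}(S_1+Z_2)=\mathsf{I}(S_1;S_1+Z_2)+\mathsf{h}(Z_2)\leq\mathsf{C}_{\textnormal{ub}}(\amp/N,\sigma_2)+\tfrac12\log(2\pi e\sigma_2^2)$ by Lemma~\ref{lemma: 1-user capacity only peak}. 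Subtracting gives $\mathsf{I}(U;Y_2)\geq\tfrac12\log\!\left(1+\frac{\amp^2}{2\pi e\sigma_2^2}\right)-\mathsf{C}_{\textnormal{ub}}(\amp/N,\sigma_2)=R_{2}^{\textnormal{in}}(N)$, matching~\eqref{eq: R2 Lb only peak}.

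Lemma~\ref{lemma: twouser} then certifies that $(R_{1}^{\textnormal{in}}(N),R_{2}^{\textnormal{in}}(N))$ is achievable for each $N$, and achievability of $\mathsf{Conv}\{\cup_{N\in\mathbb{N}^{+}}(R_{1}^{\textnormal{in}}(N),R_{2}^{\textnormal{in}}(N))\}$ follows by time-sharing. The main obstacle I anticipate is the two-sided bounding inside $\mathsf{I}(U;Y_2)$: the first term must be lower bounded (EPI) while the second must be upper bounded (the single-user converse $\mathsf{C}_{\textnormal{ub}}$), and it is precisely the decomposition identity $S_1+S_2=S$ that lets the first term reproduce the full-amplitude rate rather than a reduced one. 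Verifying that \emph{both} user rates are simultaneously delivered by the single distribution $p_{U,X}$, rather than requiring separate input distributions, is the crux that the uniform decomposition of Lemma~\ref{lemma: decom of uniform} is designed to resolve.
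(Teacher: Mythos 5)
Your proposal is correct and follows essentially the same route as the paper's proof: the same uniform decomposition of Lemma~\ref{lemma: decom of uniform} with $X=S_1+S_2$ and $U=S_2$ in Lemma~\ref{lemma: twouser}, the same EPI lower bounds for both $\mathsf{h}(S_1+Z_1)$ and $\mathsf{h}(S+Z_2)$, the same upper bound $\mathsf{h}(S_1+Z_2)\leq \mathsf{C}_{\textnormal{ub}}(\amp/N,\sigma_2)+\mathsf{h}(Z_2)$ via the single-user converse, and time-sharing for the convex hull. Your final remarks merely make explicit the two-sided bounding that the paper states in passing, so nothing is missing.
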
 
\begin{proof}
	We first encode the messages $M_1$ and $M_2$ independently into signals $X_1$ and $X_2$, where $X_1$ follows $\textsf{U}[0,\frac{\amp}{N}]$ and $X_2$ follows
	{\setlength\abovedisplayskip{4.85pt} 
	\setlength\belowdisplayskip{4.85pt}
	\begin{IEEEeqnarray}{rCl}
		p_{X_2}(x_2) &=& \frac{1}{N} \sum_{n=0}^{N-1} \delta \left(x_2-\frac{n}{N}\times\amp \right).
	\end{IEEEeqnarray}}Then we adopt an SC scheme such that $X=X_1+X_2$. By applying Lemma~\ref{lemma: decom of uniform}, we obtain that $X$ follows $\textsf{U}[0,\amp]$, which satisfies the peak-intensity constraint in \eqref{eq:peak cons}. In Lemma \ref{lemma: twouser}, we instantiate $U$ into $U=X_2$. Therefore, we can compute the achievable rates $\mathsf{I}(X;Y_1|X_2)$ and $\mathsf{I}(X_2;Y_2)$ to be as the inner bounds on $R_1$ and $R_2$, respectively. On the one hand,
	{\setlength\abovedisplayskip{4.85pt} 
	\setlength\belowdisplayskip{4.85pt}
	\begin{IEEEeqnarray}{rCl}
		\mathsf{I}(X;Y_1|X_2) 
		&=& \mathsf{I}(X_1;X_1+Z_1) \label{eq: 21}\\
		&=& \mathsf{h}(X_1+Z_1) - \mathsf{h}(Z_1)\\		
		&\geq& \frac{1}{2} \log \left( 1 + \frac{ e^{2\mathsf{h}(X_1)} }{ e^{2\mathsf{h}(Z_1)} }\right)\label{eq: EPI}\\
		&=& \frac{1}{2} \log \left( 1+\frac{\amp^2}{2 \pi e N^2\sigma_1^2} \right). \label{eq: 65}
	\end{IEEEeqnarray}}where \eqref{eq: EPI} holds by the EPI. On the other hand,
	{\setlength\abovedisplayskip{4.85pt} 
	\setlength\belowdisplayskip{4.85pt}
	\begin{IEEEeqnarray}{rCl}
		\mathsf{I}(X_2;Y_2) &=& \mathsf{h}(Y_2) - \mathsf{h}(Y_2|X_2)\\
		&=& \mathsf{h}(X+Z_2) - \mathsf{h}(X_1+Z_2)\\
		&\geq& \frac{1}{2} \log \left( e^{2\mathsf{h}(X)} + e^{2\mathsf{h}(Z_2)} \right) - \mathsf{h}(X_1+Z_2)\label{eq2: EPI}\\
		&\geq& \frac{1}{2} \log \left( 1+\frac{\amp^2}{2 \pi e \sigma_2^2} \right) - \mathsf{C}_{\textnormal{ub}} \left(\frac{\amp}{N}, \sigma_2\right). \label{eq: 69}
	\end{IEEEeqnarray}}where \eqref{eq2: EPI} holds by the EPI; and \eqref{eq: 69} holds since $X_1$ is limited in $[0,\frac{\amp}{N}]$ and independent of the Gaussian noise $Z_2$, thus we can bound $\mathsf{h}(X_1+Z_2)$ by $\mathsf{h}(X_1+Z_2) \leq \mathsf{C}_{\textnormal{ub}} \left( \frac{\amp}{N},\sigma_2 \right) + \mathsf{h}(Z_2)$.

Combining \eqref{eq: 65} and \eqref{eq: 69}, the rate pairs $\left( R_{1}^{\textnormal{in}}(N),R_{2}^{\textnormal{in}}(N) \right)$ in \eqref{eq: only peak IB} are achievable. Moreover, by adopting a time sharing strategy between these achievable rate pairs, the rate pairs in the set $\mathsf{Conv}\{ \cup_{N\in\mathbb{N}^{+}} \left( R_{1}^{\textnormal{in}}(N),R_{2}^{\textnormal{in}}(N) \right) \}$ are all achievable. Finally, the proof is completed.	
\end{proof}

The outer bound on the capacity region is derived based on the conditional EPI.
\begin{theorem}[\textbf{Outer Bound}]\label{theorem: outer bound only peak}
	When the input is only subject to the peak-intensity constraint in \eqref{eq:peak cons}, the capacity region is outer bounded by $\cup_{\rho\in[0,1]} \left( R_{1}^{\textnormal{out}}(\rho),R_{2}^{\textnormal{out}}(\rho) \right)$ for a two-user OI-BC, where
	{\setlength\abovedisplayskip{4.85pt} 
	\setlength\belowdisplayskip{4.85pt}
	\begin{IEEEeqnarray}{rCl}
		\subnumberinglabel{eq: only peak OB}
		R_{1}^{\textnormal{out}}(\rho) &=& \frac{1}{2} \log \left\{1+\frac{\sigma_2^2}{\sigma_1^2}\left(e^{2 \mathsf{C}_{\textnormal{ub}}\left(\rho \amp, \sigma_2\right)}-1\right)\right\},\label{eq: R1 Ub only peak}\\
		R_{2}^{\textnormal{out}}(\rho) &=& \mathsf{C}_{\textnormal{ub}}\left(\amp, \sigma_2\right)-\mathsf{C}_{\textnormal{ub}}\left(\rho \amp, \sigma_2\right).\label{eq: R2 Ub only peak}
	\end{IEEEeqnarray}}
\end{theorem}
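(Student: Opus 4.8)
The plan is to begin from the capacity region characterization of Lemma~\ref{lemma: twouser}, so that any achievable pair satisfies $R_1 \leq \mathsf{I}(X;Y_1|U)$ and $R_2 \leq \mathsf{I}(U;Y_2)$ for some auxiliary variable forming $U \to X \to Y_1 \to Y_2$. The central idea is to introduce the free parameter $\rho \in [0,1]$ not directly through the input distribution but through the conditional output entropy at the weaker user, by writing $\mathsf{h}(Y_2|U) = \frac{1}{2}\log(2\pi e \sigma_2^2) + \mathsf{C}_{\textnormal{ub}}(\rho\amp, \sigma_2)$. This is admissible because $\mathsf{h}(Y_2|U)$ necessarily lies between $\mathsf{h}(Y_2|U,X) = \mathsf{h}(Z_2) = \frac{1}{2}\log(2\pi e \sigma_2^2)$ and $\mathsf{h}(Y_2) \leq \mathsf{C}_{\textnormal{ub}}(\amp,\sigma_2) + \frac{1}{2}\log(2\pi e \sigma_2^2)$, while the map $\rho \mapsto \mathsf{C}_{\textnormal{ub}}(\rho\amp,\sigma_2)$ is continuous, vanishes at $\rho = 0$, and equals $\mathsf{C}_{\textnormal{ub}}(\amp,\sigma_2)$ at $\rho = 1$; the intermediate value theorem then guarantees that every admissible value of $\mathsf{h}(Y_2|U)$ is realized by some $\rho \in [0,1]$.

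With this device in place, I would bound $R_2$ by expanding $\mathsf{I}(U;Y_2) = \mathsf{h}(Y_2) - \mathsf{h}(Y_2|U)$, invoking the single-user bound of Lemma~\ref{lemma: 1-user capacity only peak} to get $\mathsf{h}(Y_2) = \mathsf{h}(X+Z_2) \leq \mathsf{C}_{\textnormal{ub}}(\amp,\sigma_2) + \frac{1}{2}\log(2\pi e \sigma_2^2)$, and substituting the parametrized $\mathsf{h}(Y_2|U)$. The Gaussian terms cancel and leave precisely $R_2^{\textnormal{out}}(\rho) = \mathsf{C}_{\textnormal{ub}}(\amp,\sigma_2) - \mathsf{C}_{\textnormal{ub}}(\rho\amp,\sigma_2)$, matching \eqref{eq: R2 Ub only peak}.

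For $R_1$, I would write $\mathsf{I}(X;Y_1|U) = \mathsf{h}(Y_1|U) - \mathsf{h}(Y_1|U,X) = \mathsf{h}(Y_1|U) - \frac{1}{2}\log(2\pi e \sigma_1^2)$ and then exploit the physical degradedness $Y_2 = Y_1 + \widetilde{Z}_2$ with $\widetilde{Z}_2 \sim \mathcal{N}(0,\sigma_2^2 - \sigma_1^2)$ independent of $(U,Y_1)$. The conditional EPI yields $e^{2\mathsf{h}(Y_2|U)} \geq e^{2\mathsf{h}(Y_1|U)} + 2\pi e(\sigma_2^2 - \sigma_1^2)$, and inserting the parametrized value of $\mathsf{h}(Y_2|U)$ bounds $e^{2\mathsf{h}(Y_1|U)}$ from above by $2\pi e\left[\sigma_2^2\left(e^{2\mathsf{C}_{\textnormal{ub}}(\rho\amp,\sigma_2)} - 1\right) + \sigma_1^2\right]$. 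Taking half the logarithm and subtracting $\frac{1}{2}\log(2\pi e \sigma_1^2)$ reproduces $R_1^{\textnormal{out}}(\rho)$ in \eqref{eq: R1 Ub only peak} after elementary simplification.

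The main obstacle is conceptual rather than computational: it is the realization that a single scalar $\rho$ can simultaneously govern both rate bounds. This works only because the conditional EPI converts the degradedness constraint on $\mathsf{h}(Y_2|U)$ into a matching upper bound on $\mathsf{h}(Y_1|U)$, so that $R_1$ and $R_2$ are both expressed through the one quantity $\mathsf{h}(Y_2|U)$. Care must be taken to verify the hypotheses of the conditional EPI, in particular that $\widetilde{Z}_2$ is independent of the conditioning variable $U$ --- which holds since the degradation noise is added at the channel and is independent of the message and input --- and to confirm that $\mathsf{C}_{\textnormal{ub}}$ remains a valid ceiling on $\mathsf{h}(Y_2) - \mathsf{h}(Z_2)$ under the peak constraint alone.
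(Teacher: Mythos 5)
Your proposal is correct and takes essentially the same route as the paper's own proof: both parametrize $\mathsf{h}(Y_2|U)$ as $\mathsf{C}_{\textnormal{ub}}(\rho\amp,\sigma_2)+\mathsf{h}(Z_2)$ via the sandwich bounds $\mathsf{h}(Z_2)\leq \mathsf{h}(Y_2|U)\leq \mathsf{C}_{\textnormal{ub}}(\amp,\sigma_2)+\mathsf{h}(Z_2)$ together with continuity and monotonicity of $\rho\mapsto\mathsf{C}_{\textnormal{ub}}(\rho\amp,\sigma_2)$, then bound $R_2$ by cancellation of the Gaussian entropy terms and $R_1$ by the conditional EPI applied to the degradation $Y_2=Y_1+\widetilde{Z}_2$. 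Your explicit invocation of the intermediate value theorem and your check that $\widetilde{Z}_2$ is independent of $(U,Y_1)$ merely make precise what the paper states informally, and your algebra reproduces \eqref{eq: R1 Ub only peak} and \eqref{eq: R2 Ub only peak} exactly.
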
 
\begin{proof}
We resort to Lemma \ref{lemma: twouser} to derive an upper bound on the achievable rate $R_2$ of user 2 first and then the achievable rate $R_1$ of user 1. When $U \rightarrow X \rightarrow Y_2$ forms a Markov chain, it follows that
{\setlength\abovedisplayskip{4.85pt} 
\setlength\belowdisplayskip{4.85pt}
\begin{IEEEeqnarray}{rCl}
	\mathsf{h}(Y_2|U) \geq \mathsf{h}(Y_2|X) = \mathsf{h}(Z_2). \label{eq: h(Y2|U) LB}
\end{IEEEeqnarray}}Furthermore, we have
{\setlength\abovedisplayskip{4.85pt} 
\setlength\belowdisplayskip{4.85pt}
\begin{IEEEeqnarray}{rCl}
	\mathsf{h}(Y_2|U) &\leq& \mathsf{h}(Y_2) \leq \mathsf{C}_{\textnormal{ub}}(\amp,\sigma_2) + \mathsf{h}(Z_2). \label{eq: h(Y2|U) UB}
\end{IEEEeqnarray}}Note that $\mathsf{C}_{\textnormal{ub}}(\amp,\sigma_2)$ is monotonically increasing with respect to $\amp$ and approaches zero when $\amp$ tends to $0$. Combined with \eqref{eq: h(Y2|U) LB} and \eqref{eq: h(Y2|U) UB}, there exists $\rho\in[0,1]$ such that
{\setlength\abovedisplayskip{4.85pt} 
\setlength\belowdisplayskip{4.85pt}
\begin{IEEEeqnarray}{rCl}
	\mathsf{h}(Y_2|U) = \mathsf{C}_{\textnormal{ub}}(\rho\amp,\sigma_2) + \mathsf{h}(Z_2). \label{eq: h(Y2|U) equality}
\end{IEEEeqnarray}}Applying Lemma \ref{lemma: twouser}, the achievable rate $R_2$ can be upper bounded by
{\setlength\abovedisplayskip{4.85pt} 
\setlength\belowdisplayskip{4.85pt}
\begin{IEEEeqnarray}{rCl}
	R_2 &\leq& \mathsf{I}(U;Y_2) \label{eq: 34}\\
	&=& \mathsf{h}(Y_2) - \mathsf{h}(Y_2|U) \label{eq: 35}\\
	&\leq& \mathsf{C}_{\textnormal{ub}}(\amp,\sigma_2)-\mathsf{C}_{\textnormal{ub}}(\rho\amp,\sigma_2), \label{eq: 36}
\end{IEEEeqnarray}}where \eqref{eq: 36} is obtained by substituting $\mathsf{h}(Y_2)\leq \mathsf{C}_{\textnormal{ub}}(\amp,\sigma_2) + \mathsf{h}(Z_2)$ and \eqref{eq: h(Y2|U) equality} into \eqref{eq: 35}. Moreover, the achievable rate $R_1$ can be upper bounded by
{\setlength\abovedisplayskip{4.85pt} 
\setlength\belowdisplayskip{4.85pt}
\begin{IEEEeqnarray}{rCl}
	R_1 &\leq& \mathsf{I}(X;Y_1|U) \\
	&=& \mathsf{h}(Y_1|U) - \mathsf{h}(Z_1)\\
	&\leq&  \frac{1}{2} \log \left( e^{2\mathsf{h}(Y_2|U)} - e^{2\mathsf{h}(\widetilde{Z}_2|U)} \right) - \mathsf{h}(Z_1) \label{eq: h(Y1|X)}\\
	&=& \frac{1}{2} \log \left\{1+\frac{\sigma_2^2}{\sigma_1^2}\left(e^{2 \mathsf{C}_{\textnormal{ub}}\left(\rho \amp, \sigma_2\right)}-1\right)\right\}, \label{eq: 40}
\end{IEEEeqnarray}}where \eqref{eq: h(Y1|X)} is obtained by noting that $Y_2 = Y_1 +\widetilde{Z}_2$ and we can utilize the conditional EPI to bound $\mathsf{h}(Y_1|U)$ such that $	\mathsf{h}(Y_1|U) \leq \frac{1}{2} \log \left( e^{2\mathsf{h}(Y_2|U)} - e^{2\mathsf{h}(\widetilde{Z}_2|U)} \right)$; and \eqref{eq: 40} is obtained by substituting \eqref{eq: h(Y2|U) equality} into \eqref{eq: h(Y1|X)}. Combining \eqref{eq: 36} and \eqref{eq: 40}, we complete the proof.
\end{proof}

The high-SNR capacity region is derived based on the inner bound in Theorem~\ref{theorem: inner bound only peak} and the outer bound in Theorem~\ref{theorem: outer bound only peak}. We summarize it as follows.
\begin{theorem}[\textbf{Asymptotic Capacity Region}]\label{theorem: high SNR capacity region only peak}
	When the input is only subject to the peak-intensity constraint in \eqref{eq:peak cons}, at high SNR, the capacity region of a two-user OI-BC asymptotically converges to the region where the rate pair $(R_1,R_2)$ satisfies
	{\setlength\abovedisplayskip{4.85pt} 
	\setlength\belowdisplayskip{4.85pt}
	\begin{IEEEeqnarray}{rCl}
		\subnumberinglabel{eq: only peak high-SNR capacity region}
		R_1 &\ \dot{\leq}\ & \frac{1}{2} \log \left(1+\frac{\rho^2 \amp^2}{2 \pi e \sigma_1^2}\right), \label{eq: high SNR R1 only peak}\\
		R_2 &\ \dot{\leq}\ & \frac{1}{2} \log \left(1+\frac{\left(1-\rho^2\right) \amp^2}{\rho^2 \amp^2+2 \pi e \sigma_2^2}\right), \label{eq: high SNR R2 only peak}
	\end{IEEEeqnarray}}with $\rho \in [0,1]$.
\end{theorem}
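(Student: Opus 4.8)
The plan is to sandwich the capacity region between the inner bound of Theorem~\ref{theorem: inner bound only peak} and the outer bound of Theorem~\ref{theorem: outer bound only peak}, and to show that both collapse to the region in~\eqref{eq: only peak high-SNR capacity region} as $\amp/\sigma_k\to\infty$. Since the capacity region always contains the convex hull of the inner-bound points and is contained in the outer-bound region, it suffices to prove that these two limiting regions coincide with the claimed one in the $\doteq$ sense. I would also record at the outset that the claimed region has exactly the form of a degraded scalar Gaussian BC region, with effective power $\amp^2/(2\pi e)$, noise variances $\sigma_1^2,\sigma_2^2$, and power split $\beta=\rho^2$; this region is already convex, so the union over $\rho\in[0,1]$ in~\eqref{eq: only peak OB} describes (the Pareto boundary of) a convex set and no extra time sharing is needed on the outer side.

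For the outer bound I would start from~\eqref{eq: only peak OB} and substitute the high-SNR form~\eqref{eq: only peak high-SNR capacity}, i.e. $\mathsf{C}_{\textnormal{ub}}(\rho\amp,\sigma_2)\doteq\frac{1}{2}\log(1+\frac{\rho^2\amp^2}{2\pi e\sigma_2^2})$ and likewise for $\mathsf{C}_{\textnormal{ub}}(\amp,\sigma_2)$. Plugging $e^{2\mathsf{C}_{\textnormal{ub}}(\rho\amp,\sigma_2)}\doteq 1+\frac{\rho^2\amp^2}{2\pi e\sigma_2^2}$ into~\eqref{eq: R1 Ub only peak} cancels the $\sigma_2^2$ factors and gives $\frac{1}{2}\log(1+\frac{\rho^2\amp^2}{2\pi e\sigma_1^2})$, namely~\eqref{eq: high SNR R1 only peak}; combining the two logarithms in~\eqref{eq: R2 Ub only peak} into a single logarithm of the ratio $\frac{2\pi e\sigma_2^2+\amp^2}{2\pi e\sigma_2^2+\rho^2\amp^2}$ reproduces~\eqref{eq: high SNR R2 only peak}. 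Hence the outer boundary converges to the claimed curve pointwise in $\rho$.

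For the inner bound I would identify the discrete parameter with the continuous one via $\rho=1/N$. Under this identification $R_1^{\textnormal{in}}(N)$ in~\eqref{eq: R1 Lb only peak} is already \emph{exactly} $\frac{1}{2}\log(1+\frac{\rho^2\amp^2}{2\pi e\sigma_1^2})$, and feeding the high-SNR form of $\mathsf{C}_{\textnormal{ub}}(\amp/N,\sigma_2)$ into~\eqref{eq: R2 Lb only peak} reproduces~\eqref{eq: high SNR R2 only peak}, so each achievable point sits asymptotically on the claimed boundary at $\rho=1/N$. The observation that makes the discrete-to-continuous passage lossless is that at high SNR the boundary is asymptotically the straight line $R_1+R_2\doteq\frac{1}{2}\log\frac{\amp^2}{2\pi e\sigma_1^2}$ whenever $\rho$ is bounded away from $0$ (the $\sigma_k^2$ terms drop out), so the time-sharing chords joining consecutive integer $N$ lie on this line and cost nothing; the only curvature survives in the narrow regime $\rho=O(\sigma_2/\amp)$ near the $R_2$-axis, where the values $\rho=1/N$ become dense with spacing $O(1/N^2)$ and the chord gap vanishes.

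The hard part will be precisely this convex-hull/discretization step, and two points need care. First, the approximation $\mathsf{C}_{\textnormal{ub}}(\amp/N,\sigma_2)\doteq\frac{1}{2}\log(1+\frac{\amp^2}{2\pi e N^2\sigma_2^2})$ is only valid when $\amp/N\gg\sigma_2$, i.e. for bounded $N$; for the large-$N$ (small-$\rho$) portion I would argue directly from the exact inner bound, using $R_1^{\textnormal{in}}(N)\to0$ and $R_2^{\textnormal{in}}(N)\to\frac{1}{2}\log(1+\frac{\amp^2}{2\pi e\sigma_2^2})$ as $N\to\infty$, so the corner of the claimed region on the $R_2$-axis is still attained in the limit. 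Second, I would use the convexity of the limiting region noted above to conclude that the convex hull of the (asymptotically on-boundary) inner points fills the entire region rather than only the curve. Combining the inner and outer convergences then sandwiches the asymptotic capacity region onto~\eqref{eq: only peak high-SNR capacity region}, completing the proof.
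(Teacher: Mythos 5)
Your proposal follows essentially the same route as the paper's proof: substitute the high-SNR form \eqref{eq: only peak high-SNR capacity} into Theorems~\ref{theorem: inner bound only peak} and~\ref{theorem: outer bound only peak}, identify the achievable points at $\rho=1/N$, observe that these points and their time-sharing chords lie asymptotically on the sum-rate line $R_1+R_2\doteq\frac{1}{2}\log\frac{\amp^2}{2\pi e\sigma_1^2}$, and conclude $R_{2}^{\textnormal{out}}-R_{2}^{\textnormal{in}}\doteq 0$ across the whole boundary. If anything, you are more careful than the paper itself, which applies the $\doteq$ approximations per fixed $N$ and then sweeps over all $N\in\mathbb{N}^{+}$ without separately treating the large-$N$ (small-$\rho$) corner where $\mathsf{C}_{\textnormal{ub}}(\amp/N,\sigma_2)$ is no longer in its high-SNR regime --- precisely the caveat you flag and patch via the exact corner-point limit $\bigl(0,\frac{1}{2}\log\bigl(1+\frac{\amp^2}{2\pi e\sigma_2^2}\bigr)\bigr)$.
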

\begin{proof}
	At high SNR, the single-user capacity result of peak-intensity constrained optical intensity channel is given in Lemma~\ref{lemma: 1-user capacity only peak}. Substituting it into Theorem~\ref{theorem: outer bound only peak}, we obtain that the capacity region at high SNR is outer bounded by $\cup_{\rho\in[0,1]} \left( R_{1}^{\textnormal{out}}(\rho),R_{2}^{\textnormal{out}}(\rho) \right)$, where
	{\setlength\abovedisplayskip{4.85pt} 
	\setlength\belowdisplayskip{4.85pt}
	\begin{IEEEeqnarray}{rCl}
		\subnumberinglabel{eq: only peak new OB}
		R_{1}^{\textnormal{out}}(\rho) & \doteq & \frac{1}{2} \log \left(1+\frac{\rho^2 \amp^2}{2 \pi e \sigma_1^2}\right),\label{eq: high SNR R1 only peak Ub only peak}\\
		R_{2}^{\textnormal{out}}(\rho) 
		&\doteq& \frac{1}{2} \log \left(1+\frac{\left(1-\rho^2\right) \amp^2}{\rho^2 \amp^2+2 \pi e \sigma_2^2}\right).\label{eq: high SNR R2 only peak Ub only peak}
	\end{IEEEeqnarray}}For convenience, we write the boundary of the above outer bound as 
	{\setlength\abovedisplayskip{4.85pt} 
	\setlength\belowdisplayskip{4.85pt}
\begin{IEEEeqnarray}{rCl}
		R_{2}^{\textnormal{out}} 
		&\doteq& \frac{1}{2} \log \left( 1+\frac{\amp^2}{2 \pi e \sigma_2^2} \right) - \frac{1}{2} \log \left( 1+\frac{\sigma_1^2}{\sigma_2^2}\left(e^{2 R_{1}}-1\right) \right), \ R_1\in\left[ 0,\frac{1}{2} \log \left( 1 +  \frac{\amp^2 }{2 \pi e \sigma_1^2} \right) \right]. \quad
	\end{IEEEeqnarray}}Next, we proceed to show that the inner bound in Theorem~\ref{theorem: inner bound only peak} is tight with \eqref{eq: only peak new OB} at high SNR. Substituting \eqref{eq: only peak high-SNR capacity} into \eqref{eq: only peak IB}, we obtain that the rate pairs in $\mathsf{Conv}\{ \cup_{N\in\mathbb{N}^{+}} \left( R_{1}^{\textnormal{in}}(N),R_{2}^{\textnormal{in}}(N) \right) \}$ are all achievable, where
	{\setlength\abovedisplayskip{4.85pt} 
	\setlength\belowdisplayskip{4.85pt}
	\begin{IEEEeqnarray}{rCl}
		R_{1}^{\textnormal{in}}(N) &\doteq& \frac{1}{2} \log \left( 1 +  \frac{\amp^2 }{2 \pi e N^2\sigma_1^2} \right),\\
		R_{2}^{\textnormal{in}}(N) &\doteq& \frac{1}{2} \log \left(\frac{\amp^2+2 \pi e \sigma_2^2}{\amp^2+2 \pi e N^2 \sigma_2^2} \times N^2\right) .
	\end{IEEEeqnarray}}To analyze the inner bound characterized by the above achievable rate pairs, we fix an integer $N$ and derive the convex hull of two adjacent rate pairs $\left(R_{1}^{\textnormal{in}}(N),R_{2}^{\textnormal{in}}(N)\right)$ and $(R_{1}^{\textnormal{in}}(N+1),R_{2}^{\textnormal{in}}(N+1))$, which is a straight line segment: 
	{\setlength\abovedisplayskip{4.85pt} 
	\setlength\belowdisplayskip{4.85pt}
	\begin{IEEEeqnarray}{rCl}
		R_{2}^{\textnormal{in}}
		&\doteq&\frac{ \log \left(\frac{\amp^2+ 2 \pi e N^2 \sigma_2^2}{\amp^2+ 2 \pi e (N+1)^2 \sigma_2^2} \times \frac{(N+1)^2}{N^2}\right)}{ \log \left(\frac{\amp^2+ 2 \pi e (N+1)^2 \sigma_1^2}{\amp^2+ 2 \pi e N^2 \sigma_1^2} \times \frac{N^2}{(N+1)^2}\right)}
		\left( R_{1} - \frac{1}{2} \log \left(\frac{\amp^2}{  2 \pi e N^2 \sigma_1^2} \right) \right) + \frac{1}{2} \log \left(\frac{\amp^2+2 \pi e \sigma_2^2}{\amp^2+N^2 2 \pi e \sigma_2^2} \times N^2\right)\nonumber\\
		&\doteq& - R_{1} + \frac{1}{2} \log \left(\frac{\amp^2}{ 2 \pi e  \sigma_1^2}\right), \ R_{1}\in[R_{1}^{\textnormal{in}}(N),R_{1}^{\textnormal{in}}(N+1)],\ \forall N\in\mathbb{N}^{+}.
	\end{IEEEeqnarray}}We compute the gap between $R_{2}^{\textnormal{out}}-R_{2}^{\textnormal{in}}$ such that
	{\setlength\abovedisplayskip{4.85pt} 
	\setlength\belowdisplayskip{4.85pt}
	\begin{IEEEeqnarray}{rCl}
		R_{2}^{\textnormal{out}}-R_{2}^{\textnormal{in}} &\doteq &
		\log \left( \frac{\sigma_1^2}{\sigma_2^2} \right)- \frac{1}{2} \log \left( 1+\frac{\sigma_1^2}{\sigma_2^2}\left(e^{2 R_{1}}-1\right) \right) +R_1 \label{eq: gap of R2}\\
		&\doteq &
		\log \left( \frac{\sigma_1^2}{\sigma_2^2} \right)- \frac{1}{2} \log \left( 1+\frac{\rho^2\amp^2}{2\pi e\sigma_2^2} \right) + \frac{1}{2} \log \left( 1+\frac{\rho^2\amp^2}{2\pi e\sigma_1^2} \right) \label{eq: subst R1}\\
		&\doteq& 0, \quad R_1\in[R_{1}^{\textnormal{in}}(N),R_{1}^{\textnormal{in}}(N+1)],\ \forall N\in\mathbb{N}^{+}. \label{eq: zero gap}
	\end{IEEEeqnarray}}where \eqref{eq: subst R1} is obtained by substituting \eqref{eq: high SNR R1 only peak Ub only peak} into \eqref{eq: gap of R2}. When we take all the possible integer $N$ in $\mathbb{N}^{+}$, it follows that
	{\setlength\abovedisplayskip{4.85pt} 
	\setlength\belowdisplayskip{4.85pt}
	\begin{IEEEeqnarray}{rCl}
		R_{2}^{\textnormal{out}}-R_{2}^{\textnormal{in}} &\doteq & 0, \quad \forall R_1\in\left[ 0,\frac{1}{2} \log \left( 1 +  \frac{\amp^2 }{2 \pi e \sigma_1^2} \right) \right], 
	\end{IEEEeqnarray}}which indicates that the outer bound in Theorem~\ref{theorem: outer bound only peak} and inner bound in Theorem~\ref{theorem: inner bound only peak}  are tight at high SNR. Hence, we complete the proof.	
\end{proof}

The derived bounds on the capacity region are shown in Fig.~\ref{fig: two-user only peak}, where we assume $\sigma_2 = 2\sigma_1$ and $\mathrm{ASNR}_k=\frac{\amp}{\sigma_k}$, $k\in\{1,2\}$. From the figure, it is straightforward to observe that the inner and outer bounds become tighter as SNR increases, which numerically validates the derivations.
\begin{figure}[!htbp]
	\centering
	\includegraphics[width=3.7in]{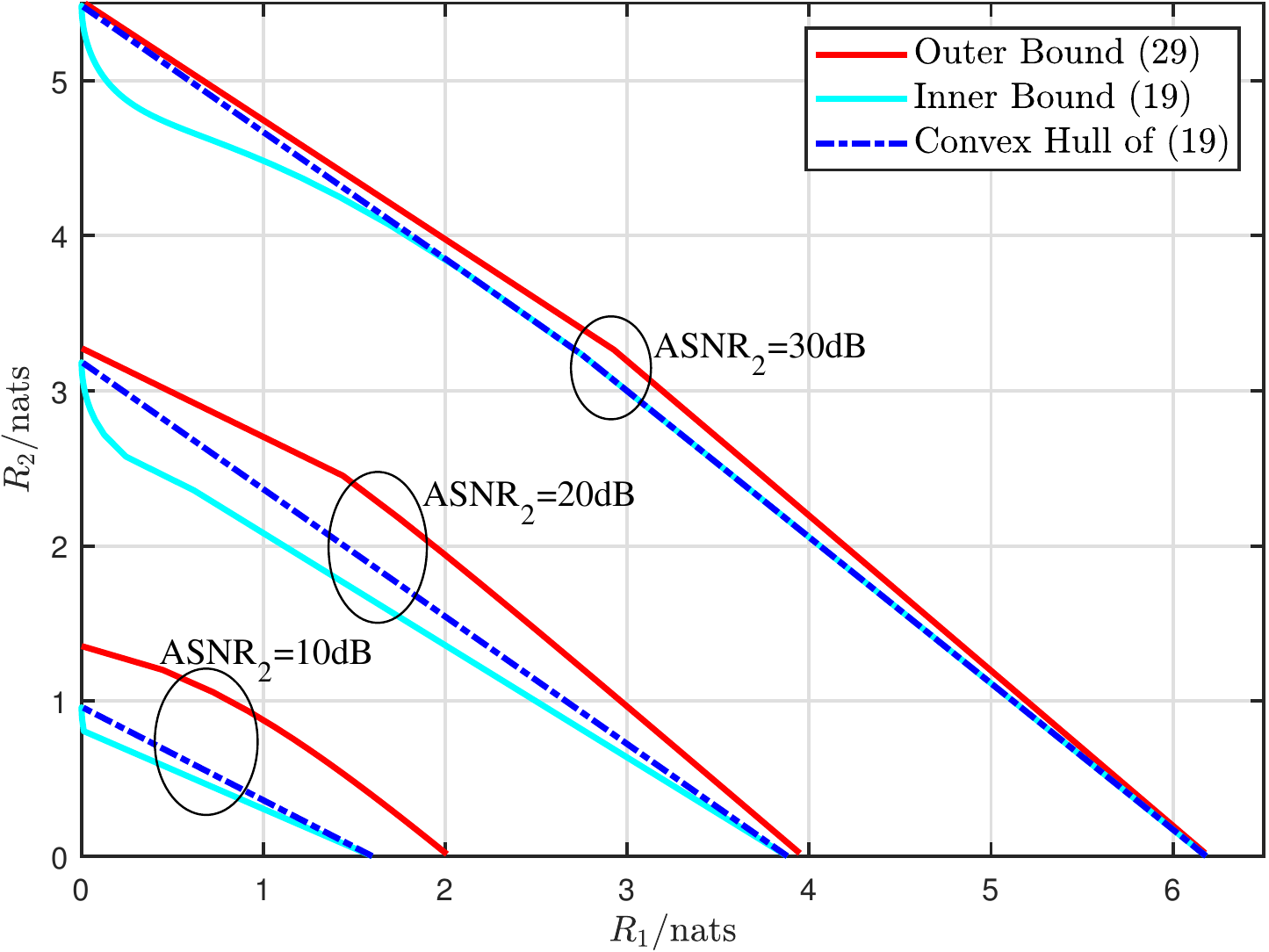}\\
	\caption{Bounds on capacity region of $2$-user OI-BC with peak-intensity constraint.}
	\label{fig: two-user only peak}
	\vspace{-0.7cm}
\end{figure}
\subsection{Average-Intensity Constrained OI-BC}
\subsubsection{Preliminary}
Considering the average-intensity constraint in \eqref{eq:ave cons}, the decomposition of an exponential random variable and the existing single-user channel capacity is introduced in the following.
\begin{lemma}[\cite{Li2022}]\label{lemma: decom of exponential}
	Given any $\EE>0$, $\EE_1>0$, and $\EE_1\leq \EE$, an exponential random variable $S$, i.e.,
	{\setlength\abovedisplayskip{4.85pt} 
	\setlength\belowdisplayskip{4.85pt}
	\begin{IEEEeqnarray}{rCl}
		p_{S}(s) = \frac{1}{\EE} e^{ -\frac{s}{\EE} },\quad s\in[0,+\infty),\label{eq: exp distribution}
	\end{IEEEeqnarray}}can be decomposed as a sum of two independent random variables $S_1$ and $S_2$, i.e.,
	{\setlength\abovedisplayskip{4.85pt} 
	\setlength\belowdisplayskip{4.85pt}
	\begin{IEEEeqnarray}{rCl}
		p_{S_1}(s_1) &=& \frac{1}{\EE_1} e^{ -\frac{s_1}{\EE_1} },\quad s_1\in[0,+\infty),\\
		p_{S_2}(s_2) &=& \frac{\EE_1}{\EE} \delta(s_2) + \left( 1-\frac{\EE_{1}}{\EE} \right) \times \frac{1}{\EE}e^{-\frac{s_2}{\EE}},\quad s_2\in[0,+\infty).
	\end{IEEEeqnarray}}	
\end{lemma}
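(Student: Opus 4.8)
The plan is to prove the decomposition via characteristic functions, which turns the sum $S=S_1+S_2$ into a product and cleanly absorbs the Dirac mass appearing in $p_{S_2}$. Before any computation I would note that the hypotheses $\EE_1>0$ and $\EE_1\leq\EE$ ensure the two mixture weights $\EE_1/\EE$ and $1-\EE_1/\EE$ lie in $(0,1]$ and $[0,1)$, so that $p_{S_2}$ is a bona fide probability density and $S_1$ a genuine exponential variable.

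First I would record the relevant characteristic functions. An exponential variable of mean $\mu$ has characteristic function $1/(1-j\mu t)$, so that $\phi_{S_1}(t)=1/(1-j\EE_1 t)$ and the target is $\phi_{S}(t)=1/(1-j\EE t)$. Since the characteristic function of a mixture is the corresponding mixture of characteristic functions, and since $\delta(\cdot)$ contributes the constant $1$, I obtain
$$\phi_{S_2}(t)=\frac{\EE_1}{\EE}+\left(1-\frac{\EE_1}{\EE}\right)\frac{1}{1-j\EE t}.$$

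The decisive step is the algebraic simplification of $\phi_{S_2}$: placing both terms over the common denominator $1-j\EE t$ and using $\EE_1=(\EE_1/\EE)\,\EE$ collapses the numerator to $1-j\EE_1 t$, giving $\phi_{S_2}(t)=(1-j\EE_1 t)/(1-j\EE t)$. Independence of $S_1$ and $S_2$ then yields $\phi_{S_1+S_2}(t)=\phi_{S_1}(t)\,\phi_{S_2}(t)$, in which the factor $1-j\EE_1 t$ cancels and leaves exactly $1/(1-j\EE t)=\phi_{S}(t)$. Appealing to the uniqueness theorem for characteristic functions then gives $S_1+S_2$ the same law as $S$, which completes the argument.

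I do not expect a serious obstacle, as the computation is short; the only point to watch is that the cancellation of $1-j\EE_1 t$ hinges on the mixture weight being exactly $\EE_1/\EE$, which is precisely what the stated $p_{S_2}$ prescribes. A direct convolution of the densities would also work in principle but is clumsier because of the delta term, so the characteristic-function route is the natural one.
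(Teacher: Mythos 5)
Your proof is correct. One point of context before the mathematics: the paper offers no proof of this lemma to compare against --- it is imported verbatim from \cite{Li2022} --- so your characteristic-function argument stands as a self-contained verification rather than a deviation from an in-paper derivation. The computation itself is sound: $\phi_{S_1}(t)=1/(1-j\EE_1 t)$, the Dirac mass contributes the constant weight $\EE_1/\EE$, and
\begin{equation*}
\phi_{S_2}(t)=\frac{\EE_1}{\EE}+\Bigl(1-\frac{\EE_1}{\EE}\Bigr)\frac{1}{1-j\EE t}=\frac{\frac{\EE_1}{\EE}(1-j\EE t)+1-\frac{\EE_1}{\EE}}{1-j\EE t}=\frac{1-j\EE_1 t}{1-j\EE t},
\end{equation*}
so independence gives $\phi_{S_1}(t)\,\phi_{S_2}(t)=1/(1-j\EE t)=\phi_{S}(t)$, and the uniqueness theorem for characteristic functions finishes the job. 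Your preliminary check that $0<\EE_1\leq\EE$ keeps the mixture weights in $[0,1]$ also quietly covers the endpoint $\EE_1=\EE$, where $p_{S_2}$ degenerates to $\delta(s_2)$, $S_2=0$ almost surely, and the claim is trivial. For what it is worth, the convolution route you set aside as clumsy is only marginally longer: the delta term reproduces $\frac{\EE_1}{\EE}\,p_{S_1}(s)$, while for $\EE_1<\EE$ the continuous part gives $\int_0^s\frac{1}{\EE_1}e^{-(s-u)/\EE_1}\,\frac{1}{\EE}e^{-u/\EE}\,\mathrm{d}u=\frac{e^{-s/\EE}-e^{-s/\EE_1}}{\EE-\EE_1}$, and combining the two with their weights collapses to $\frac{1}{\EE}e^{-s/\EE}$ exactly; but your choice is the cleaner one and is the natural style of verification for all three decomposition lemmas (uniform, exponential, truncated exponential) that this paper borrows from the cited reference.
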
For convenience, the distribution in \eqref{eq: exp distribution} is denoted by $\textsf{Exp}(\EE)$.

\begin{lemma}[\cite{Lapidoth2009,Hranilovic2004,Farid2010}]\label{lemma: 1-user capacity only ave}
	The channel capacity $\mathsf{C}(\EE,\sigma)$ of an average-intensity constrained optical intensity channel, is upper bounded by
	{\setlength\abovedisplayskip{4.85pt} 
	\setlength\belowdisplayskip{4.85pt}
	\begin{IEEEeqnarray}{rCl}
		\mathsf{C}_{\textnormal{ub}}(\EE,\sigma) = \frac{1}{2}\log \left\{ \frac{e}{2\pi} \left( 2+ \frac{\EE}{\sigma} \right)^2 \right\},\label{eq: 1-user ub capacity only ave}
	\end{IEEEeqnarray}}At high SNR,
	{\setlength\abovedisplayskip{4.85pt} 
	\setlength\belowdisplayskip{4.85pt}
	\begin{IEEEeqnarray}{rCl}
		\mathsf{C}_{\textnormal{ub}}(\EE,\sigma) \doteq \frac{1}{2}\log \left(1+ \frac{ e\EE^2 }{2\pi\sigma^2} \right). \label{eq: only ave high-SNR capacity}
	\end{IEEEeqnarray}}
\end{lemma}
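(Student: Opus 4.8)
The claim is an upper bound on the single-user capacity $\mathsf{C}(\EE,\sigma)=\max\,\mathsf{I}(X;Y)$, where the maximization runs over all non-negative inputs $X$ with $\mathbb{E}[X]\le\EE$, and $Y=X+Z$ with $Z\sim\mathcal{N}(0,\sigma^2)$. The plan is to write $\mathsf{I}(X;Y)=\mathsf{h}(Y)-\mathsf{h}(Z)=\mathsf{h}(Y)-\tfrac12\log(2\pi e\sigma^2)$ and to bound $\mathsf{h}(Y)$ from above by the cross-entropy (Gibbs) inequality: for \emph{any} fixed density $q$ on $\mathbb{R}$ one has $\mathsf{h}(Y)\le-\mathbb{E}\!\left[\log q(Y)\right]$, since $D(p_Y\|q)\ge0$. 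This is the dual reference-output technique, and the entire art lies in choosing $q$ so that $-\mathbb{E}[\log q(Y)]$ depends on the input only through $\mathbb{E}[X]$ and admits a closed form. Concretely, if $-\log q$ is affine in $y$, then the cross-entropy is a function of $\mathbb{E}[Y]=\mathbb{E}[X]\le\EE$ alone, and monotonicity lets me substitute $\mathbb{E}[X]=\EE$.

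Because the exponential law is the maximum-entropy distribution for a non-negative variable of given mean (this is exactly the distribution decomposed in Lemma~\ref{lemma: decom of exponential}), I would take $q$ to behave like $\tfrac1\mu e^{-y/\mu}$ on the bulk $y>0$, for which $-\log q(y)=\log\mu+y/\mu$ is affine and gives the clean contribution $\log\mu+\mathbb{E}[Y]/\mu$. A purely exponential $q$ cannot be used directly, however, because $Y$ carries a Gaussian component and so places positive probability on $\{Y<0\}$, where $\log q$ would diverge. The remedy is to regularize $q$ on a transition layer of width $\Theta(\sigma)$ about the origin and to endow it with a Gaussian-type left tail, keeping $q$ everywhere positive. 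To see why the answer must take the stated form, note that the \emph{naive} choice $\mu=\EE$ already yields $\mathsf{h}(Y)\le\log(e\EE)$ and hence $\mathsf{I}(X;Y)\le\tfrac12\log\!\big(\tfrac{e\EE^2}{2\pi\sigma^2}\big)$, which is precisely the high-SNR target \eqref{eq: only ave high-SNR capacity}; the finite-SNR correction is exactly what the boundary layer supplies.

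Carrying the regularized computation through, I expect the $\Theta(\sigma)$ transition layer to manifest as the shift $\EE\mapsto\EE+2\sigma$ inside the logarithm, so that after optimizing the scale $\mu$ one obtains $\mathsf{h}(Y)\le\log\!\big(e(\EE+2\sigma)\big)$. Subtracting $\mathsf{h}(Z)=\tfrac12\log(2\pi e\sigma^2)$ then collapses the bound to
\[
\tfrac12\log\frac{e^2(\EE+2\sigma)^2}{2\pi e\sigma^2}=\tfrac12\log\!\left\{\frac{e}{2\pi}\left(2+\frac{\EE}{\sigma}\right)^2\right\},
\]
which is \eqref{eq: 1-user ub capacity only ave}. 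The main obstacle is this very step: designing the transition layer of $q$ and evaluating its contribution cleanly enough to reproduce the \emph{exact} constant $\tfrac{e}{2\pi}$ and the \emph{exact} shift ``$+2\sigma$'' rather than a looser, merely order-correct $O(\sigma)$ term. Pinning down these constants—controlling the left-tail and near-origin integrals and verifying that the extremizing $\mu$ leaves no residual—is what separates the stated bound from a weaker one of the same asymptotic order.

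Finally, the high-SNR form \eqref{eq: only ave high-SNR capacity} follows from the closed expression by letting $\EE\gg\sigma$. Writing $\left(2+\tfrac{\EE}{\sigma}\right)^2=\tfrac{\EE^2}{\sigma^2}\big(1+\tfrac{2\sigma}{\EE}\big)^2$, the correction factor tends to $1$, so $\mathsf{C}_{\textnormal{ub}}(\EE,\sigma)=\tfrac12\log\!\big(\tfrac{e\EE^2}{2\pi\sigma^2}\big)+o(1)$; since $\tfrac12\log\!\big(1+\tfrac{e\EE^2}{2\pi\sigma^2}\big)-\tfrac12\log\!\big(\tfrac{e\EE^2}{2\pi\sigma^2}\big)\to0$ as $\EE/\sigma\to\infty$, the two expressions agree in the $\doteq$ sense, completing the asymptotic claim.
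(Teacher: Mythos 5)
The paper itself never proves this lemma---it is imported verbatim from \cite{Lapidoth2009,Hranilovic2004,Farid2010}---so the benchmark is the derivation in those references, and your overall plan (the dual bound $\mathsf{h}(Y)\le-\mathbb{E}[\log q(Y)]$ via $D(p_Y\|q)\ge 0$, with a reference law that is exponential on the bulk and Gaussian-like on the left tail) is indeed the technique used there. Your closing paragraph, verifying that \eqref{eq: 1-user ub capacity only ave} implies \eqref{eq: only ave high-SNR capacity} in the $\doteq$ sense, is correct and complete.

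The gap is that the entire content of \eqref{eq: 1-user ub capacity only ave}---equivalently, after subtracting $\mathsf{h}(Z)=\frac{1}{2}\log(2\pi e\sigma^2)$, the inequality $\mathsf{h}(Y)\le\log\bigl(e(\EE+2\sigma)\bigr)$---is never established. You do not write down an explicit $q$, never evaluate the cross-entropy over the transition layer, and the decisive sentence is ``I expect the $\Theta(\sigma)$ transition layer to manifest as the shift $\EE\mapsto\EE+2\sigma$,'' which you yourself flag as the main obstacle. This step cannot be waved through, because the exact constant is not robust to the natural shortcuts: a Laplace reference based on $\mathbb{E}|Y|\le\EE+\sigma\sqrt{2/\pi}$ gives only $\mathsf{h}(Y)\le\log\bigl(2e(\EE+\sigma\sqrt{2/\pi})\bigr)$, weaker by $\log 2$ even asymptotically; a two-sided-exponential reference optimized against $\mathbb{E}[\max(Y,0)]\le\EE+\sigma/\sqrt{2\pi}$ and $\mathbb{E}[\max(-Y,0)]\le\sigma/\sqrt{2\pi}$ yields $1+2\log\bigl(\sqrt{a}+\sqrt{b}\bigr)$ with $a=\EE+\sigma/\sqrt{2\pi}$, $b=\sigma/\sqrt{2\pi}$, which carries a parasitic $O(\sqrt{\EE\sigma})$ excess at finite SNR; and the Gaussian-left/exponential-right mixture with the crude moment bound $\mathbb{E}\bigl[Y^2\mathbbm{1}\{Y<0\}\bigr]\le\sigma^2/2$ leaves a residual additive quarter nat. Reproducing the exact $\frac{e}{2\pi}$ and the exact shift $+2\sigma$ requires the specific parameter choices and tail evaluations carried out in the cited works, none of which appear in your sketch. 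A secondary flaw: your motivating claim that the naive choice $\mu=\EE$ ``already yields $\mathsf{h}(Y)\le\log(e\EE)$'' is not a valid inequality---the purely exponential reference has infinite cross-entropy since $\Pr(Y<0)>0$, and the inequality itself fails at low SNR (take $X\equiv 0$, so that $\mathsf{h}(Y)=\frac{1}{2}\log(2\pi e\sigma^2)$ exceeds $\log(e\EE)$ for large $\sigma$)---so it can stand only as heuristic motivation, as you partly concede.
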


\subsubsection{Bounds on Capacity Region}
The inner bound on the capacity region is proposed as follows.
\begin{theorem}[\textbf{Inner Bound}]\label{theorem: inner bound only ave}
	When the input is only subject to the peak-intensity constraint in \eqref{eq:ave cons}, the rate pairs in the set $\cup_{\rho\in[0,1]} \left( R_{1}^{\textnormal{in}}(\rho),R_{2}^{\textnormal{in}}(\rho) \right)$ are all achievable for a two-user OI-BC, where
	{\setlength\abovedisplayskip{4.85pt} 
	\setlength\belowdisplayskip{4.85pt}
	\begin{IEEEeqnarray}{rCl}
		\subnumberinglabel{eq: only ave IB}
		R_{1}^{\textnormal{in}}(\rho) &=& \frac{1}{2} \log \left( 1+\frac{e\rho^2\EE^2}{2 \pi \sigma_1^2} \right),\label{eq: R1 Lb only ave}\\
		R_{2}^{\textnormal{in}}(\rho) &=& \frac{1}{2} \log \left( 1+\frac{e\EE^2}{2 \pi \sigma_2^2} \right) - \mathsf{C}_{\textnormal{ub}} \left(\rho\EE, \sigma_2\right). \label{eq: R2 Lb only ave}
	\end{IEEEeqnarray}}
\end{theorem}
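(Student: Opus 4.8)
The plan is to mirror the proof of Theorem~\ref{theorem: inner bound only peak}, replacing the uniform decomposition with the exponential decomposition of Lemma~\ref{lemma: decom of exponential}. First I would encode the independent messages $M_1$ and $M_2$ into signals $X_1$ and $X_2$, fixing $\rho\in[0,1]$ and setting $\EE_1=\rho\EE$, so that $X_1\sim\textsf{Exp}(\rho\EE)$ while $X_2$ follows the mixed law $\frac{\EE_1}{\EE}\delta(s_2)+\left(1-\frac{\EE_1}{\EE}\right)\frac{1}{\EE}e^{-s_2/\EE}$ prescribed by the lemma. Adopting the superposition scheme $X=X_1+X_2$, Lemma~\ref{lemma: decom of exponential} guarantees $X\sim\textsf{Exp}(\EE)$, hence $\mathbb{E}[X]=\EE$ and the average-intensity constraint~\eqref{eq:ave cons} is satisfied. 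Instantiating $U=X_2$ in Lemma~\ref{lemma: twouser}, the achievable rates reduce to $\mathsf{I}(X;Y_1|X_2)$ and $\mathsf{I}(X_2;Y_2)$.

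For $R_1$, I would exploit the Markov structure to write $\mathsf{I}(X;Y_1|X_2)=\mathsf{I}(X_1;X_1+Z_1)=\mathsf{h}(X_1+Z_1)-\mathsf{h}(Z_1)$ and apply the EPI exactly as in~\eqref{eq: EPI}. The only new ingredient is the differential entropy of the exponential law, $\mathsf{h}(X_1)=\log(e\rho\EE)$, which yields $e^{2\mathsf{h}(X_1)}/e^{2\mathsf{h}(Z_1)}=e\rho^2\EE^2/(2\pi\sigma_1^2)$ and therefore~\eqref{eq: R1 Lb only ave}.

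For $R_2$, I would write $\mathsf{I}(X_2;Y_2)=\mathsf{h}(X+Z_2)-\mathsf{h}(X_1+Z_2)$, lower-bound the first term by the EPI using $\mathsf{h}(X)=\log(e\EE)$ just as in~\eqref{eq2: EPI}, and upper-bound the second term via the single-user result. The crucial step is that $X_1\sim\textsf{Exp}(\rho\EE)$ has mean $\rho\EE$ and is independent of $Z_2$, so $X_1$ itself obeys an average-intensity constraint with parameter $\rho\EE$; Lemma~\ref{lemma: 1-user capacity only ave} then gives $\mathsf{h}(X_1+Z_2)\leq\mathsf{C}_{\textnormal{ub}}(\rho\EE,\sigma_2)+\mathsf{h}(Z_2)$. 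Combining the two bounds and cancelling the common $\mathsf{h}(Z_2)=\frac{1}{2}\log(2\pi e\sigma_2^2)$ produces~\eqref{eq: R2 Lb only ave}.

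Finally, because the decomposition parameter $\EE_1=\rho\EE$ ranges continuously over $[0,\EE]$, the pairs $\left(R_{1}^{\textnormal{in}}(\rho),R_{2}^{\textnormal{in}}(\rho)\right)$ already sweep out a continuum indexed by $\rho\in[0,1]$, so the stated achievable region is simply their union $\cup_{\rho\in[0,1]}$, with no convex-hull or time-sharing step required, in contrast to the discrete-$N$ peak case. I expect the main subtlety to be bookkeeping rather than any hard calculation: one must carefully verify that the marginal split off by the decomposition is genuinely $\textsf{Exp}(\rho\EE)$ with the correct mean, so that the single-user average-intensity bound $\mathsf{C}_{\textnormal{ub}}(\rho\EE,\sigma_2)$ is legitimately applied to the stripped sub-signal $X_1$.
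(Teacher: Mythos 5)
Your proposal is correct and follows essentially the same route as the paper's own proof: the identical exponential decomposition of Lemma~\ref{lemma: decom of exponential} with $\EE_1=\rho\EE$, the superposition $X=X_1+X_2$ with $U=X_2$ in Lemma~\ref{lemma: twouser}, the EPI steps mirroring \eqref{eq: 21}--\eqref{eq: 69}, and the key bound $\mathsf{h}(X_1+Z_2)\leq \mathsf{C}_{\textnormal{ub}}(\rho\EE,\sigma_2)+\mathsf{h}(Z_2)$ justified by $\mathbb{E}[X_1]=\rho\EE$. Your remark that the continuous parameter $\rho\in[0,1]$ makes a convex-hull/time-sharing step unnecessary is likewise consistent with the theorem statement.
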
 
\begin{proof}
	The proof is similar to that of Theorem~\ref{theorem: inner bound only peak}. Here we mainly emphasize the differences. Different from the peak-intensity constrained OI-BC, here $X_1$ follows $\textsf{Exp}(\rho\EE)$ and $X_2$ follows
	{\setlength\abovedisplayskip{4.85pt} 
	\setlength\belowdisplayskip{4.85pt}
	\begin{IEEEeqnarray}{rCl}
		p_{X_2}(x_2) &=& \rho \delta(s_2) + \left( 1-\rho \right) \times \frac{e^{-\frac{x_2}{\EE}}}{\EE},\quad x_2\in[0,+\infty),
	\end{IEEEeqnarray}}where $\rho\in[0,1]$. Substituting them Lemma~\ref{lemma: decom of exponential}, we obtain that $X$ follows $\textsf{Exp}(\EE)$, which satisfies the average-intensity constraint in \eqref{eq:ave cons}. Besides, note that $X_1$ satisfies $\mathbb{E}[X_1]=\rho\EE$ and is independent of the Gaussian noise $Z_2$, thus we can bound $\mathsf{h}(X_1+Z_2)$ by
	{\setlength\abovedisplayskip{4.85pt} 
	\setlength\belowdisplayskip{4.85pt}
	\begin{IEEEeqnarray}{rCl}
	\mathsf{h}(X_1+Z_2) &\leq& \mathsf{C}_{\textnormal{ub}} ( \rho\EE,\sigma_2) + \mathsf{h}(Z_2).
	\end{IEEEeqnarray}}Finally, following similar steps from \eqref{eq: 21} to \eqref{eq: 69}, we can complete the proof.

\end{proof}

The outer bound on the capacity region is proposed as follows.
\begin{theorem}[\textbf{Outer Bound}]\label{theorem: outer bound only ave}
	When the input is only subject to the average-intensity constraint in \eqref{eq:ave cons}, the capacity region is outer bounded by $\cup_{\rho\in[0,1]} \left( R_{1}^{\textnormal{out}}(\rho),R_{2}^{\textnormal{out}}(\rho) \right)$ for a two-user OI-BC, where
	{\setlength\abovedisplayskip{4.85pt} 
	\setlength\belowdisplayskip{4.85pt}
	\begin{IEEEeqnarray}{rCl}
		\subnumberinglabel{eq: only ave OB}
		R_{1}^{\textnormal{out}}(\rho) &=& \frac{1}{2} \log \left\{1+\frac{\sigma_2^2}{\sigma_1^2}\left(e^{2 \mathsf{C}_{\textnormal{ub}}\left(\rho \EE, \sigma_2\right)}-1\right)\right\},\label{eq: R1 Ub only ave}\\
		R_{2}^{\textnormal{out}}(\rho) &=& \mathsf{C}_{\textnormal{ub}}\left(\EE, \sigma_2\right)-\mathsf{C}_{\textnormal{ub}}\left(\rho \EE, \sigma_2\right).\label{eq: R2 Ub only ave}
	\end{IEEEeqnarray}}
\end{theorem}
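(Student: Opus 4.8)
The plan is to replicate the argument of Theorem~\ref{theorem: outer bound only peak} almost verbatim, with the peak quantity $\amp$ replaced throughout by the average quantity $\EE$ and with $\mathsf{C}_{\textnormal{ub}}$ now taken from Lemma~\ref{lemma: 1-user capacity only ave} (the average-intensity single-user upper bound) in place of Lemma~\ref{lemma: 1-user capacity only peak}. Starting from Lemma~\ref{lemma: twouser}, I would first upper bound $R_2$ through $\mathsf{I}(U;Y_2)$ and then $R_1$ through $\mathsf{I}(X;Y_1|U)$, exploiting the degradedness $Y_2=Y_1+\widetilde{Z}_2$ and the conditional EPI exactly as before.

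Concretely, the steps in order are: (i) sandwich $\mathsf{h}(Y_2\mid U)$ between $\mathsf{h}(Y_2\mid X)=\mathsf{h}(Z_2)$, from conditioning along the Markov chain $U\to X\to Y_2$, and $\mathsf{h}(Y_2)\le \mathsf{C}_{\textnormal{ub}}(\EE,\sigma_2)+\mathsf{h}(Z_2)$; (ii) using continuity and monotonicity of $\mathsf{C}_{\textnormal{ub}}(\cdot,\sigma_2)$, introduce a parameter $\rho\in[0,1]$ with $\mathsf{h}(Y_2\mid U)=\mathsf{C}_{\textnormal{ub}}(\rho\EE,\sigma_2)+\mathsf{h}(Z_2)$; (iii) substitute into $R_2\le \mathsf{h}(Y_2)-\mathsf{h}(Y_2\mid U)$ to get $R_2\le \mathsf{C}_{\textnormal{ub}}(\EE,\sigma_2)-\mathsf{C}_{\textnormal{ub}}(\rho\EE,\sigma_2)$; and (iv) for $R_1$, bound $\mathsf{h}(Y_1\mid U)\le \tfrac12\log\big(e^{2\mathsf{h}(Y_2\mid U)}-e^{2\mathsf{h}(\widetilde{Z}_2)}\big)$ by the conditional EPI, then insert the parametrization of (ii) together with $\mathsf{h}(Z_1)=\tfrac12\log(2\pi e\sigma_1^2)$ and $\mathsf{h}(\widetilde{Z}_2)=\tfrac12\log(2\pi e(\sigma_2^2-\sigma_1^2))$ and simplify to recover $R_1^{\textnormal{out}}(\rho)$. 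These two inequalities jointly give the claimed parametric region.

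The hard part is step~(ii), the intermediate-value argument that must produce a valid $\rho\in[0,1]$. In the peak case this was painless because $\mathsf{C}_{\textnormal{ub}}(\rho\amp,\sigma_2)\to 0$ as $\rho\to 0$, so $\mathsf{C}_{\textnormal{ub}}(\rho\amp,\sigma_2)$ sweeps the whole interval $[0,\mathsf{C}_{\textnormal{ub}}(\amp,\sigma_2)]$ in which $\mathsf{h}(Y_2\mid U)-\mathsf{h}(Z_2)$ lives. Here, however, $\mathsf{C}_{\textnormal{ub}}(\EE,\sigma)=\tfrac12\log\{\tfrac{e}{2\pi}(2+\EE/\sigma)^2\}$ does not vanish as $\EE\to 0$; it tends to $\tfrac12\log\frac{2e}{\pi}>0$. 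Thus $\rho\mapsto \mathsf{C}_{\textnormal{ub}}(\rho\EE,\sigma_2)$ only covers $[\tfrac12\log\frac{2e}{\pi},\,\mathsf{C}_{\textnormal{ub}}(\EE,\sigma_2)]$, while the sandwich of (i) can legitimately place $\mathsf{h}(Y_2\mid U)-\mathsf{h}(Z_2)$ strictly below $\tfrac12\log\frac{2e}{\pi}$. I would therefore split into two regimes: when $\mathsf{h}(Y_2\mid U)-\mathsf{h}(Z_2)\ge \mathsf{C}_{\textnormal{ub}}(0,\sigma_2)$ the IVT applies directly and yields the stated curve; for the complementary small-information regime I would take $\rho=0$ in the $R_1$-bound, which still dominates since $e^{2\big(\mathsf{h}(Y_2\mid U)-\mathsf{h}(Z_2)\big)}<2e/\pi=e^{2\mathsf{C}_{\textnormal{ub}}(0,\sigma_2)}$ forces $R_1\le R_1^{\textnormal{out}}(0)$, and then control $R_2$ by the stand-alone single-user bound $R_2\le \mathsf{C}_{\textnormal{ub}}(\EE,\sigma_2)$. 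Reconciling this near-axis corner with the parametric family $\cup_{\rho}\big(R_1^{\textnormal{out}}(\rho),R_2^{\textnormal{out}}(\rho)\big)$ — rather than the EPI manipulations, which transfer mechanically from Theorem~\ref{theorem: outer bound only peak} — is where the genuine care is required.
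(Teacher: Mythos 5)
Your steps (i)--(iv) are exactly the paper's proof. The paper proves this theorem by repeating the argument of Theorem~\ref{theorem: outer bound only peak} with $\amp$ replaced by $\EE$: it sandwiches $\mathsf{h}(Y_2|U)$ between $\mathsf{h}(Z_2)$ and $\mathsf{C}_{\textnormal{ub}}(\EE,\sigma_2)+\mathsf{h}(Z_2)$, asserts that ``$\mathsf{C}_{\textnormal{ub}}(\EE,\sigma_2)$ is monotonically increasing with respect to $\EE$ and approaches zero when $\EE$ tends to $0$'' to extract $\rho\in[0,1]$ with $\mathsf{h}(Y_2|U)=\mathsf{C}_{\textnormal{ub}}(\rho\EE,\sigma_2)+\mathsf{h}(Z_2)$, and then reruns \eqref{eq: 34}--\eqref{eq: 40} with the conditional EPI. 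Where you depart from the paper is precisely the point you flag: the vanishing claim is true for the peak-case bound of Lemma~\ref{lemma: 1-user capacity only peak} (a minimum of two expressions that both vanish as $\amp\to0$) but false for \eqref{eq: 1-user ub capacity only ave}, which tends to $\frac{1}{2}\log\frac{2e}{\pi}>0$ as $\EE\to 0$. So you have correctly identified a genuine gap in the paper's own intermediate-value step, and your two-regime treatment is more careful than what the paper does.

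Be aware, however, that your patch cannot be ``reconciled'' with the stated parametric family, because the theorem as literally stated is unprovable with $\mathsf{C}_{\textnormal{ub}}$ as in \eqref{eq: 1-user ub capacity only ave}: the region $\cup_{\rho}\left(R_1^{\textnormal{out}}(\rho),R_2^{\textnormal{out}}(\rho)\right)$ caps $R_2$ at $\max_{\rho}R_2^{\textnormal{out}}(\rho)=\mathsf{C}_{\textnormal{ub}}(\EE,\sigma_2)-\frac{1}{2}\log\frac{2e}{\pi}$ by \eqref{eq: R2 Ub only ave}, while the trivially achievable single-user point $\left(0,\mathsf{C}(\EE,\sigma_2)\right)$ exceeds this cap at high SNR, since the gap $\mathsf{C}_{\textnormal{ub}}(\EE,\sigma_2)-\mathsf{C}(\EE,\sigma_2)$ vanishes by Lemma~\ref{lemma: 1-user capacity only ave} whereas the corner deficit $\frac{1}{2}\log\frac{2e}{\pi}\approx 0.27$ nats does not. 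Consequently the honest conclusions are exactly what your argument delivers: either the enlarged region (the parametric family augmented with the rectangle $R_1\le R_1^{\textnormal{out}}(0)$, $R_2\le\mathsf{C}_{\textnormal{ub}}(\EE,\sigma_2)$ from your small-information regime), or the stated region after redefining $\mathsf{C}_{\textnormal{ub}}(\cdot,\sigma)$ as a minimum with a capacity upper bound that vanishes at zero average intensity, mirroring the two-term minimum of the peak case. The downstream high-SNR result (Theorem~\ref{theorem: high SNR capacity region only ave}) is unaffected either way, since its $\rho=0$ endpoint already restores the full single-user rate. In short: your EPI machinery matches the paper, your refusal to accept the intermediate-value step as written is justified, and the residual ``genuine care'' you defer is a defect of the paper's statement and proof, not a missing idea in your attempt.
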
 
\begin{proof}
	The proof is similar to that of Theorem~\ref{theorem: outer bound only peak}. Note that
	{\setlength\abovedisplayskip{4.85pt} 
	\setlength\belowdisplayskip{4.85pt}
	\begin{IEEEeqnarray}{rCl}
		\mathsf{h}(Y_2|U) &\geq& \mathsf{h}(Z_2),\\
		\mathsf{h}(Y_2|U) &\leq& \mathsf{C}_{\textnormal{ub}}(\EE,\sigma_2) + \mathsf{h}(Z_2). 
	\end{IEEEeqnarray}}Here, $\mathsf{C}_{\textnormal{ub}}(\EE,\sigma_2)$ is monotonically increasing with respect to $\EE$ and approaches zero when $\EE$ tends to $0$. Hence, there exists $\rho\in[0,1]$ such that
	{\setlength\abovedisplayskip{4.85pt} 
	\setlength\belowdisplayskip{4.85pt}
	\begin{IEEEeqnarray}{rCl}
		\mathsf{h}(Y_2|U) = \mathsf{C}_{\textnormal{ub}}(\rho\EE,\sigma_2) + \mathsf{h}(Z_2). 
	\end{IEEEeqnarray}}Finally, following similar steps from to \eqref{eq: 34} to \eqref{eq: 40}, the proof is concluded.

\end{proof}

The high-SNR capacity region is directly obtained by substituting the single-user capacity result in Lemma~\ref{lemma: 1-user capacity only ave} into \eqref{eq: only ave IB} and \eqref{eq: only ave OB}. We summarize it as follows.
\begin{theorem}[\textbf{Asymptotic Capacity Region}]\label{theorem: high SNR capacity region only ave}
	When the input is only subject to the average-intensity constraint in \eqref{eq:ave cons}, at high SNR, the capacity region of a two-user OI-BC asymptotically converges to the region where the rate pair $(R_1,R_2)$ satisfies
	{\setlength\abovedisplayskip{4.85pt} 
	\setlength\belowdisplayskip{4.85pt}
	\begin{IEEEeqnarray}{rCl}
		\subnumberinglabel{eq: only ave high-SNR capacity region}
		R_1 & \ \dot{\leq}\ & \frac{1}{2} \log \left(1+\frac{e\rho^2 \EE^2}{2 \pi \sigma_1^2}\right), \label{eq: high SNR R1 only ave}\\
		R_2 & \ \dot{\leq}\ & \frac{1}{2} \log \left(1+\frac{e\left(1-\rho^2\right) \EE^2}{e\rho^2 \EE^2+2 \pi \sigma_2^2}\right), \label{eq: high SNR R2 only ave}
	\end{IEEEeqnarray}}with $\rho \in [0,1]$.
\end{theorem}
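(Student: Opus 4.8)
The plan is to follow the same two-step template as the peak-intensity case (Theorem~\ref{theorem: high SNR capacity region only peak})---substitute the high-SNR single-user capacity into both the inner and outer bounds and show that they meet---but to exploit a structural simplification specific to the average-intensity setting: the inner bound of Theorem~\ref{theorem: inner bound only ave} and the outer bound of Theorem~\ref{theorem: outer bound only ave} are already parameterized by the \emph{same} continuous variable $\rho\in[0,1]$. Hence no convex-hull or time-sharing interpolation over integer indices is required, and the whole argument collapses to a pointwise-in-$\rho$ comparison after inserting \eqref{eq: only ave high-SNR capacity}.

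First I would substitute $\mathsf{C}_{\textnormal{ub}}(\EE,\sigma)\doteq\frac{1}{2}\log(1+\frac{e\EE^2}{2\pi\sigma^2})$ from \eqref{eq: only ave high-SNR capacity} into the outer bound \eqref{eq: only ave OB}. In $R_1^{\textnormal{out}}(\rho)$ this replaces $e^{2\mathsf{C}_{\textnormal{ub}}(\rho\EE,\sigma_2)}-1$ by $\frac{e\rho^2\EE^2}{2\pi\sigma_2^2}$; the prefactor $\sigma_2^2/\sigma_1^2$ then cancels the $\sigma_2^2$ and leaves exactly \eqref{eq: high SNR R1 only ave}. In $R_2^{\textnormal{out}}(\rho)$ the difference $\mathsf{C}_{\textnormal{ub}}(\EE,\sigma_2)-\mathsf{C}_{\textnormal{ub}}(\rho\EE,\sigma_2)$ becomes a difference of two logarithms; merging them into a single log and rewriting the numerator as $2\pi\sigma_2^2+e\EE^2=(2\pi\sigma_2^2+e\rho^2\EE^2)+e(1-\rho^2)\EE^2$ produces \eqref{eq: high SNR R2 only ave}.

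Next I would substitute the same high-SNR expression into the inner bound \eqref{eq: only ave IB}. Here $R_1^{\textnormal{in}}(\rho)$ is already written in the target form, so $R_1^{\textnormal{in}}(\rho)\doteq R_1^{\textnormal{out}}(\rho)$ holds immediately; and since the leading term of $R_2^{\textnormal{in}}(\rho)$ is precisely the high-SNR value of $\mathsf{C}_{\textnormal{ub}}(\EE,\sigma_2)$ while the subtracted term $\mathsf{C}_{\textnormal{ub}}(\rho\EE,\sigma_2)$ is common to both bounds, we also obtain $R_2^{\textnormal{in}}(\rho)\doteq R_2^{\textnormal{out}}(\rho)$. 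Thus for every fixed $\rho\in[0,1]$ the inner and outer rate pairs coincide up to a vanishing gap, so the two bounding curves merge as $\rho$ sweeps $[0,1]$; since the capacity region is sandwiched between the inner and outer bounds, it converges to the region \eqref{eq: only ave high-SNR capacity region}.

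I do not expect a genuine obstacle: unlike the peak case there is no integer index $N$, so the interpolation between adjacent achievable points and the explicit $\doteq 0$ gap computation are here replaced by an immediate termwise match. The only steps demanding care are the algebraic recombination of the two logarithms in $R_2^{\textnormal{out}}(\rho)$ (equivalently $R_2^{\textnormal{in}}(\rho)$) into the single-log form of \eqref{eq: high SNR R2 only ave}, and a brief sanity check at the endpoints $\rho=0$ and $\rho=1$, which should recover the degenerate single-user corner points.
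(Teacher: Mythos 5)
Your proposal is correct and takes essentially the same route as the paper: the paper's own proof of this theorem is exactly the one-line observation that the region is ``directly obtained by substituting the single-user capacity result in Lemma~\ref{lemma: 1-user capacity only ave} into \eqref{eq: only ave IB} and \eqref{eq: only ave OB}'', i.e., your pointwise-in-$\rho$ substitution with no convex-hull or time-sharing step. Your spelled-out algebra—the cancellation of the common subtracted term $\mathsf{C}_{\textnormal{ub}}(\rho\EE,\sigma_2)$, the merging of the two logarithms into \eqref{eq: high SNR R2 only ave}, and the matching $R_{1}^{\textnormal{in}}(\rho)\doteq R_{1}^{\textnormal{out}}(\rho)$—is precisely what that substitution amounts to.
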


The derived bounds on the capacity region are shown in Fig.~\ref{fig: two-user only ave}, where we assume $\sigma_2 = 2\sigma_1$ and $\mathrm{ESNR}_k=\frac{\EE}{\sigma_k}$, $k\in\{1,2\}$. 
\begin{figure}[!htbp]
	\centering
	\includegraphics[width=3.7in]{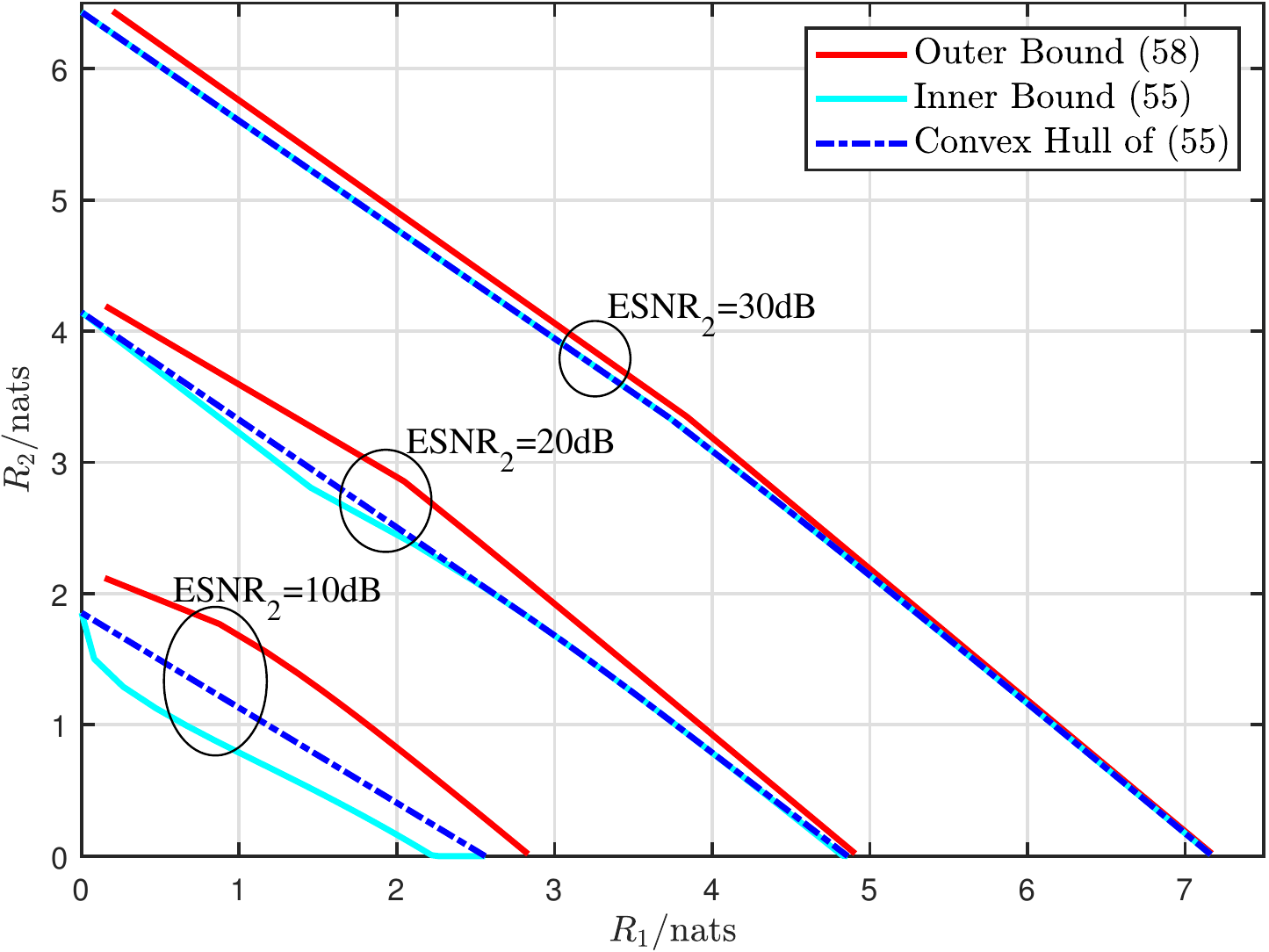}\\
	\caption{Bounds on capacity region of $2$-user OI-BC with average-intensity constraint.}
	\label{fig: two-user only ave}
	\vspace{-0.7cm}
\end{figure}
\subsection{Peak- and Average-Intensity Constrained OI-BC}
\subsubsection{Preliminary}
Considering the peak- and average-intensity constraints in \eqref{eq:peak cons} and \eqref{eq:ave cons}, the decomposition of a truncated exponential random variable and the existing single-user channel capacity is introduced in the following.

\begin{lemma}[\cite{Li2022}]\label{lemma: decom of truncated exponential}
	Given any integer $N\in\mathbb{N}^{+}$, a truncated exponential random variable $S$, i.e.,
	{\setlength\abovedisplayskip{4.85pt} 
		\setlength\belowdisplayskip{4.85pt}
		\begin{IEEEeqnarray}{rCl}
			p_{S}(s) = \frac{1}{\amp} \frac{\mu}{1-e^{-\mu}} e^{ -\frac{\mu s}{\amp} },\quad s\in[0,\amp], \quad \mu>0,
	\end{IEEEeqnarray}}can be decomposed as a sum of two independent random variables $S_1$ and $S_2$, i.e.,
	{\setlength\abovedisplayskip{4.85pt} 
		\setlength\belowdisplayskip{4.85pt}
		\begin{IEEEeqnarray}{rCl}
			p_{S_1}(s_1) &=& \frac{1}{\amp} \frac{\mu}{ 1-e^{ -\frac{\mu}{N} } } e^{ -\frac{\mu s_1}{\amp} },\quad s_1 \in \left[ 0,\frac{\amp}{N} \right],\\
			p_{S_2}(s_2) &=& \frac{ 1-e^{ -\frac{\mu}{N} } }{ 1-e^{-\mu} } \sum_{n=0}^{N-1} e^{-\frac{\mu n}{N}} \delta \left(s_2-\frac{n}{N}\times\amp \right).\label{eq: truncated exp distribution}
	\end{IEEEeqnarray}}	
\end{lemma}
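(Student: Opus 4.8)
The plan is to verify the decomposition directly by convolving the laws of $S_1$ and $S_2$, exploiting the fact that $S_2$ is supported on the lattice $\{n\amp/N : n=0,\ldots,N-1\}$ while $S_1$ is supported on a single lattice cell $[0,\amp/N]$. First I would confirm that $S_1$ and $S_2$ are genuine distributions: the density of $S_1$ integrates to one over $[0,\amp/N]$ by construction, and the masses of $S_2$ sum to one because $\sum_{n=0}^{N-1}e^{-\mu n/N}$ is a geometric series equal to $(1-e^{-\mu})/(1-e^{-\mu/N})$, which exactly cancels the prefactor $(1-e^{-\mu/N})/(1-e^{-\mu})$. This mirrors the verification underlying the uniform and exponential decompositions of Lemmas~\ref{lemma: decom of uniform} and~\ref{lemma: decom of exponential}.

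The crux is the following observation. Fix $s\in(n\amp/N,(n+1)\amp/N)$ for some $n\in\{0,\ldots,N-1\}$. Since $S_1\in[0,\amp/N]$ and $S_2$ takes only the values $m\amp/N$, the event $S_1+S_2=s$ forces $m=n$ and $S_1=s-n\amp/N\in[0,\amp/N]$; hence exactly one lattice point contributes. By independence, the density of $S=S_1+S_2$ at such $s$ is the product of the mass of $S_2$ at $n\amp/N$ and the density of $S_1$ at $s-n\amp/N$:
\[
p_S(s)=\frac{1-e^{-\mu/N}}{1-e^{-\mu}}\,e^{-\mu n/N}\cdot\frac{\mu}{\amp\,(1-e^{-\mu/N})}\,e^{-\mu(s-n\amp/N)/\amp}.
\]
Here the $(1-e^{-\mu/N})$ factors cancel and, crucially, the shift contributes $e^{+\mu n/N}$, which annihilates the factor $e^{-\mu n/N}$ coming from the mass of $S_2$. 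What remains is $\frac{\mu}{\amp(1-e^{-\mu})}e^{-\mu s/\amp}$, independent of $n$, i.e.\ precisely the target truncated-exponential density on $[0,\amp]$.

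I expect the only subtlety --- not really an obstacle --- to be the bookkeeping at the lattice boundaries $s=n\amp/N$, where two cells formally meet; these form a set of Lebesgue measure zero and do not affect the density, so the computation above pins down the law of $S$ almost everywhere and hence everywhere. The key structural reason the decomposition works is the memoryless, self-similar nature of the exponential: translating $S_1$ by $n\amp/N$ multiplies its density by $e^{+\mu n/N}$, exactly matched by the geometric decay $e^{-\mu n/N}$ built into $S_2$, so every cell reassembles the same exponential slope. An equivalent and slightly slicker route is to compare characteristic functions: one would compute $\phi_{S_1}(t)$ (an integral over $[0,\amp/N]$) and $\phi_{S_2}(t)$ (a finite geometric sum in $e^{jt\amp/N}$), and check that their product telescopes to $\phi_S(t)=\mathbb{E}[e^{jtS}]$ over $[0,\amp]$; the same cancellation drives the telescoping, and this form avoids the boundary discussion entirely.
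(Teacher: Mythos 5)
Your proof is correct, and there is nothing in this paper to compare it against: the lemma is stated with a citation to \cite{Li2022} and no proof is reproduced here, so any valid verification stands on its own. Your argument is complete as written --- the normalization checks are right (the geometric sum $\sum_{n=0}^{N-1}e^{-\mu n/N}=(1-e^{-\mu})/(1-e^{-\mu/N})$ does cancel the prefactor of $S_2$), the cell-by-cell convolution is airtight because $S_1\in[0,\amp/N]$ forces exactly one atom of $S_2$ to contribute at each interior point $s\in(n\amp/N,(n+1)\amp/N)$, and the cancellation $e^{-\mu n/N}\cdot e^{+\mu n/N}=1$ yields the target density $\frac{\mu}{\amp(1-e^{-\mu})}e^{-\mu s/\amp}$ uniformly over all cells; the boundary points are indeed a Lebesgue-null set and immaterial to identifying the law. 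Your closing observation about the characteristic-function route is also sound: with $x=jt\amp-\mu$ one gets $\phi_{S_1}(t)\phi_{S_2}(t)=\frac{\mu}{\amp(1-e^{-\mu})}\cdot\frac{e^{x}-1}{jt-\mu/\amp}=\phi_S(t)$ by the telescoping identity $\frac{e^{x/N}-1}{1-e^{x/N}}=-1$, and this is the style of verification one would expect in the cited reference, since it handles the atoms of $S_2$ without any case analysis. Either route fully establishes the lemma.
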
For convenience, the distribution in \eqref{eq: truncated exp distribution} is denoted by $\textsf{Texp}(\amp,\mu)$.

\begin{lemma}[\cite{Lapidoth2009}]\label{lemma: 1-user capacity bpower}
	The channel capacity $\mathsf{C}(\amp,\sigma;\mu^\star)$ of a peak- and average-intensity constrained optical intensity channel is upper bounded by
	{\setlength\abovedisplayskip{4.85pt} 
	\setlength\belowdisplayskip{4.85pt}
	\begin{IEEEeqnarray}{rCl}
		\mathsf{C}_{\textnormal{ub}}(\amp,\sigma;\mu^\star) = \inf_{\beta_1>0,\beta_2>0} B(\beta_1,\beta_2),\label{eq: 1-user ub capacity bpower}
	\end{IEEEeqnarray}}where 
	{\setlength\abovedisplayskip{4.85pt} 
	\setlength\belowdisplayskip{4.85pt}
	\begin{IEEEeqnarray}{rCl}
		\beta_1 &=& \sigma \log\left( 1 + \frac{\amp}{\sigma}\right)\\
		\beta_2 &=& \mu^\star \Biggl( 1 - \exp\biggl(- \frac{\beta_1^2}{2\sigma^2} \times \Bigl(\frac{1}{\mu^\star} - \frac{ e^{-\mu^\star} }{ 1-e^{-\mu^\star} }\Bigr) \biggr) \Biggr),
	\end{IEEEeqnarray}}and $B(\beta_1,\beta_2)$ is given in \eqref{eq: 80a} at the bottom of the next page. The parameter $\mu^\star$ is the unique solution to the following equation
	{\setlength\abovedisplayskip{4.85pt} 
	\setlength\belowdisplayskip{4.85pt}
	\begin{IEEEeqnarray}{rCl}
		\frac{1}{\mu^\star} - \frac{ e^{-\mu^\star} }{ 1-e^{-\mu^\star} } = \alpha.
	\end{IEEEeqnarray}}At high SNR,
	{\setlength\abovedisplayskip{4.85pt} 
	\setlength\belowdisplayskip{4.85pt}
	\begin{IEEEeqnarray}{rCl}
		\mathsf{C}_{\textnormal{ub}}(\amp,\sigma;\mu^\star) \doteq \frac{1}{2}\log 
		\left(1+ \exp\left( 2 - \frac{ 2\mu^\star e^{-\mu^\star} }{ 1-e^{-\mu^\star} } \right) \left( \frac{1- e^{-\mu^\star} }{\mu^\star}
		\right)^2 \frac{ \amp^2 }{2\pi e\sigma^2} \right).\label{eq: peak and ave high-SNR capacity}
	\end{IEEEeqnarray}}
\end{lemma}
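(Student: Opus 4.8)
The plan is to prove the upper bound by the dual-capacity (duality) method of Lapidoth, Moser, and Wigger, and then to recover the high-SNR expression by identifying the asymptotically optimal input. For any auxiliary output density $R(\cdot)$ on $\mathbb{R}$, the identity $\mathbb{E}_X[D(W(\cdot\mid X)\,\|\,R)]=\mathsf{I}(X;Y)+D(P_Y\,\|\,R)$, where $W(y\mid x)$ is the Gaussian transition law of mean $x$ and variance $\sigma^2$, gives $\mathsf{I}(X;Y)\le\mathbb{E}_X[D(W(\cdot\mid X)\,\|\,R)]$ because $D(P_Y\,\|\,R)\ge0$. Since this holds for every admissible input, $\mathsf{C}(\amp,\sigma;\mu^\star)\le\sup_{P_X}\mathbb{E}_X[D(W(\cdot\mid X)\,\|\,R)]$, and $R$ may be optimized freely. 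Writing $D(W(\cdot\mid x)\,\|\,R)=-\frac{1}{2}\log(2\pi e\sigma^2)-\int W(y\mid x)\log R(y)\,dy$ reduces the problem to choosing $R$ so that $-\mathbb{E}[\log R(Y)]$ is controlled solely through the peak constraint $X\le\amp$ and the average constraint $\mathbb{E}[X]\le\EE$.

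First I would introduce the two-parameter auxiliary output law indexed by $(\beta_1,\beta_2)$, built from an exponential profile over the dynamic range $[0,\amp]$ (decay rate tied to $\beta_2$) smoothed by the Gaussian kernel and equipped with Gaussian tails outside the interval (scale tied to $\beta_1$). Substituting this $R$ and evaluating the Gaussian integrals, I would bound the out-of-interval tail contributions and collect terms so that the remaining dependence on the input enters only through $\mathbb{E}[X]$ and $\mathrm{Pr}(X\in[0,\amp])=1$; applying the two constraints then yields the closed-form $B(\beta_1,\beta_2)$ of \eqref{eq: 80a}, valid uniformly over all admissible $P_X$. Minimizing over $\beta_1>0,\beta_2>0$ gives \eqref{eq: 1-user ub capacity bpower}, and the stationary values quoted for $(\beta_1,\beta_2)$ follow by setting the partial derivatives of $B$ to zero.

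For the high-SNR statement, I would invoke that as $\amp/\sigma\to\infty$ the bound behaves like $\mathsf{C}_{\textnormal{ub}}(\amp,\sigma;\mu^\star)\doteq\mathsf{h}(S)-\frac{1}{2}\log(2\pi e\sigma^2)$ with $S$ the maximum-entropy input under the two constraints, namely the truncated exponential $\textsf{Texp}(\amp,\mu^\star)$. Matching its mean to $\EE$ gives $\mathbb{E}[S]/\amp=\frac{1}{\mu^\star}-\frac{e^{-\mu^\star}}{1-e^{-\mu^\star}}$, which is precisely the defining equation for $\mu^\star$. Computing $\mathsf{h}(S)=\log\amp+\log\frac{1-e^{-\mu^\star}}{\mu^\star}+1-\frac{\mu^\star e^{-\mu^\star}}{1-e^{-\mu^\star}}$ and re-exponentiating the difference yields exactly \eqref{eq: peak and ave high-SNR capacity}, where the additive $1$ inside the logarithm is asymptotically immaterial.

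The hard part will be the design and optimization of the auxiliary law $R$: it must be rich enough through $(\beta_1,\beta_2)$ that the resulting divergence bound is asymptotically tight against the truncated-exponential input, yet structured so that the Gaussian convolution integrals and the tail estimates close in terms of the peak and average constraints alone. Establishing the asymptotic tightness—that the gap between this duality bound and the maximum-entropy achievable rate vanishes as SNR grows—demands careful control of the boundary and tail terms and carries most of the technical weight.
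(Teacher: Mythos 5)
Your duality-based derivation --- bounding $\mathsf{I}(X;Y)\le\mathbb{E}_X[D(W(\cdot\mid X)\,\|\,R)]$ with an exponential-profile output law on $[0,\amp]$ smoothed and given Gaussian tails, evaluating it so that only $\mathrm{Pr}(X\le\amp)=1$ and $\mathbb{E}[X]\le\EE$ enter, and then matching the resulting bound at high SNR against the max-entropy truncated-exponential input $\textsf{Texp}(\amp,\mu^\star)$ whose entropy computation you carry out correctly --- is precisely the argument of the cited reference \cite{Lapidoth2009}, which the paper itself invokes without reproving. The only small inaccuracy is your remark that the quoted $(\beta_1,\beta_2)$ arise from setting $\partial B/\partial\beta_1=\partial B/\partial\beta_2=0$: in the original proof they are convenient suboptimal choices rather than stationary points --- e.g.\ $\beta_1=\sigma\log\left(1+\frac{\amp}{\sigma}\right)$ is picked so that $\mathcal{Q}(\beta_1/\sigma)\to0$ while $\beta_1/\amp\to0$ as the SNR grows --- which is all the asymptotic-tightness argument requires.
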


\begin{figure*}[!b]
	\normalsize
	\setcounter{MYtempeqncnt}{\value{equation}}
	\setcounter{equation}{83}
	\vspace*{4pt}
	\hrulefill
	{\setlength\abovedisplayskip{4.85pt} 
	\setlength\belowdisplayskip{4.85pt}
	\begin{align}
		B(\beta_1,\beta_2) =&  \left(1-\mathcal{Q}\left(\frac{\beta_1+\left( \frac{1}{\mu^\star} - \frac{ e^{-\mu^\star} }{ 1-e^{-\mu^\star} } \right)  \amp}{\sigma}\right)-\mathcal{Q}\left(\frac{\beta_1+(1-\left( \frac{1}{\mu^\star} - \frac{ e^{-\mu^\star} }{ 1-e^{-\mu^\star} } \right) ) \amp}{\sigma}\right)\right) \nonumber\\
		& \times \log \left(\frac{\amp}{\sigma} \cdot \frac{e^{\frac{\beta_2 \beta_1}{\amp}}-e^{-\beta_2\left(1+\frac{\beta_1}{\amp}\right)}}{\sqrt{2 \pi} \beta_2\left(1-2 \mathcal{Q}\left(\frac{\beta_1}{\sigma}\right)\right)}\right) \nonumber\\
		& -\frac{1}{2}+\mathcal{Q}\left(\frac{\beta_1}{\sigma}\right)+\frac{\beta_1}{\sqrt{2 \pi} \sigma} e^{-\frac{\beta_1^2}{2 \sigma^2}} +\frac{\sigma}{\amp} \frac{\beta_2}{\sqrt{2 \pi}}\left(e^{-\frac{\beta_1^2}{2 \sigma^2}}-e^{-\frac{(\amp+\beta_1)^2}{2 \sigma^2}}\right) \nonumber\\
		& +\beta_2 \left( \frac{1}{\mu^\star} - \frac{ e^{-\mu^\star} }{ 1-e^{-\mu^\star} } \right) \left(1-2 \mathcal{Q}\left(\frac{\beta_1+\frac{\amp}{2}}{\sigma}\right)\right).\tag{\ref{eq: 1-user ub capacity bpower}a} \label{eq: 80a}
	\end{align}}
	\setcounter{equation}{\value{MYtempeqncnt}}
\end{figure*}

\subsubsection{Bounds on Capacity Region}
The inner bound on the capacity region is proposed as follows.
\begin{theorem}[\textbf{Inner Bound}]\label{theorem: inner bound peak and ave}
	When the input is subject to both peak- and average-intensity constraint in \eqref{eq:peak cons} and \eqref{eq:ave cons}, the rate pairs in the set $\mathsf{Conv}\{ \cup_{N\in\mathbb{N}^{+}} \left( R_{1}^{\textnormal{in}}(N),R_{2}^{\textnormal{in}}(N) \right) \}$ are all achievable for a two-user OI-BC, where
	{\setlength\abovedisplayskip{4.85pt} 
	\setlength\belowdisplayskip{4.85pt}
	\begin{IEEEeqnarray}{rCl}
		\subnumberinglabel{eq: peak and ave IB}
		R_{1}^{\textnormal{in}}(N) 
		&=& \frac{1}{2}\log \left( 1 + 
		\exp\left(2 - \frac{ 2\mu^\star  e^{-\frac{\mu^\star}{N}}}{ N\left(1-e^{-\frac{\mu^\star}{N}} \right) }\right) \left( \frac{1- e^{-\frac{\mu^\star}{N}} }{\mu^\star}
		\right)^2 \frac{ \amp^2 }{2\pi e \sigma_1^2} \right),\label{eq: R1 Lb peak and ave}\\
		R_{2}^{\textnormal{in}}(N) 
		&=& \frac{1}{2}\log \left(1+ \exp\left(2 - \frac{2\mu^\star e^{-\mu^\star}}{1-e^{-\mu^\star}}\right) \left( \frac{1- e^{-\mu^\star} }{\mu^\star}
		\right)^2 \frac{ \amp^2 }{2\pi e\sigma_2^2} \right)- \mathsf{C}_{\textnormal{ub}} \left(\frac{\amp}{N}, \sigma_2; \frac{\mu^\star}{N} \right). 
		\label{eq: R2 Lb peak and ave}
	\end{IEEEeqnarray}}
	\end{theorem}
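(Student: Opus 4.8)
The plan is to reuse the superposition-coding template of the proof of Theorem~\ref{theorem: inner bound only peak}, now driven by the truncated-exponential decomposition of Lemma~\ref{lemma: decom of truncated exponential} instead of the uniform one. First I would encode $M_1$ and $M_2$ independently into $X_1$ and $X_2$, taking $X_1$ to be the continuous component $\textsf{Texp}\bigl(\frac{\amp}{N},\frac{\mu^\star}{N}\bigr)$ and $X_2$ the discrete component supported on $\{\frac{n}{N}\amp\}_{n=0}^{N-1}$, both obtained from Lemma~\ref{lemma: decom of truncated exponential} with the choice $\mu=\mu^\star$. Setting $X=X_1+X_2$ and invoking that lemma gives $X\sim\textsf{Texp}(\amp,\mu^\star)$; because $\mu^\star$ solves $\frac{1}{\mu^\star}-\frac{e^{-\mu^\star}}{1-e^{-\mu^\star}}=\alpha$, this $X$ meets both the peak constraint~\eqref{eq:peak cons} and the average constraint~\eqref{eq:ave cons}. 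Instantiating $U=X_2$ in Lemma~\ref{lemma: twouser} then reduces the two achievable rates to $\mathsf{I}(X;Y_1|X_2)=\mathsf{I}(X_1;X_1+Z_1)$ and $\mathsf{I}(X_2;Y_2)=\mathsf{h}(X+Z_2)-\mathsf{h}(X_1+Z_2)$, exactly as in~\eqref{eq: 21}--\eqref{eq: 69}.

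For $R_1$ I would apply the EPI as in~\eqref{eq: EPI} to get $\mathsf{I}(X_1;X_1+Z_1)\geq\frac{1}{2}\log\bigl(1+e^{2\mathsf{h}(X_1)}/e^{2\mathsf{h}(Z_1)}\bigr)$ and then insert the closed-form differential entropy of a truncated exponential, $\mathsf{h}\bigl(\textsf{Texp}(A,\mu)\bigr)=\log\frac{A(1-e^{-\mu})}{\mu}+1-\frac{\mu e^{-\mu}}{1-e^{-\mu}}$. Evaluating this at $A=\frac{\amp}{N}$ and $\mu=\frac{\mu^\star}{N}$ and using $e^{2\mathsf{h}(Z_1)}=2\pi e\sigma_1^2$ should reproduce~\eqref{eq: R1 Lb peak and ave}; the only delicate point is tracking the $N$-scaling so that the exponent collapses to $2-\frac{2\mu^\star e^{-\mu^\star/N}}{N(1-e^{-\mu^\star/N})}$.

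For $R_2$ I would lower-bound $\mathsf{h}(X+Z_2)$ via the EPI as in~\eqref{eq2: EPI}, using $\mathsf{h}(X)=\mathsf{h}\bigl(\textsf{Texp}(\amp,\mu^\star)\bigr)$, which produces the first logarithmic term of~\eqref{eq: R2 Lb peak and ave}. The decisive step is upper-bounding the subtracted term $\mathsf{h}(X_1+Z_2)$. I would argue that $X_1\sim\textsf{Texp}\bigl(\frac{\amp}{N},\frac{\mu^\star}{N}\bigr)$ is itself a legitimate peak- and average-constrained input, with peak amplitude $\frac{\amp}{N}$ and mean $\mathbb{E}[X_1]=\frac{\amp}{N}\bigl(\frac{N}{\mu^\star}-\frac{e^{-\mu^\star/N}}{1-e^{-\mu^\star/N}}\bigr)$, and that this peak--mean pair is precisely the one whose optimal parameter in Lemma~\ref{lemma: 1-user capacity bpower} equals $\frac{\mu^\star}{N}$. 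Hence $\mathsf{I}(X_1;X_1+Z_2)\leq\mathsf{C}_{\textnormal{ub}}\bigl(\frac{\amp}{N},\sigma_2;\frac{\mu^\star}{N}\bigr)$, i.e.\ $\mathsf{h}(X_1+Z_2)\leq\mathsf{C}_{\textnormal{ub}}\bigl(\frac{\amp}{N},\sigma_2;\frac{\mu^\star}{N}\bigr)+\mathsf{h}(Z_2)$, which yields~\eqref{eq: R2 Lb peak and ave}. Combining the two rate bounds and time-sharing over $N\in\mathbb{N}^{+}$ then gives the claimed convex hull.

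The main obstacle I expect is exactly this parameter-matching for the subtracted term: one must verify that restricting the truncated exponential to $[0,\frac{\amp}{N}]$ keeps it within the same family with rescaled parameter $\frac{\mu^\star}{N}$, and that the induced mean corresponds to this parameter through the defining equation of Lemma~\ref{lemma: 1-user capacity bpower}, so that the single-user upper bound may be legitimately applied with argument $\frac{\mu^\star}{N}$. The EPI applications and the two entropy evaluations are then routine extensions of the only-peak case.
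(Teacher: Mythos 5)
Your proposal is correct and follows essentially the same route as the paper's own proof: the same superposition scheme built from Lemma~\ref{lemma: decom of truncated exponential} with $\mu=\mu^\star$, the instantiation $U=X_2$ in Lemma~\ref{lemma: twouser}, the EPI steps mirroring \eqref{eq: 21}--\eqref{eq: 69}, and the bound $\mathsf{h}(X_1+Z_2)\leq\mathsf{C}_{\textnormal{ub}}\bigl(\frac{\amp}{N},\sigma_2;\frac{\mu^\star}{N}\bigr)+\mathsf{h}(Z_2)$ on the subtracted term. Your explicit verification that the peak--mean pair of $X_1\sim\textsf{Texp}\bigl(\frac{\amp}{N},\frac{\mu^\star}{N}\bigr)$ induces exactly the parameter $\frac{\mu^\star}{N}$ through the defining equation of Lemma~\ref{lemma: 1-user capacity bpower} is in fact slightly more careful than the paper's terse justification (which misstates the mean of $X_1$ as $\rho\EE$ in a leftover from the average-only case).
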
 
\begin{proof}
	Assume $X_1$ follows $\textsf{Texp}(\frac{\amp}{N},\frac{\mu^\star}{N})$
	and $X_2$ follows
	{\setlength\abovedisplayskip{4.85pt} 
	\setlength\belowdisplayskip{4.85pt}
	\begin{IEEEeqnarray}{rCl}
		p_{X_2}(x_2) &=& \frac{ 1-e^{ -\frac{\mu^\star}{N} } }{ 1-e^{-\mu^\star} } \sum_{n=0}^{N-1} e^{-\frac{\mu^\star n}{N}} \delta \left(x_2-\frac{n}{N}\times\amp \right),
	\end{IEEEeqnarray}}where $N\in\mathbb{N}^{+}$. Substituting them into Lemma~\ref{lemma: decom of truncated exponential}, we obtain that $X$ follows $\textsf{Texp}(\amp,\mu^\star)$, which satisfies the peak- and average-intensity constraints in \eqref{eq:peak cons} and \eqref{eq:ave cons}. Furthermore, note that $X_1$ satisfies $X_1\in[0,{\amp}/{N}$] and $\mathbb{E}[X_1]=\rho\EE$, and is independent of the Gaussian noise $Z_2$, thus we can bound $\mathsf{h}(X_1+Z_2)$ by
	{\setlength\abovedisplayskip{4.85pt} 
	\setlength\belowdisplayskip{4.85pt}
	\begin{IEEEeqnarray}{rCl}
		\mathsf{h}(X_1+Z_2) \leq \mathsf{C}_{\textnormal{ub}} \left( \frac{\amp}{N},\sigma_2;\frac{\mu^\star}{N} \right) + \mathsf{h}(Z_2).
	\end{IEEEeqnarray}}Finally, following similar steps from \eqref{eq: 21} to \eqref{eq: 69}, we can complete the proof.

\end{proof}

The following outer bound has been given in \cite[Theorem 1]{Chaaban2016-2}. Here, we can also utilize the conditional EPI to provide a new proof, which is similar to the proof of Theorem~\ref{theorem: outer bound only peak} and omitted here.
\begin{lemma}[\textbf{Outer Bound}]\label{lemma: outer bound peak and ave}
	When the input is subject to both peak- and average-intensity constraints in \eqref{eq:peak cons} and \eqref{eq:ave cons}, the capacity region is outer bounded by $\cup_{\rho\in[0,1]} \left( R_{1}^{\textnormal{out}}(\rho),R_{2}^{\textnormal{out}}(\rho) \right)$ for a two-user OI-BC, where
	{\setlength\abovedisplayskip{4.85pt} 
	\setlength\belowdisplayskip{4.85pt}
	\begin{IEEEeqnarray}{rCl}
		\subnumberinglabel{eq: both power OB}
		R_{1}^{\textnormal{out}}(\rho) &=& \frac{1}{2} \log \left\{1+\frac{\sigma_2^2}{\sigma_1^2}\left(e^{2 \mathsf{C}_{\textnormal{ub}}\left(\rho\amp, \sigma_2;\mu^\star \right)}-1\right)\right\},\label{eq: R1 Ub peak and ave}\\
		R_{2}^{\textnormal{out}}(\rho) &=& \mathsf{C}_{\textnormal{ub}}\left(\amp, \sigma_2;\mu^\star\right)-\mathsf{C}_{\textnormal{ub}}\left(\rho \amp, \sigma_2;\mu^\star\right),\label{eq: R2 Ub peak and ave}
	\end{IEEEeqnarray}}
\end{lemma}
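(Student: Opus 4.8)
The plan is to reproduce the argument of Theorem~\ref{theorem: outer bound only peak} essentially verbatim, replacing the peak-only single-user bound $\mathsf{C}_{\textnormal{ub}}(\amp,\sigma)$ by the peak- and average-intensity bound $\mathsf{C}_{\textnormal{ub}}(\amp,\sigma;\mu^\star)$ of Lemma~\ref{lemma: 1-user capacity bpower}. Fix any $p_{U,X}$ admissible in Lemma~\ref{lemma: twouser}, i.e. with $X$ obeying both \eqref{eq:peak cons} and \eqref{eq:ave cons}, so that $U\to X\to Y_2$ is a Markov chain. Conditioning reduces entropy and $\mathsf{h}(Y_2|X)=\mathsf{h}(Z_2)$, which gives the lower bound $\mathsf{h}(Y_2|U)\ge\mathsf{h}(Z_2)$; on the other side, since $X$ meets both intensity constraints, $\mathsf{I}(X;Y_2)\le\mathsf{C}(\amp,\sigma_2;\mu^\star)\le\mathsf{C}_{\textnormal{ub}}(\amp,\sigma_2;\mu^\star)$, whence $\mathsf{h}(Y_2|U)\le\mathsf{h}(Y_2)\le\mathsf{C}_{\textnormal{ub}}(\amp,\sigma_2;\mu^\star)+\mathsf{h}(Z_2)$.

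Next, viewing $\mathsf{C}_{\textnormal{ub}}(\,\cdot\,,\sigma_2;\mu^\star)$ as a function of the peak amplitude (with the shape parameter $\mu^\star$, equivalently the ratio $\alpha$, held fixed, so that scaling the peak to $\rho\amp$ scales the average to $\rho\EE$), I would invoke its monotonicity and continuity, together with its vanishing as the amplitude tends to $0$, to apply the intermediate value theorem: there exists $\rho\in[0,1]$ with $\mathsf{h}(Y_2|U)=\mathsf{C}_{\textnormal{ub}}(\rho\amp,\sigma_2;\mu^\star)+\mathsf{h}(Z_2)$. Substituting this equality and the upper bound $\mathsf{h}(Y_2)\le\mathsf{C}_{\textnormal{ub}}(\amp,\sigma_2;\mu^\star)+\mathsf{h}(Z_2)$ into $R_2\le\mathsf{I}(U;Y_2)=\mathsf{h}(Y_2)-\mathsf{h}(Y_2|U)$ yields \eqref{eq: R2 Ub peak and ave} at once.

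For user~1, I would start from $R_1\le\mathsf{I}(X;Y_1|U)=\mathsf{h}(Y_1|U)-\mathsf{h}(Z_1)$ and exploit the physical degradedness $Y_2=Y_1+\widetilde{Z}_2$ with $\widetilde{Z}_2\sim\mathcal{N}(0,\sigma_2^2-\sigma_1^2)$. The conditional EPI gives $\mathsf{h}(Y_1|U)\le\tfrac12\log\!\big(e^{2\mathsf{h}(Y_2|U)}-e^{2\mathsf{h}(\widetilde{Z}_2)}\big)$; plugging in the value of $\mathsf{h}(Y_2|U)$ fixed above and simplifying $e^{2\mathsf{h}(Y_2|U)}-e^{2\mathsf{h}(\widetilde{Z}_2)}=2\pi e\big[\sigma_1^2+\sigma_2^2(e^{2\mathsf{C}_{\textnormal{ub}}(\rho\amp,\sigma_2;\mu^\star)}-1)\big]$ produces exactly \eqref{eq: R1 Ub peak and ave} after subtracting $\mathsf{h}(Z_1)=\tfrac12\log(2\pi e\sigma_1^2)$. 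Taking the union over $\rho\in[0,1]$ then completes the proof.

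The one step that needs more than a copy-paste is the intermediate value argument. In the peak-only case the monotonicity of $\mathsf{C}_{\textnormal{ub}}(\amp,\sigma)$ in $\amp$ was immediate from its closed form, whereas here $\mathsf{C}_{\textnormal{ub}}(\amp,\sigma_2;\mu^\star)$ is an infimum of the intricate expression $B(\beta_1,\beta_2)$ in \eqref{eq: 80a}. I therefore expect the main obstacle to be verifying that this infimum is continuous and nondecreasing in the peak amplitude and tends to $0$ as the amplitude does, either by exhibiting it as an upper bound built monotonically from the amplitude constraint or by a direct envelope analysis of $B$, so that the required $\rho$ genuinely exists. Once that regularity is in hand, the remainder of the derivation is identical to Theorem~\ref{theorem: outer bound only peak}.
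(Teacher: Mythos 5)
Your proposal is correct and is essentially the paper's own route: the paper states that this outer bound appears in \cite[Theorem 1]{Chaaban2016-2} and that a new proof via the conditional EPI, parallel to Theorem~\ref{theorem: outer bound only peak}, is omitted as similar --- your write-up is precisely that omitted proof, including the correct simplification $e^{2\mathsf{h}(Y_2|U)}-e^{2\mathsf{h}(\widetilde{Z}_2)}=2\pi e\bigl[\sigma_1^2+\sigma_2^2\bigl(e^{2\mathsf{C}_{\textnormal{ub}}(\rho\amp,\sigma_2;\mu^\star)}-1\bigr)\bigr]$. The regularity issue you flag (continuity, monotonicity, and vanishing of $\rho\mapsto\mathsf{C}_{\textnormal{ub}}(\rho\amp,\sigma_2;\mu^\star)$ so the intermediate-value choice of $\rho$ exists) is exactly the gap the paper leaves implicit at finite SNR --- it verifies only the high-SNR monotonicity of the variant $\mathsf{C}_{\textnormal{ub}}(\rho\amp,\sigma;\rho\mu^\star)$ in Appendix~\ref{app: monotonicity of C_mu(A,sigma)} and otherwise leans on the cited prior result --- so your anticipated check is the right and only missing ingredient.
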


The high-SNR capacity region is proposed as follows.
\begin{theorem}[\textbf{Asymptotic Capacity Region}]\label{theorem: high SNR capacity region peak and ave}
	When the input is subject to both peak- and average-intensity constraints in \eqref{eq:peak cons} and \eqref{eq:ave cons}, at high SNR, the capacity region of a two-user OI-BC asymptotically converges to the region where the rate pair $(R_1,R_2)$ satisfies
	{\setlength\abovedisplayskip{4.85pt} 
	\setlength\belowdisplayskip{4.85pt}
	\begin{IEEEeqnarray}{rCl}
		\subnumberinglabel{eq: peak and ave high-SNR capacity region}
		R_1 & \ \dot{\leq}\ & \frac{1}{2} \log \left(1+ \exp \left(2 - \frac{2\rho\mu^\star e^{-\rho\mu^\star}}{1-e^{-\rho\mu^\star}}\right) \left( \frac{1- e^{-\rho\mu^\star} }{\mu^\star}
		\right)^2 \frac{ \amp^2 }{2\pi e \sigma_1^2} \right), \label{eq: high SNR R1 peak and ave}\\
		R_2 & \ \dot{\leq}\ & \frac{1}{2}\log \left(\frac{ \exp  \left(2 - \frac{2\mu^\star e^{-\mu^\star}}{1-e^{-\mu^\star}}\right) \left( \frac{1- e^{-\mu^\star} }{\mu^\star}
		\right)^2 \amp^2 + 2\pi e\sigma_2^2 }{ \exp \left(2 - \frac{2\rho\mu^\star e^{-\rho\mu^\star}}{1-e^{-\rho\mu^\star}}\right) \left( \frac{1- e^{-\rho\mu^\star} }{\mu^\star}
		\right)^2 \amp^2 + 2\pi e\sigma_2^2 } \right), \label{eq: high SNR R2 peak and ave}
	\end{IEEEeqnarray}}with $\rho \in [0,1]$.
\end{theorem}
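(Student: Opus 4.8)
The plan is to reproduce the sandwich argument of Theorem~\ref{theorem: high SNR capacity region only peak}, squeezing the claimed region between the high-SNR forms of the outer bound in Lemma~\ref{lemma: outer bound peak and ave} and the inner bound in Theorem~\ref{theorem: inner bound peak and ave}. It is convenient to abbreviate
\[
	g(\rho) = \exp\!\left(2 - \frac{2\rho\mu^\star e^{-\rho\mu^\star}}{1-e^{-\rho\mu^\star}}\right)\left(\frac{1-e^{-\rho\mu^\star}}{\mu^\star}\right)^2,
\]
which is continuous and increasing on $[0,1]$ with $g(0)=0$, and whose value $g(1)$ is precisely the leading constant multiplying $\amp^2$ in the numerator of the claimed $R_2$. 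With this notation the claimed boundary reads $R_1 = \frac{1}{2}\log(1 + g(\rho)\frac{\amp^2}{2\pi e\sigma_1^2})$ and $R_2 = \frac{1}{2}\log\frac{g(1)\amp^2 + 2\pi e\sigma_2^2}{g(\rho)\amp^2 + 2\pi e\sigma_2^2}$.

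First I would establish the converse by computing the high-SNR form of the outer bound. Substituting the asymptotic single-user expression~\eqref{eq: peak and ave high-SNR capacity}, with $\mu^\star$ held fixed, into~\eqref{eq: both power OB} and simplifying gives $R_{1}^{\textnormal{out}}(\rho) \doteq \frac{1}{2}\log(1 + g(1)\rho^2\frac{\amp^2}{2\pi e\sigma_1^2})$ and $R_{2}^{\textnormal{out}}(\rho) \doteq \frac{1}{2}\log\frac{g(1)\amp^2 + 2\pi e\sigma_2^2}{g(1)\rho^2\amp^2 + 2\pi e\sigma_2^2}$. The key observation is that the change of variable $u = g(1)\rho^2\amp^2$ for the outer bound and $v = g(\rho)\amp^2$ for the claimed region reduces both to the \emph{same} one-parameter curve $(\frac{1}{2}\log(1 + \frac{t}{2\pi e\sigma_1^2}),\ \frac{1}{2}\log\frac{g(1)\amp^2 + 2\pi e\sigma_2^2}{t + 2\pi e\sigma_2^2})$; since $u$ and $v$ both sweep $[0, g(1)\amp^2]$ as $\rho$ ranges over $[0,1]$, the outer bound asymptotically coincides with the claimed region, so the capacity region is contained in it.

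Next I would show achievability by matching the inner bound~\eqref{eq: peak and ave IB} to the claimed boundary at $\rho = 1/N$. Using $\frac{\mu^\star}{N} = \rho\mu^\star$, the expression $R_{1}^{\textnormal{in}}(N)$ coincides verbatim with the claimed $R_1$. For $R_{2}^{\textnormal{in}}(N)$ the delicate step is the residual term $\mathsf{C}_{\textnormal{ub}}(\frac{\amp}{N}, \sigma_2; \frac{\mu^\star}{N})$, which simultaneously scales the peak to $\rho\amp$ and the parameter to $\rho\mu^\star$; inserting~\eqref{eq: peak and ave high-SNR capacity} one finds that the factor $\rho^2$ coming from $(\rho\amp)^2$ cancels the $(\rho\mu^\star)^{-2}$ buried in the leading coefficient, leaving exactly $\frac{1}{2}\log(1 + g(\rho)\frac{\amp^2}{2\pi e\sigma_2^2})$. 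Consequently $R_{2}^{\textnormal{in}}(N) \doteq \frac{1}{2}\log\frac{g(1)\amp^2 + 2\pi e\sigma_2^2}{g(\rho)\amp^2 + 2\pi e\sigma_2^2}$, so every achievable corner point lands exactly on the outer-bound curve at $\rho = 1/N$.

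Finally I would fill the continuum between these discrete corner points by the time-sharing argument already used in Theorem~\ref{theorem: high SNR capacity region only peak}. For $\rho$ bounded away from $0$ one checks that $R_1 + R_2 \doteq \frac{1}{2}\log(g(1)\frac{\amp^2}{2\pi e\sigma_1^2})$, so the limiting boundary is asymptotically the slope-$(-1)$ line; the chord joining adjacent corner points $(R_{1}^{\textnormal{in}}(N),R_{2}^{\textnormal{in}}(N))$ and $(R_{1}^{\textnormal{in}}(N+1),R_{2}^{\textnormal{in}}(N+1))$ acquires the same slope and intercept in the $\doteq$ sense, and taking the union over $N\in\mathbb{N}^{+}$ shows the convex hull covers the whole region with vanishing gap. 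I expect the main obstacle to be the bookkeeping in the inner-bound step: reconciling the outer bound's fixed-$\mu^\star$, $\rho^2$-scaled parametrization with the inner bound's $\rho\mu^\star$-scaled one, and verifying that the cancellation between $\rho^2$ and $(\rho\mu^\star)^{-2}$ produces exactly $g(\rho)$ rather than a mismatched coefficient, since any error there would break the coincidence of the two curves.
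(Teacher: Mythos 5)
Your proposal is correct, and its overall architecture (sandwiching the claimed region between the high-SNR forms of the inner and outer bounds, matching corner points at $\rho=1/N$, then filling in via the slope-$(-1)$ convex-hull argument of Theorem~\ref{theorem: high SNR capacity region only peak}) is the same as the paper's. The one genuine divergence is in the converse. The paper does \emph{not} substitute asymptotics into Lemma~\ref{lemma: outer bound peak and ave}; instead it re-runs the conditional-EPI derivation to produce a \emph{new} high-SNR outer bound parametrized from the start by $\mathsf{C}_{\textnormal{ub}}(\rho\amp,\sigma_2;\rho\mu^\star)$, i.e., with the exponential parameter scaled along with the peak. The existence of the matching $\rho$ is an intermediate-value argument that requires $\rho\mapsto\mathsf{C}_{\textnormal{ub}}(\rho\amp,\sigma_2;\rho\mu^\star)$ to be continuous, increasing, and vanishing at $\rho=0$, which is exactly what Appendix~\ref{app: monotonicity of C_mu(A,sigma)} establishes; the payoff is that the outer boundary emerges directly in your $g(\rho)$ parametrization, so the corner points of the inner bound land on it with no further work. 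You instead keep the fixed-$\mu^\star$ outer bound of Lemma~\ref{lemma: outer bound peak and ave}, obtain the curve in the variable $u=g(1)\rho^2\amp^2$, and reconcile it with the claimed $v=g(\rho)\amp^2$ parametrization by observing both sweep the same interval $[0,g(1)\amp^2]$ along the common curve
\[
\left(\tfrac{1}{2}\log\Bigl(1+\tfrac{t}{2\pi e\sigma_1^2}\Bigr),\ \tfrac{1}{2}\log\tfrac{g(1)\amp^2+2\pi e\sigma_2^2}{t+2\pi e\sigma_2^2}\right),\qquad t\in[0,g(1)\amp^2].
\]
This is a valid and arguably lighter route, since it reuses an already-proved outer bound rather than re-deriving one; but note that the properties of $g$ you assert in passing (continuity, $g(0)=0$, and in particular $g(\rho)\leq g(1)$ on $[0,1]$, without which the reparametrized sweep could overshoot the curve) are not free --- they are precisely the content of Appendix~\ref{app: monotonicity of C_mu(A,sigma)}, so your proof needs that appendix just as much as the paper's does. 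Your achievability step, including the verification that the $\rho^2$ from $(\rho\amp)^2$ cancels the $(\rho\mu^\star)^{-2}$ in \eqref{eq: peak and ave high-SNR capacity} to leave exactly $g(\rho)$ in the residual term $\mathsf{C}_{\textnormal{ub}}\bigl(\tfrac{\amp}{N},\sigma_2;\tfrac{\mu^\star}{N}\bigr)$, checks out and coincides with the paper's use of Theorem~\ref{theorem: inner bound peak and ave}.
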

\begin{proof}
	The proof follows similar arguments as the proof of Theorem~\ref{theorem: high SNR capacity region only peak}. Before that, we need to derive a new outer bound which is valid in the high SNR regime. Note that $\mathsf{h}(Y_2|U)$ can be bounded by
	{\setlength\abovedisplayskip{4.85pt} 
	\setlength\belowdisplayskip{4.85pt}
	\begin{IEEEeqnarray}{rCl}
		\mathsf{h}(Y_2|U) &\geq& \mathsf{h}(Z_2),\label{eq: 79}\\
		\mathsf{h}(Y_2|U) &\leq& \mathsf{C}_{\textnormal{ub}}(\amp,\sigma_2;\mu^\star) + \mathsf{h}(Z_2). \label{eq: 80}
	\end{IEEEeqnarray}}The function $\mathsf{C}_{\textnormal{ub}}(\rho\amp,\sigma_2;\rho\mu^\star)$, $\rho\in[0,1]$, is monotonically increasing with respect to $\rho$ and approaches zeros when $\rho$ tends to $0$ at high SNR. (More details about the monotonicity of $\mathsf{C}_{\textnormal{ub}}(\rho\amp,\sigma_2;\rho\mu^\star)$ can been found in Appendix~\ref{app: monotonicity of C_mu(A,sigma)}). Combined with \eqref{eq: 79} and \eqref{eq: 80}, there exists $\rho\in[0,1]$ such that
	{\setlength\abovedisplayskip{4.85pt} 
	\setlength\belowdisplayskip{4.85pt}
	\begin{IEEEeqnarray}{rCl}
		\mathsf{h}(Y_2|U) = \mathsf{C}_{\textnormal{ub}}(\rho\amp,\sigma_2;\rho\mu^\star) + \mathsf{h}(Z_2). 
	\end{IEEEeqnarray}}Following similar arguments from \eqref{eq: 34} to \eqref{eq: 40}, we obtain that at high SNR,
	{\setlength\abovedisplayskip{4.85pt} 
	\setlength\belowdisplayskip{4.85pt}
	\begin{IEEEeqnarray}{rCl}
	R_1 
	&\ \dot{\leq}\ & \frac{1}{2} \log \left(1+ \exp \left(2 - \frac{2\rho\mu^\star e^{-\rho\mu^\star}}{1-e^{-\rho\mu^\star}}\right) \left( \frac{1- e^{-\rho\mu^\star} }{\mu^\star}
	\right)^2 \frac{ \amp^2 }{2\pi e \sigma_1^2} \right),\\
	R_2 
	&\ \dot{\leq}\ &  \frac{1}{2}\log \left(\frac{ \exp  \left(2 - \frac{2\mu^\star e^{-\mu^\star}}{1-e^{-\mu^\star}}\right) \left( \frac{1- e^{-\mu^\star} }{\mu^\star}
		\right)^2 \amp^2 + 2\pi e\sigma_2^2 }{ \exp \left(2 - \frac{2\rho\mu^\star e^{-\rho\mu^\star}}{1-e^{-\rho\mu^\star}}\right) \left( \frac{1- e^{-\rho\mu^\star} }{\mu^\star}
		\right)^2 \amp^2 + 2\pi e\sigma_2^2 } \right). \label{eq: 120}
	\end{IEEEeqnarray}}Finally, combining the above newly derived outer bound and the inner bound in Theorem~\ref{theorem: inner bound peak and ave}, we can complete the proof.
	
\end{proof}

The derived bounds on the capacity region are shown in Fig.~\ref{fig: two-user bpower}, where we assume $\sigma_2 = 2\sigma_1$, $\alpha=0.4$, and $\mathrm{ASNR}_k=\frac{\amp}{\sigma_k}$, $k\in\{1,2\}$. 

\begin{figure}[!htbp]
	\centering
	\includegraphics[width=3.7in]{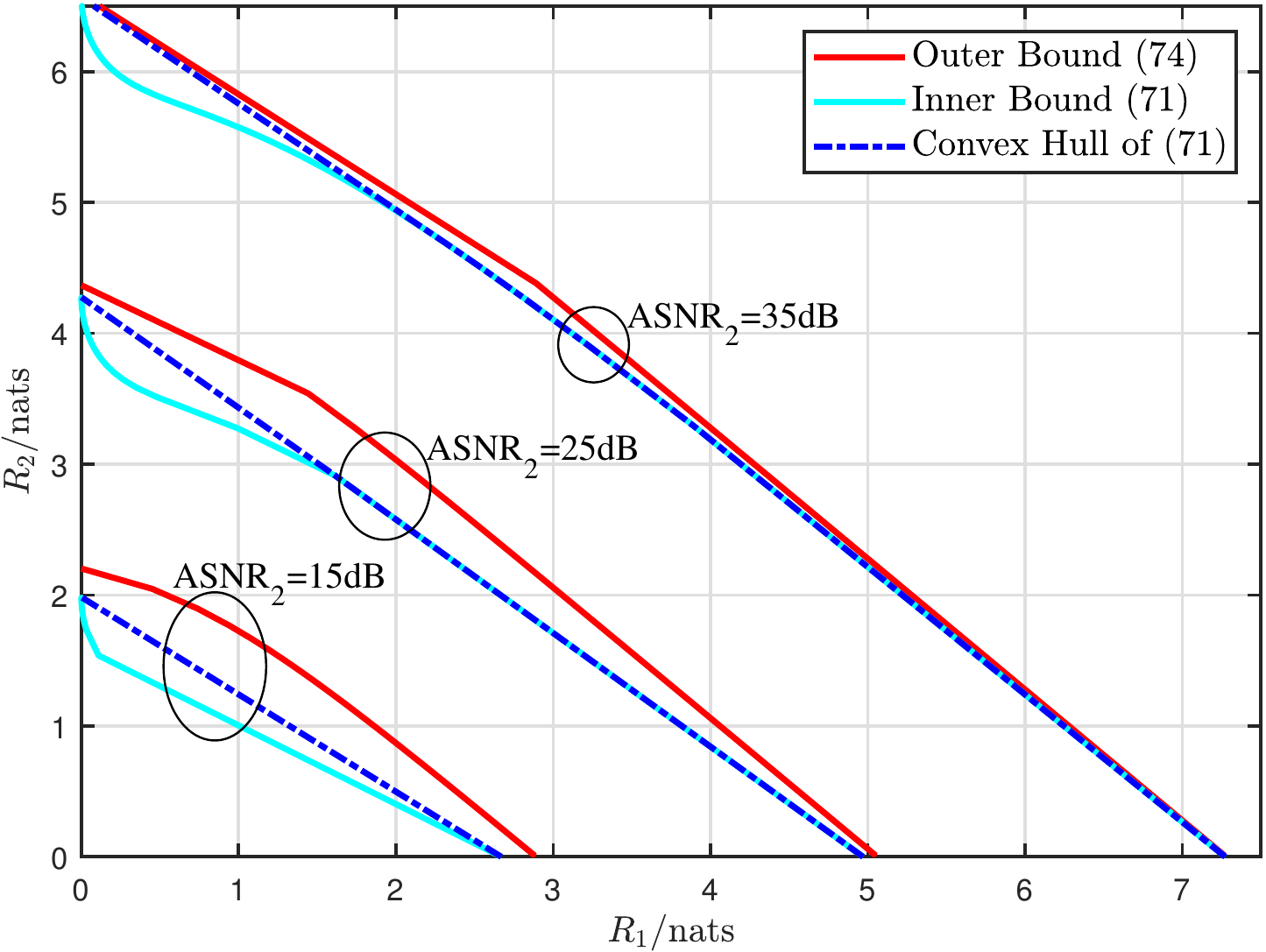}\\
	\caption{Bounds on capacity region of $2$-user OI-BC with peak- and average-intensity constraint when $\alpha=0.4$.}
	\label{fig: two-user bpower}
	\vspace{-0.7cm}
\end{figure}

\section{Capacity Region Characterization of $K$-User OI-BCs}\label{sec: K-user capacity region}
In this section, we consider three input constraints and extend the capacity regions of two-user OI-BCs to $K$-user OI-BCs. In each case of input constraints, the inner bound is derived first and then the outer bound. Finally, the high-SNR capacity region is characterized based on these bounds.

\subsection{Peak-Intensity Constrained OI-BC}
The inner bound on the capacity region is derived based on the SC scheme.
\begin{theorem}[\textbf{Inner Bound}]\label{theorem: K-user inner bound only peak}
	When the input is only subject to the peak-intensity constraint in \eqref{eq:peak cons}, the rate tuples in the set $\mathsf{Conv}\{ \cup_{\bm{N}\in\mathfrak{D}_{\bm{N}}} \left( R_{1}^{\textnormal{in}}(\bm{N}),R_{2}^{\textnormal{in}}(\bm{N}),\cdots,R_{K}^{\textnormal{in}}(\bm{N}) \right) \}$ are all achievable for a $K$-user OI-BC, where
	{\setlength\abovedisplayskip{4.85pt} 
	\setlength\belowdisplayskip{4.85pt}
	\begin{IEEEeqnarray}{rCl}
		R_{k}^{\textnormal{in}}(\bm{N}) &=& \frac{1}{2} \log \left( 1+\frac{\amp^2}{2 \pi e (\prod_{n=k}^K {N}_{n})^2\sigma_k^2} \right) -\mathcal{C}_{\textnormal{up}}\left( \frac{\amp}{\prod_{n=k-1}^K {N}_{n}},\sigma_k\right), \ k\in[K], \qquad\label{eq: Rk Lb only peak K-user}
	\end{IEEEeqnarray}}with $\mathfrak{D}_{\bm{N}}=\{ \bm{N}=[N_0,N_1,\cdots,N_K]: {N}_{n}\in \mathbb{N}^{+}, \ \forall n\in \{0\cup[K]\}, \ {N}_0\gg 0,\ {N}_K=1 \}$.\footnote{We assume that $N_0\gg0$ is equivalent to $\frac{1}{{N}_0}=0$.}
\end{theorem}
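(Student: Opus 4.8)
The plan is to generalize the two-user superposition argument of Theorem~\ref{theorem: inner bound only peak} by applying Lemma~\ref{lemma: decom of uniform} in a nested, telescoping fashion. First I would fix the bookkeeping by writing $P_k \triangleq \prod_{n=k}^{K} N_n$ for $k\in[K]$, so that $P_K = N_K = 1$ and, crucially, $P_{k-1} = N_{k-1} P_k = \prod_{n=k-1}^{K} N_n$. I then construct mutually independent layers $W_1,\ldots,W_K$, where $W_k$ is taken to be the discrete ($S_2$) component produced when Lemma~\ref{lemma: decom of uniform} decomposes $\textsf{U}[0,\amp/P_k]$ with parameter $N_{k-1}$. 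Setting $T_k \triangleq \sum_{j=1}^{k} W_j$, a short induction on $k$ shows via Lemma~\ref{lemma: decom of uniform} that $T_k \sim \textsf{U}[0,\amp/P_k]$: indeed, if $T_{k-1}\sim\textsf{U}[0,\amp/P_{k-1}]$ is the continuous component and $W_k$ the independent discrete component of the decomposition of $\textsf{U}[0,\amp/P_k]$, then $T_k = T_{k-1}+W_k\sim\textsf{U}[0,\amp/P_k]$. In particular $T_K = X \sim \textsf{U}[0,\amp]$ meets the peak constraint~\eqref{eq:peak cons}, while the convention $N_0\gg 0$ is used only at the innermost layer to drive the residual $T_0\to 0$.

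Next I would take the auxiliary variables of Lemma~\ref{lemma: Kuser} to be the coarse partial sums $U_k \triangleq \sum_{j=k}^{K} W_j$, so that $U_1 = X$, $U_{K+1}=\emptyset$, and $U_k = W_k + U_{k+1}$ with $W_k$ independent of $U_{k+1}$; this exhibits exactly the Markov structure $U_K\to U_{K-1}\to\cdots\to U_1=X\to Y_1\to\cdots\to Y_K$ required by Lemma~\ref{lemma: Kuser}. Writing $X = T_{k-1} + U_k$ and using that $(T_{k-1},W_k,Z_k)$ are independent of $U_{k+1}$, I would evaluate each achievable rate as
\begin{IEEEeqnarray}{rCl}
\mathsf{I}(U_k;Y_k|U_{k+1}) = \mathsf{h}(T_k+Z_k) - \mathsf{h}(T_{k-1}+Z_k),\quad k\in[K],\nonumber
\end{IEEEeqnarray}
since conditioning on $U_{k+1}$ leaves the random part $T_k+Z_k$, while additionally conditioning on $U_k$ (equivalently on $W_k$) leaves $T_{k-1}+Z_k$.

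I would then bound the two terms exactly as in the proof of Theorem~\ref{theorem: inner bound only peak}. The EPI gives $\mathsf{h}(T_k+Z_k) \geq \tfrac12\log\!\big(e^{2\mathsf{h}(T_k)} + e^{2\mathsf{h}(Z_k)}\big)$, where $\mathsf{h}(T_k)=\log(\amp/P_k)$ yields the first term of~\eqref{eq: Rk Lb only peak K-user}; and, since $T_{k-1}$ is supported on $[0,\amp/P_{k-1}]$ and independent of $Z_k$, Lemma~\ref{lemma: 1-user capacity only peak} gives $\mathsf{h}(T_{k-1}+Z_k) \leq \mathsf{C}_{\textnormal{ub}}(\amp/P_{k-1},\sigma_k) + \mathsf{h}(Z_k)$, which is the subtracted term of~\eqref{eq: Rk Lb only peak K-user} after recalling $P_{k-1}=\prod_{n=k-1}^{K}N_n$. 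A standard time-sharing argument over all $\bm{N}\in\mathfrak{D}_{\bm{N}}$ then delivers the convex hull. The main obstacle is not analytical but structural: one must verify that the nested applications of Lemma~\ref{lemma: decom of uniform} produce genuinely \emph{mutually} independent layers whose partial sums $T_k$ have the claimed uniform marginals on precisely the supports $[0,\amp/P_k]$, and that the two product indices $\prod_{n=k}^{K}N_n$ and $\prod_{n=k-1}^{K}N_n$ in~\eqref{eq: Rk Lb only peak K-user} align with $P_k$ and $P_{k-1}$, including the degenerate boundary cases $k=K$ (where $P_K=1$ gives $T_K=X$) and $k=1$ (where $N_0\gg 0$ makes $\mathsf{C}_{\textnormal{ub}}(\amp/P_0,\sigma_1)\to 0$). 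Once the layering is set up correctly, each per-user rate computation collapses to the two-user calculation already established.
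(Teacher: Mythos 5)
Your proposal is correct and takes essentially the same route as the paper's proof: the paper likewise builds $X=\sum_{n=1}^{K}X_n$ from Lemma~\ref{lemma: decom of uniform} with a continuous innermost layer $X_1\sim\textsf{U}\bigl[0,\amp/\prod_{n=1}^{K}N_n\bigr]$ and discrete layers $X_k$ (your $W_k$), instantiates $U_k=\sum_{n=k}^{K}X_n$ in Lemma~\ref{lemma: Kuser}, and bounds $\mathsf{h}\bigl(\sum_{n=1}^{k}X_n+Z_k\bigr)-\mathsf{h}\bigl(\sum_{n=1}^{k-1}X_n+Z_k\bigr)$ by the EPI together with $\mathsf{h}\bigl(\sum_{n=1}^{k-1}X_n+Z_k\bigr)\leq \mathsf{C}_{\textnormal{ub}}\bigl(\amp/\prod_{n=k-1}^{K}N_n,\sigma_k\bigr)+\mathsf{h}(Z_k)$, followed by time sharing. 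The only cosmetic difference is that you realize the innermost layer as the $N_0$-point discrete component plus a vanishing residual $T_0$, which coincides with the paper's continuous $X_1$ under the stated convention $1/N_0=0$.
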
 
\begin{proof}	
	We first encode the messages $M_1$, $M_2$, $\cdots$, $M_K$ independently into signals $X_1$, $X_2$, $\cdots$, $X_K$, where $X_1$ follows $\textsf{U} \left[0,\frac{\amp}{ \prod_{n=1}^K {N}_n} \right]$
	and $\forall k\in\{2,3,\cdots,K\}$:
	{\setlength\abovedisplayskip{4.85pt} 
	\setlength\belowdisplayskip{4.85pt}
	\begin{IEEEeqnarray}{rCl}
		p_{X_k}(x_k) = \frac{1}{{N}_{k-1}} \sum_{i=0}^{{N}_{k-1}-1} \delta \left( x_k - \frac{i}{{N}_{k-1}} \times \frac{ \amp }{ \prod_{n=k}^K {N}_n } \right).
	\end{IEEEeqnarray}}Then we adopt an SC scheme such that $X=\sum_{n=1}^K X_n$. Denote 
	{\setlength\abovedisplayskip{4.85pt} 
	\setlength\belowdisplayskip{4.85pt}
	\begin{IEEEeqnarray}{rCl}
		X_{k,\textnormal{sum}}=\sum_{n=1}^k X_n,\quad k\in[K].\label{eq: X_sum def}
	\end{IEEEeqnarray}}Combined with Lemma~\ref{lemma: decom of uniform}, we obtain that $X_{k,\textnormal{sum}}$ follows $\textsf{U} \left[ 0, \frac{\amp}{\prod_{n=k}^K {N}_n} \right]$ and $X$ follows $\textsf{U}$$[0,\amp]$, which satisfies the peak-intensity constraint in \eqref{eq:peak cons}.
	In Lemma \ref{lemma: Kuser}, we instantiate $U_k$ into $U_k=\sum_{n=k}^K X_n$. Therefore, we can compute the achievable rate $\mathsf{I}\bigl(\sum_{n=k}^K X_n;Y_k \bigm| \sum_{n=k+1}^K X_n\bigr)$ to be as the inner bounds on $R_k$, $k\in[K]$. To do it, we simplify $\mathsf{I}\bigl(\sum_{n=k}^K X_n;Y_k \bigm| \sum_{n=k+1}^K X_n\bigr)$ as
	{\setlength\abovedisplayskip{4.85pt} 
	\setlength\belowdisplayskip{4.85pt}
	\begin{IEEEeqnarray}{rCl}
	\mathsf{I}\biggl(\sum_{n=k}^K X_n;Y_k \biggm| \sum_{n=k+1}^K X_n\biggr) 	
		&=& \mathsf{I} \left( X_k; \sum\nolimits_{n=1}^k X_n +Z_k \right)\\
		&=& \mathsf{h} \left( \sum\nolimits_{n=1}^k X_n +Z_k \right) - \mathsf{h} \left( \sum\nolimits_{n=1}^{k-1} X_n +Z_k \right) \\
		&\geq& \frac{1}{2} \log \left( e^{ 2\mathsf{h} \left(\sum\nolimits_{n=1}^k X_n \right) } + e^{ 2 \mathsf{h}(Z_k)} \right) - \mathsf{h} \left( \sum\nolimits_{n=1}^{k-1} X_n +Z_k \right) \label{eq: 150}\\
		&\geq& \frac{1}{2} \log \left( 1+\frac{\amp^2}{2 \pi e (\prod_{n=k}^K {N}_{n})^2\sigma_k^2} \right) -\mathsf{C}_{\textnormal{ub}}\left( \frac{\amp}{\prod_{n=k-1}^K {N}_{n}},\sigma_k\right), \quad\label{eq: I(Uk;Yk|U_{k+1}) inequality}
	\end{IEEEeqnarray}}where \eqref{eq: 150} follows from the EPI; and \eqref{eq: I(Uk;Yk|U_{k+1}) inequality} follows from the fact that $\sum\nolimits_{n=1}^{k-1} X_n$ is limited in $\left[ \frac{\amp}{\prod_{n=k-1}^K {N}_{n}} \right]$ and independent of the Gaussian noise $Z_k$. Combined with \eqref{eq: I(Uk;Yk|U_{k+1}) inequality}, the proof is concluded.
\end{proof}

The outer bound on the capacity region is derived based on the conditional EPI.
\begin{theorem}[\textbf{Outer Bound}]\label{theorem: K-user outer bound only peak}
	When the input is only subject to the peak-intensity constraint in \eqref{eq:peak cons}, the capacity region for a $K$-user OI-BC is outer bounded by\\
	$\cup_{\bm{\rho}\in\mathfrak{D}_{\bm{\rho}}} \left( R_{1}^{\textnormal{out}}(\bm{\rho}),R_{2}^{\textnormal{out}}(\bm{\rho}),\cdots,R_{K}^{\textnormal{out}}(\bm{\rho}) \right)$, where
	{\setlength\abovedisplayskip{4.85pt} 
	\setlength\belowdisplayskip{4.85pt}
	\begin{IEEEeqnarray}{rCl}
		R_{k}^{\textnormal{out}}(\bm{\rho}) &=& \frac{1}{2} \log \left\{\frac{ \sigma_k^2 + \sigma_K^2 \left( e^{ 2\mathsf{C}_{\textnormal{ub}}({\rho}_k\amp,\sigma_K) } -1 \right) }{ \sigma_k^2 + \sigma_K^2 \left( e^{ 2\mathsf{C}_{\textnormal{ub}}({\rho}_{k-1}\amp,\sigma_K) } -1 \right) } \right\},\quad k\in[K],\label{eq: Ub only peak K-user}
	\end{IEEEeqnarray}}with $\mathfrak{D}_{\bm{\rho}}=\left\{\bm{\rho}=[{\rho}_0,{\rho}_1,\cdots,{\rho}_K]: {\rho}_k \in[0,1], \ {\rho}_{k-1}\leq {\rho}_{k}, \ \forall k\in[K], \ {\rho}_0=0,\ {\rho}_K=1\right\}$.
\end{theorem}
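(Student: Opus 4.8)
The plan is to run a Bergmans-type converse that generalizes the proof of Theorem~\ref{theorem: outer bound only peak}. Starting from Lemma~\ref{lemma: Kuser}, any achievable tuple is realized by some auxiliaries $U_1=X\to U_2\to\cdots\to U_K$ (with $U_{K+1}=\emptyset$) for which $R_k\le\mathsf{I}(U_k;Y_k|U_{k+1})=\mathsf{h}(Y_k|U_{k+1})-\mathsf{h}(Y_k|U_k)$, the last equality using the degraded chain $U_{k+1}\to U_k\to X\to Y_k$, so that $\mathsf{h}(Y_k|U_k,U_{k+1})=\mathsf{h}(Y_k|U_k)$. First I would record the two endpoint facts that will pin $\rho_0=0$ and $\rho_K=1$: the full-information value $\mathsf{h}(Y_1|U_1)=\mathsf{h}(Y_1|X)=\mathsf{h}(Z_1)$, and the no-information bound $\mathsf{h}(Y_K|U_{K+1})=\mathsf{h}(Y_K)\le\mathsf{C}_{\textnormal{ub}}(\amp,\sigma_K)+\mathsf{h}(Z_K)$.

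The core idea is to introduce $\rho_1,\ldots,\rho_{K-1}$ through the conditional EPI applied between \emph{consecutive} outputs. Since $Y_{k+1}=Y_k+\widetilde{Z}_{k+1}$ with $\widetilde{Z}_{k+1}\sim\mathcal{N}(0,\sigma_{k+1}^2-\sigma_k^2)$ independent of $U_{k+1}$, the conditional EPI gives
\[ e^{2\mathsf{h}(Y_{k+1}|U_{k+1})}\ \ge\ e^{2\mathsf{h}(Y_k|U_{k+1})}+2\pi e(\sigma_{k+1}^2-\sigma_k^2). \]
I would then define $\rho_k\in[0,1]$ as the solution of $\sigma_K^2 e^{2\mathsf{C}_{\textnormal{ub}}(\rho_k\amp,\sigma_K)}=\tfrac{1}{2\pi e}e^{2\mathsf{h}(Y_k|U_{k+1})}+\sigma_K^2-\sigma_k^2$, which is unique because $\mathsf{C}_{\textnormal{ub}}(\cdot,\sigma_K)$ is continuous and increasing. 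This single choice does double duty: it forces the numerator of user $k$ to satisfy $\mathsf{h}(Y_k|U_{k+1})=\tfrac12\log\{2\pi e[\sigma_k^2+\sigma_K^2(e^{2\mathsf{C}_{\textnormal{ub}}(\rho_k\amp,\sigma_K)}-1)]\}$ with equality, while the displayed EPI makes the \emph{backoff} term of user $k{+}1$ obey the matching lower bound $\mathsf{h}(Y_{k+1}|U_{k+1})\ge\tfrac12\log\{2\pi e[\sigma_{k+1}^2+\sigma_K^2(e^{2\mathsf{C}_{\textnormal{ub}}(\rho_k\amp,\sigma_K)}-1)]\}$. Subtracting the numerator bound at level $k$ from the backoff bound at level $k$ (with the endpoint identities closing the chain at $k=1$ and $k=K$) yields $R_k\le R_k^{\textnormal{out}}(\bm\rho)$ of~\eqref{eq: Ub only peak K-user} for every $k$. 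Two admissibility checks then finish the argument: $\rho_k$ lands in $[0,1]$ because its defining right-hand side is $\ge\sigma_K^2$ (as $\mathsf{h}(Y_k|U_{k+1})\ge\mathsf{h}(Z_k)$) and $\le\sigma_K^2 e^{2\mathsf{C}_{\textnormal{ub}}(\amp,\sigma_K)}$ (apply the EPI for $Y_k\to Y_K$ given $U_{k+1}$ and use $\mathsf{h}(Y_K|U_{k+1})\le\mathsf{h}(Y_K)\le\mathsf{C}_{\textnormal{ub}}(\amp,\sigma_K)+\mathsf{h}(Z_K)$); and the ordering $\rho_{k-1}\le\rho_k$ follows by combining $\mathsf{h}(Y_k|U_{k+1})\ge\mathsf{h}(Y_k|U_k)$ with the EPI for $Y_{k-1}\to Y_k$ given $U_k$, which shows the defining right-hand sides are nondecreasing in $k$. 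Hence $\bm\rho\in\mathfrak{D}_{\bm\rho}$.

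The hard part will be the intermediate backoff terms $\mathsf{h}(Y_k|U_k)$ for $2\le k\le K-1$, which simply do not appear in the two-user case (there the only backoff is $\mathsf{h}(Y_1|X)=\mathsf{h}(Z_1)$, a constant). The naive move of parametrizing a single conditional entropy $\mathsf{h}(Y_K|U_k)$ and reusing it fails: the conditional EPI would then bound $\mathsf{h}(Y_k|U_k)$ only from \emph{above}, the wrong direction for upper-bounding a difference, so the per-user bound cannot be closed and the resulting region is strictly looser than~\eqref{eq: Ub only peak K-user}. The resolution, and the step where I expect the real work to sit, is exactly the consecutive-user bookkeeping above: each $\rho_k$ must simultaneously pin user $k$'s numerator and, through the EPI linking $Y_k$ and $Y_{k+1}$, lower-bound user $k{+}1$'s backoff, so that the telescoping is consistent. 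Checking that one common $\rho_k$ meets both demands amounts to verifying that the feasible interval for $\sigma_K^2 e^{2\mathsf{C}_{\textnormal{ub}}(\rho_k\amp,\sigma_K)}$ is nonempty, which is precisely what the conditional EPI guarantees.
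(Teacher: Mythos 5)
Your proof is correct and uses the same raw ingredients as the paper's --- Lemma~\ref{lemma: Kuser}, the conditional EPI between consecutive degraded outputs, the entropy sandwich furnished by $\mathsf{h}(Z_k)$ and $\mathsf{C}_{\textnormal{ub}}(\amp,\sigma_K)+\mathsf{h}(Z_K)$, and the continuity and monotonicity of $\mathsf{C}_{\textnormal{ub}}$ to extract each $\rho_k$ --- but it arranges them in a dual way, and the difference is genuine. The paper anchors its \emph{equalities} at the backoff terms $\mathsf{h}(Y_k|U_k)$: it starts from $\mathsf{h}(Y_K|U_K)=\mathsf{C}_{\textnormal{ub}}(\rho_{K-1}\amp,\sigma_K)+\mathsf{h}(Z_K)$ and runs a downward induction (\eqref{eq: h(Yk|Uk) equality}--\eqref{eq: h(Y{i-1}|U{i-1}) equality}), at each step using the EPI in the noise-removal direction, $\mathsf{h}(Y_{i-1}|U_{i-1})\le\mathsf{h}(Y_{i-1}|U_i)\le\frac{1}{2}\log\bigl(e^{2\mathsf{h}(Y_i|U_i)}-2\pi e(\sigma_i^2-\sigma_{i-1}^2)\bigr)$, to trap $\mathsf{h}(Y_{i-1}|U_{i-1})$ in an interval and define $\rho_{i-2}\in[0,\rho_{i-1}]$, so the ordering and range conditions of $\mathfrak{D}_{\bm{\rho}}$ hold by construction; the numerators $\mathsf{h}(Y_k|U_{k+1})$ are then only upper-bounded by one further EPI step \eqref{eq: EPI3}. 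You instead pin the \emph{numerators} $\mathsf{h}(Y_k|U_{k+1})$ with equality, defining each $\rho_k$, $1\le k\le K-1$, independently and without induction, and use the EPI in the noise-addition direction to lower-bound the next backoff $\mathsf{h}(Y_{k+1}|U_{k+1})$ --- the correct direction for upper-bounding $R_{k+1}=\mathsf{h}(Y_{k+1}|U_{k+2})-\mathsf{h}(Y_{k+1}|U_{k+1})$, so your telescoping closes. The price of your arrangement is exactly the two post hoc admissibility checks you identify, and both are carried out correctly: $\rho_k\in[0,1]$ follows from $\mathsf{h}(Y_k|U_{k+1})\ge\mathsf{h}(Z_k)$ together with the EPI chained from $Y_k$ up to $Y_K$ given $U_{k+1}$, and $\rho_{k-1}\le\rho_k$ follows from $\mathsf{h}(Y_k|U_{k+1})\ge\mathsf{h}(Y_k|U_k,U_{k+1})=\mathsf{h}(Y_k|U_k)$ plus one EPI step, which gives $\mathsf{C}_{\textnormal{ub}}(\rho_k\amp,\sigma_K)\ge\mathsf{C}_{\textnormal{ub}}(\rho_{k-1}\amp,\sigma_K)$ and hence $\rho_k\ge\rho_{k-1}$ by monotonicity. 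What your version buys is a self-contained, per-index definition of each $\rho_k$; what the paper's version buys is that membership in $\mathfrak{D}_{\bm{\rho}}$ never needs separate verification. Your endpoint handling is also sound: $\rho_0=0$ comes from $\mathsf{h}(Y_1|U_1)=\mathsf{h}(Z_1)$, and $\rho_K=1$ enters user $K$'s numerator through the inequality $\mathsf{h}(Y_K)\le\mathsf{C}_{\textnormal{ub}}(\amp,\sigma_K)+\mathsf{h}(Z_K)$ rather than an equality, which is harmless because the numerator is only ever needed as an upper bound and \eqref{eq: Ub only peak K-user} is monotone in $\rho_k$.
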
 
\begin{proof}
	We resort to Lemma \ref{lemma: Kuser} to derive the upper bound on the achievable rate $R_K$ of user $K$ first and then the achievable rates $R_k$s of the rest of users, $k\in[K-1]$. When $U_K \rightarrow X \rightarrow Y_K$ forms a Markov chain, it follows that
	{\setlength\abovedisplayskip{4.85pt} 
	\setlength\belowdisplayskip{4.85pt}
	\begin{IEEEeqnarray}{rCl}
		\mathsf{h}(Y_K|U_K)
		&\geq& \mathsf{h}(Y_K|X) 
		= \mathsf{h}(Z_K).
	\end{IEEEeqnarray}}Furthermore, we have
	{\setlength\abovedisplayskip{4.85pt} 
	\setlength\belowdisplayskip{4.85pt}
	\begin{IEEEeqnarray}{rCl}
		\mathsf{h}(Y_K|U_K) &\leq& \mathsf{h}(Y_K) \leq \mathsf{C}_{\textnormal{ub}}(\amp,\sigma_K) + \mathsf{h}(Z_K). \label{eq: h(YK|UK) UB}
	\end{IEEEeqnarray}}Hence, there exists ${\rho}_{K-1}\in[0,1]$ such that
	{\setlength\abovedisplayskip{4.85pt} 
	\setlength\belowdisplayskip{4.85pt}
	\begin{IEEEeqnarray}{rCl}
		\mathsf{h}(Y_K|U_K) &=& \mathsf{C}_{\textnormal{ub}}({\rho}_{K-1}\amp,\sigma_K) + \mathsf{h}(Z_K)\\
		&=& \frac{1}{2} \log \left( 2\pi e\sigma_K^2 + 2\pi e\sigma_K^2 ( e^{ 2 \mathsf{C}_{\textnormal{ub}}({\rho}_{K-1}\amp,\sigma_K) } -1 ) \right). \label{eq: h(YK|UK) equality}
	\end{IEEEeqnarray}}By Lemma \ref{lemma: Kuser}, the achievable rate of user $K$ can be upper bounded by
	{\setlength\abovedisplayskip{4.85pt} 
	\setlength\belowdisplayskip{4.85pt}
	\begin{IEEEeqnarray}{rCl}
		R_K &\leq& \mathsf{I}(U_K;Y_K)\\
		&=& \mathsf{h}(Y_K) - \mathsf{h}(Y_K|U_K)\\
		&\leq& \mathsf{C}_{\textnormal{ub}}(\amp,\sigma_K) + \frac{1}{2} \log \left( 2\pi e\sigma_K^2 \right) -\frac{1}{2} \log \left( 2\pi e\sigma_K^2 + 2\pi e\sigma_K^2 ( e^{ 2 \mathsf{C}_{\textnormal{ub}}({\rho}_{K-1}\amp,\sigma_K) } -1 ) \right)\\
		&=& \frac{1}{2} \log \left\{\frac{ \sigma_K^2 + \sigma_K^2 \left( e^{ 2\mathsf{C}_{\textnormal{ub}}({\rho}_K\amp,\sigma_K) } -1 \right) }{ \sigma_K^2 + \sigma_K^2 \left( e^{ 2\mathsf{C}_{\textnormal{ub}}({\rho}_{K-1}\amp,\sigma_K) } -1 \right) } \right\}. \label{eq: RK inequality2}
	\end{IEEEeqnarray}}Besides, the achievable rate $R_k$ of user $k$, $k\in[K-1]$, can be upper bounded by
	{\setlength\abovedisplayskip{4.85pt} 
	\setlength\belowdisplayskip{4.85pt}
	\begin{IEEEeqnarray}{rCl}
		R_k &\leq& \mathsf{I}(U_k;Y_k|U_{k+1})\\
		&=& \mathsf{h}(Y_k|U_{k+1}) - \mathsf{h}(Y_k|U_{k}),\quad k\in[K-1],\label{eq: Rk inequality}
	\end{IEEEeqnarray}}where \eqref{eq: Rk inequality} holds since $U_{k+1}\rightarrow U_{k}\rightarrow Y_{k}$ forms a Markov chain. To derive an upper bound on \eqref{eq: Rk inequality}, we first analyze $\mathsf{h}(Y_k|U_{k})$ and then $\mathsf{h}(Y_k|U_{k+1})$.

To analyze $\mathsf{h}(Y_k|U_k)$, we assume $\forall k\in[K]$,
	{\setlength\abovedisplayskip{4.85pt} 
	\setlength\belowdisplayskip{4.85pt}
	\begin{IEEEeqnarray}{rCl}
		\mathsf{h}(Y_k|U_k) = \frac{1}{2} \log \left( 2\pi e\sigma_k^2 + 2\pi e\sigma_K^2 ( e^{ 2 \mathsf{C}_{\textnormal{ub}}({\rho}_{k-1}\amp,\sigma_K) } -1 ) \right). \label{eq: h(Yk|Uk) equality}
	\end{IEEEeqnarray}}From \eqref{eq: h(YK|UK) equality}, we find that \eqref{eq: h(Yk|Uk) equality} is true if $k=K$. Then, we fix a particular $i\in\{2,\cdots,K\}$ and assume \eqref{eq: h(Yk|Uk) equality} is true if $k=i$, i.e.,
	{\setlength\abovedisplayskip{4.85pt} 
	\setlength\belowdisplayskip{4.85pt}
	\begin{IEEEeqnarray}{rCl}
		\mathsf{h}(Y_i|U_i) = \frac{1}{2} \log \left( 2\pi e\sigma_i^2 + 2\pi e\sigma_K^2 ( e^{ 2 \mathsf{C}_{\textnormal{ub}}({\rho}_{i-1}\amp,\sigma_K) } -1 ) \right). \label{eq: h(Yi|Ui) equality}
	\end{IEEEeqnarray}}We upper bound $\mathsf{h}(Y_{i-1}|U_{i-1})$ by
	{\setlength\abovedisplayskip{4.85pt} 
	\setlength\belowdisplayskip{4.85pt}
	\begin{IEEEeqnarray}{rCl}
		\mathsf{h}(Y_{i-1}|U_{i-1})
		 &\leq& \mathsf{h}(Y_{i-1}|U_i) \label{eq: degraded channel inequaliy}\\
		&\leq& \frac{1}{2} \log \left( e^{2\mathsf{h}(Y_i|U_i)} - 2\pi e ( \sigma_i^2-\sigma_{i-1}^2 )\right) \label{eq: EPI2}\\
		&=& \frac{1}{2} \log \left( 2\pi e\sigma_{i-1}^2 + 2\pi e\sigma_K^2 ( e^{ 2 \mathsf{C}_{\textnormal{ub}}({\rho}_{i-1}\amp,\sigma_K) } -1 ) \right),\label{eq: h(Y{i-1}|U_{i-1}) equality}
	\end{IEEEeqnarray}}where \eqref{eq: degraded channel inequaliy} holds since $U_i\rightarrow U_{i-1}\rightarrow Y_{i-1}$ forms a Markov chain; \eqref{eq: EPI2} is obtained by substituting the conditional EPI for $Y_{i-1}+\widetilde{Z}_i = Y_i$; and \eqref{eq: h(Y{i-1}|U_{i-1}) equality} is obtained by substituting \eqref{eq: h(Yi|Ui) equality} into \eqref{eq: EPI2}. On the other hand, since $U_i\rightarrow X\rightarrow Y_{i-1}$ also forms a Markov chain, we lower bound $\mathsf{h}(Y_{i-1}|U_{i-1})$ by
	{\setlength\abovedisplayskip{4.85pt} 
	\setlength\belowdisplayskip{4.85pt}
	\begin{IEEEeqnarray}{rCl}
		\mathsf{h}(Y_{i-1}|U_{i-1}) \geq \mathsf{h}(Y_{i-1}|X) = \frac{1}{2} \log(2\pi e\sigma_{i-1}^2). \label{eq: 139}
	\end{IEEEeqnarray}}With \eqref{eq: h(Y{i-1}|U_{i-1}) equality} and \eqref{eq: 139}, there exists ${\rho}_{i-2}\in[0,{\rho}_{i-1}]$ such that 
	{\setlength\abovedisplayskip{4.85pt} 
	\setlength\belowdisplayskip{4.85pt}
	\begin{IEEEeqnarray}{rCl}
		\mathsf{h}(Y_{i-1}|U_{i-1}) = \frac{1}{2} \log \left( 2\pi e\sigma_{i-1}^2 + 2\pi e\sigma_K^2 ( e^{ 2 \mathsf{C}_{\textnormal{ub}}({\rho}_{i-2}\amp,\sigma_K) } -1 ) \right). \label{eq: h(Y{i-1}|U{i-1}) equality}
	\end{IEEEeqnarray}}As a result, \eqref{eq: h(Yk|Uk) equality} is also true if $k=i-1$. By mathematical induction, \eqref{eq: h(Yk|Uk) equality} is true for $\forall k\in[K]$. To analyze $\mathsf{h}(Y_k|U_{k+1})$, we first note that $Y_k+\widetilde{Z}_{k+1}=Y_{k+1}$. By the conditional EPI, we obtain that
	{\setlength\abovedisplayskip{4.85pt} 
	\setlength\belowdisplayskip{4.85pt}
	\begin{IEEEeqnarray}{rCl}
		\mathsf{h}(Y_k|U_{k+1}) 
		&\leq& \frac{1}{2} \log \left( e^{2\mathsf{h}(Y_{k+1}|U_{k+1})} - 2\pi e ( \sigma_{k+1}^2-\sigma_{k}^2 )\right) \\
		&=& \frac{1}{2} \log \left( 2\pi e\sigma_{k}^2 + 2\pi e\sigma_K^2 ( e^{ 2 \mathsf{C}_{\textnormal{ub}}({\rho}_{k}\amp,\sigma_K) } -1 ) \right), \ k\in[K-1]. \label{eq: EPI3}
	\end{IEEEeqnarray}}

	Substituting \eqref{eq: h(Yk|Uk) equality} and \eqref{eq: EPI3} into \eqref{eq: Rk inequality}, we have
	{\setlength\abovedisplayskip{4.85pt} 
	\setlength\belowdisplayskip{4.85pt}
	\begin{IEEEeqnarray}{rCl}
	R_k &\leq& \frac{1}{2} \log \left\{\frac{ \sigma_k^2 + \sigma_K^2 \left( e^{ 2\mathsf{C}_{\textnormal{ub}}({\rho}_k\amp,\sigma_K) } -1 \right) }{ \sigma_k^2 + \sigma_K^2 \left( e^{ 2\mathsf{C}_{\textnormal{ub}}({\rho}_{k-1}\amp,\sigma_K) } -1 \right) } \right\}, \quad k\in[K-1].\label{eq: Rk inequality2}
	\end{IEEEeqnarray}}Combining \eqref{eq: RK inequality2} and \eqref{eq: Rk inequality2}, we complete the proof.
\end{proof}

The high-SNR capacity region is derived based on the inner bound in Theorem~\ref{theorem: K-user inner bound only peak} and the outer bound in Theorem~\ref{theorem: K-user outer bound only peak}. We summarize it as follows.
\begin{theorem}[\textbf{Asymptotic Capacity Region}]\label{theorem: high SNR K-user capacity region only peak}
	When the input is only subject to the peak-intensity constraint in \eqref{eq:peak cons}, at high SNR, the capacity region of a $K$-user OI-BC asymptotically converges to the region where the rate tuple $(R_1,R_2,\cdots,R_K)$ satisfies
	{\setlength\abovedisplayskip{4.85pt} 
	\setlength\belowdisplayskip{4.85pt}
	\begin{IEEEeqnarray}{rCl}
		R_k & \ \dot{\leq}\ & \frac{1}{2} \log \left( \frac{ {\rho}_k^2\amp^2 +  2\pi e \sigma_k^2  }{ {\rho}_{k-1}^2\amp^2 +  2\pi e \sigma_k^2 } \right),\quad k\in[K], \label{eq: high SNR K-user only peak}
	\end{IEEEeqnarray}}with $\bm{\rho}\in \mathfrak{D}_{\bm{\rho}}$ and $\mathfrak{D}_{\bm{\rho}}=\left\{\bm{\rho}: {\rho}_k \in[0,1], \ {\rho}_{k-1}\leq {\rho}_{k}, \ \forall k\in[K], \ {\rho}_0=0,\ {\rho}_K=1\right\}$.
\end{theorem}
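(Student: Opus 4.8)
The plan is to follow the same template as the proof of Theorem~\ref{theorem: high SNR capacity region only peak} for the two-user case: first convert the outer bound of Theorem~\ref{theorem: K-user outer bound only peak} into its high-SNR form, then convert the inner bound of Theorem~\ref{theorem: K-user inner bound only peak}, identify the two parametrizations, and finally close the gap by time sharing.

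First I would substitute the asymptotic single-user expression $\mathsf{C}_{\textnormal{ub}}(\amp,\sigma)\doteq\frac{1}{2}\log(1+\frac{\amp^2}{2\pi e\sigma^2})$ from Lemma~\ref{lemma: 1-user capacity only peak} into the outer bound~\eqref{eq: Ub only peak K-user}. Using $e^{2\mathsf{C}_{\textnormal{ub}}({\rho}_k\amp,\sigma_K)}-1\doteq\frac{{\rho}_k^2\amp^2}{2\pi e\sigma_K^2}$, the factor $\sigma_K^2$ cancels and each $R_k^{\textnormal{out}}(\bm{\rho})$ collapses to $\frac{1}{2}\log\frac{{\rho}_k^2\amp^2+2\pi e\sigma_k^2}{{\rho}_{k-1}^2\amp^2+2\pi e\sigma_k^2}$, which is exactly the right-hand side of~\eqref{eq: high SNR K-user only peak} over the same domain $\mathfrak{D}_{\bm{\rho}}$. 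This already delivers the converse ($\dot{\leq}$) direction.

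Next I would treat the inner bound. Substituting the same asymptotic expression into~\eqref{eq: Rk Lb only peak K-user} (for both the leading term and the $\mathsf{C}_{\textnormal{ub}}$ term) and writing $P_k=\prod_{n=k}^K N_n$ (so that the partial sum is uniform on $[0,\amp/P_k]$), the two logarithms combine into $R_k^{\textnormal{in}}(\bm{N})\doteq\frac{1}{2}\log\frac{2\pi e\sigma_k^2+\amp^2/P_k^2}{2\pi e\sigma_k^2+\amp^2/P_{k-1}^2}$. Under the identification ${\rho}_k=1/P_k=1/\prod_{n=k}^K N_n$, this coincides termwise with the high-SNR outer bound. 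The boundary data match automatically: $N_K=1$ gives ${\rho}_K=1$, $N_0\gg0$ gives ${\rho}_0=0$, and $P_{k-1}\geq P_k$ gives ${\rho}_{k-1}\leq{\rho}_k$, so the image of $\mathfrak{D}_{\bm{N}}$ lands inside $\mathfrak{D}_{\bm{\rho}}$. Hence every achievable corner point lies exactly on the outer-bound Pareto surface.

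The remaining---and main---obstacle is gap filling: the identification only realizes the discrete values ${\rho}_k=1/\prod_{n=k}^K N_n$, whereas the claimed region ranges over continuous $\bm{\rho}\in\mathfrak{D}_{\bm{\rho}}$, so the achievable region is a priori only the convex hull of a lattice of surface points. I would show this convex hull asymptotically exhausts the whole region exactly as in the two-user case: the rate difference between neighboring lattice points vanishes at high SNR, so the convex hull of two adjacent points---obtained by time sharing---lies on a facet whose gap to the outer-bound surface is $\doteq0$, the direct $K$-dimensional analogue of the zero-gap computation~\eqref{eq: gap of R2}--\eqref{eq: zero gap}. Sweeping each $N_n$ over $\mathbb{N}^{+}$ then covers all of $\mathfrak{D}_{\bm{\rho}}$ asymptotically. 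The technical difficulty, relative to the two-user proof, is that the neighboring-point and facet analysis must now be carried out jointly over the vector index $\bm{N}$ rather than a single integer $N$; once the per-facet gap is shown to vanish, combining it with the converse established above completes the proof.
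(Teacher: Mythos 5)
Your converse step and your termwise identification $\rho_k = 1/\prod_{n=k}^{K}N_n$ between the high-SNR forms of Theorem~\ref{theorem: K-user outer bound only peak} and Theorem~\ref{theorem: K-user inner bound only peak} are exactly the paper's first moves (cf.\ \eqref{eq: high-SNR K-uer Rk UB only peak}, \eqref{eq: high-SNR K-uer Rk LB only peak}, and \eqref{eq: rho and N relationship}). The gap is in your final convexification step, and it is twofold. First, your assertion that ``the rate difference between neighboring lattice points vanishes at high SNR'' is false: already for $K=2$ one has $R_1^{\textnormal{in}}(N)-R_1^{\textnormal{in}}(N+1)\doteq\log\frac{N+1}{N}$, a nonvanishing constant, and the same persists coordinatewise for the vector lattice. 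Consequently the set of achievable parameters $\bigl\{\bm{\rho}:\rho_k=1/\prod_{n=k}^{K}N_n\bigr\}$ is \emph{not} asymptotically dense in rate space, and ``sweeping each $N_n$ over $\mathbb{N}^{+}$'' does not by itself cover $\mathfrak{D}_{\bm{\rho}}$. Even in the two-user proof you cite, the zero-gap computation \eqref{eq: gap of R2}--\eqref{eq: zero gap} works not because adjacent points are close but because the chord between them and the outer boundary both have slope $\doteq-1$; density plays no role.

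Second --- and this is the missing idea --- what actually makes the convex hull exhaust the region in $K$ dimensions is a \emph{global flatness} property of the outer boundary, which your proposal never establishes. The paper proves, via the induction identity \eqref{eq: 135} in Appendix~\ref{app: proof of eq. 2} expressing $e^{2\mathsf{C}_{\textnormal{ub}}(\rho_k\amp,\sigma_K)}-1$ in terms of $R_1^{\textnormal{out}},\ldots,R_k^{\textnormal{out}}$, that $\partial R_K^{\textnormal{out}}/\partial R_k^{\textnormal{out}}\doteq-1$ for every $k\in[K-1]$ (see \eqref{eq: slope}), i.e., at high SNR the boundary is a $(K-1)$-dimensional hyperplane $\mathcal{H}_{\textnormal{out}}$. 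Once this is in hand no facet-by-facet analysis over the multi-index $\bm{N}$ is needed: all lattice points lie on $\mathcal{H}_{\textnormal{out}}$ (your observation), the $K$ corner points \eqref{eq1: boundary points only peak} are achieved in the limit $N_{k-1}\to\infty$, $N_n\to1$ for $n\geq k$, and the convex hull of those corner points is the entire hyperplane face. Your plan defers exactly this content to an unproved ``per-facet gap $\doteq 0$'' claim; but showing that each facet lies asymptotically on the boundary \emph{is} the flatness statement, and it rests on the nontrivial identity \eqref{eq: 135} (or an equivalent), which is absent from your write-up and explicitly flagged by you as an unexecuted ``technical difficulty.'' So the architecture matches the paper, but the decisive step --- the hyperplane characterization of the asymptotic outer boundary together with corner-point achievability --- is missing, and the density-based shortcut offered in its place fails.
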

\begin{proof}
	Substituting the single-user capacity result in Lemma~\ref{lemma: 1-user capacity only peak} into Theorem~\ref{theorem: K-user outer bound only peak}, we obtain that the capacity region at high SNR is outer bounded by $\cup_{\bm{\rho}\in\mathfrak{D}_{\bm{\rho}} } \left( R_{1}^{\textnormal{out}}(\bm{\rho}),R_{2}^{\textnormal{out}}(\bm{\rho}),\cdots,R_{K}^{\textnormal{out}}(\bm{\rho}) \right)$, where
	{\setlength\abovedisplayskip{4.85pt} 
	\setlength\belowdisplayskip{4.85pt}
	\begin{IEEEeqnarray}{rCl}
		R_{k}^{\textnormal{out}}(\bm{\rho}) 
		& \doteq & \frac{1}{2} \log \left( \frac{ {\rho}_k^2\amp^2 +  2\pi e \sigma_k^2  }{ {\rho}_{k-1}^2\amp^2 +  2\pi e \sigma_k^2 } \right),\quad k\in[K].\label{eq: high-SNR K-uer Rk UB only peak}
	\end{IEEEeqnarray}}We write the boundary of the above outer bound as
	{\setlength\abovedisplayskip{4.85pt} 
	\setlength\belowdisplayskip{4.85pt}
\begin{IEEEeqnarray}{rCl}
		R_{K}^{\textnormal{out}}
		&\doteq& \frac{1}{2} \log\left( 1+ \frac{\amp^2}{2\pi e \sigma_K^2} \right) 
		- \frac{1}{2}\log \left(  1 + \sum_{m=1}^{K-1} \frac{\sigma_m^2}{\sigma_K^2}\left( e^{2 R_{m}^{\textnormal{out}} } -1 \right) \prod_{n=m+1}^{K-1} e^{ 2 R_{n}^{\textnormal{out}} } \right),\qquad \label{eq: high-SNR OB boundary only peak}
	\end{IEEEeqnarray}}whose proof is given in Appendix~\ref{app: proof of eq. 2}. The derivative of $R_{K}^{\textnormal{out}}$ is given by
	{\setlength\abovedisplayskip{4.85pt} 
	\setlength\belowdisplayskip{4.85pt}
	\begin{IEEEeqnarray}{rCl}
		\frac{\partial R_{K}^{\textnormal{out}} }{ \partial R_{k}^{\textnormal{out}} }
		&\doteq& - \frac{\sum_{m=1}^{k-1} \frac{\sigma_m^2}{\sigma_K^2}\left(e^{2 R_{m}^{\textnormal{out}}}-1\right) \prod_{n=m+1}^{K-1} e^{2 R_{n}^{\textnormal{out}}}
			+ \frac{\sigma_k^2}{\sigma_K^2}  \prod_{n=k}^{K-1} e^{2 R_{n}^{\textnormal{out}}}}{ 1+\sum_{m=1}^{K-1} \frac{\sigma_m^2}{\sigma_K^2}\left(e^{2 R_{m}^{\textnormal{out}}}-1\right) \prod_{n=m+1}^{K-1} e^{2 R_{n}^{\textnormal{out}}} }\\
		&=& - \frac{ \frac{\sigma_1^2}{\sigma_K^2}\left(e^{2 R_{1}^{\textnormal{out}}}-1\right) \prod_{n=2}^{K-1} e^{2 R_{n}^{\textnormal{out}}}
			+ \sum_{m=2}^{k-1} \frac{\sigma_m^2}{\sigma_K^2}\left(e^{2 R_{m}^{\textnormal{out}}}-1\right) \prod_{n=m+1}^{K-1} e^{2 R_{n}^{\textnormal{out}}}
			+ \frac{\sigma_k^2}{\sigma_K^2}  \prod_{n=k}^{K-1} e^{2 R_{n}^{\textnormal{out}}}}{ 1+\sum_{m=1}^{K-1} \frac{\sigma_m^2}{\sigma_K^2}\left(e^{2 R_{m}^{\textnormal{out}}}-1\right) \prod_{n=m+1}^{K-1} e^{2 R_{n}^{\textnormal{out}}} }\nonumber\\
		&\doteq& - \frac{ \frac{\sigma_1^2}{\sigma_K^2} \frac{{\rho}_1^2\amp^2}{2\pi e\sigma_1^2} \prod_{n=2}^{K-1} \frac{{\rho}_n^2}{{\rho}_{n-1}^2}
			+ \sum_{m=2}^{k-1} \frac{\sigma_m^2}{\sigma_K^2}\left( \frac{{\rho}_m^2}{{\rho}_{m-1}^2}-1\right) \prod_{n=m+1}^{K-1} \frac{{\rho}_n^2}{{\rho}_{n-1}^2}
			+ \frac{\sigma_k^2}{\sigma_K^2}  \prod_{n=k}^{K-1} \frac{{\rho}_n^2}{{\rho}_{n-1}^2} }{ 1 + \frac{{\rho}_{K-1}^2\amp^2}{2\pi e\sigma_K^2} } \label{eq: 121}\\
		&=& - \frac{\frac{{\rho}_{K-1}^2\amp^2}{2\pi e\sigma_K^2}
			+ \sum_{m=2}^{k-1} \frac{\sigma_m^2}{\sigma_K^2} \left( \frac{{\rho}_m^2}{{\rho}_{m-1}^2}-1\right) \frac{{\rho}_{K-1}^2}{{\rho}_{m}^2}
			+ \frac{\sigma_k^2}{\sigma_K^2}  \frac{{\rho}_{K-1}^2}{{\rho}_{k-1}^2} }{ 1 + \frac{{\rho}_{K-1}^2\amp^2}{2\pi e\sigma_K^2} } \\
		&\doteq& -1 , \quad k\in[K-1], \label{eq: slope}
	\end{IEEEeqnarray}}where \eqref{eq: 121} follows from \eqref{eq: high-SNR K-uer Rk UB only peak} and \eqref{eq: 135}. 
	Hence, at high SNR, the boundary of the outer bound is a $(K-1)$-dimensional hyperplane. For convenience, we denote it by $\mathcal{H}_\textnormal{out}$.

	Next, we proceed to show that the inner bound in Theorem~\ref{theorem: K-user inner bound only peak} is tight with $\mathcal{H}_\textnormal{out}$ at high SNR. Substituting \eqref{eq: only peak high-SNR capacity} into \eqref{eq: Rk Lb only peak K-user}, we obtain that
	 the rate tuples in $\mathsf{Conv}\{ \cup_{\bm{N}\in\mathfrak{D}_{\bm{N}}} \left( R_{1}^{\textnormal{in}}(\bm{N}),\cdots,R_{K}^{\textnormal{in}}(\bm{N}) \right) \}$ are all achievable, where
	{\setlength\abovedisplayskip{4.85pt} 
	\setlength\belowdisplayskip{4.85pt}
	\begin{IEEEeqnarray}{rCl}
	R_{k}^{\textnormal{in}}(\bm{N}) 
	&=& \frac{1}{2} \log \left( \frac{ \frac{ \amp^2 }{ (\prod_{n=k}^K {N}_{n})^2 } + 2 \pi e \sigma_k^2 }{ \frac{\amp^2}{(\prod_{n={k-1}}^K {N}_{n})^2} + 2 \pi e \sigma_k^2 } \right) , \quad k\in[K]. \label{eq: high-SNR K-uer Rk LB only peak}
	\end{IEEEeqnarray}}To analyze the achievable region $\mathsf{Conv}\{ \cup_{\bm{N}\in\mathfrak{D}_{\bm{N}}} \left( R_{1}^{\textnormal{in}}(\bm{N}),\cdots,R_{K}^{\textnormal{in}}(\bm{N}) \right) \}$, we define a new region $\mathcal{R}$ such that
	{\setlength\abovedisplayskip{4.85pt} 
	\setlength\belowdisplayskip{4.85pt}
	\begin{IEEEeqnarray}{rCl}
		\mathcal{R} =  \cup_{\bm{N}\in\mathfrak{D}_{\bm{N}}} \left( R_{1}^{\textnormal{in}}(\bm{N}),\cdots,R_{K}^{\textnormal{in}}(\bm{N}) \right). 
	\end{IEEEeqnarray}}Since $\mathcal{R}\subseteq \mathsf{Conv}\{ \cup_{\bm{N}\in\mathfrak{D}_{\bm{N}}} \left( R_{1}^{\textnormal{in}}(\bm{N}),\cdots,R_{K}^{\textnormal{in}}(\bm{N}) \right) \}$, thus the rate tuples in $\mathcal{R}$ are also achievable. Then, we have two observations about $\mathcal{R}$. First, given any $\bm{N}\in\mathfrak{D}_{\bm{N}}$, if $\bm{\rho}$ satisfies
	{\setlength\abovedisplayskip{4.85pt} 
	\setlength\belowdisplayskip{4.85pt}
	\begin{IEEEeqnarray}{rCl}
		{\rho}_k = \frac{1}{ \prod_{n=k}^K {N}_{n} }, \quad k \in \{0\cup[K]\}, \label{eq: rho and N relationship}
	\end{IEEEeqnarray}}then combined with \eqref{eq: high-SNR K-uer Rk UB only peak} and \eqref{eq: high-SNR K-uer Rk LB only peak}, we have
	{\setlength\abovedisplayskip{4.85pt} 
	\setlength\belowdisplayskip{4.85pt}
	\begin{IEEEeqnarray}{rCl}
		\left( R_{1}^{\textnormal{in}}(\bm{N}),\cdots,R_{K}^{\textnormal{in}}(\bm{N}) \right) \doteq \left( R_{1}^{\textnormal{out}}( \bm{\rho}),\cdots, R_{K}^{\textnormal{out}}( \bm{\rho}) \right), \quad \forall \bm{N} \in\mathfrak{D}_{\bm{N}}. \label{eq: equal rates}
	\end{IEEEeqnarray}}Therefore, any achievable rate tuple in $\mathcal{R}$, is equal to some rate tuple on the hyperplane $\mathcal{H}_\textnormal{out}$ at high SNR, i.e., $\mathcal{R}\subseteq\mathcal{H}_\textnormal{out}$. Second, given any $k\in[K]$, if we let
	{\setlength\abovedisplayskip{4.85pt} 
	\setlength\belowdisplayskip{4.85pt}
	\begin{IEEEeqnarray}{rCl}
		{N}_{k-1} &\rightarrow & +\infty,\\
		{N}_{n} &\rightarrow & 1, \quad n\in\{k,\cdots,K\},
	\end{IEEEeqnarray}}then combined with \eqref{eq: high-SNR K-uer Rk LB only peak}, we can obtain that the following tuples in $\mathcal{R}$ are achievable:
	{\setlength\abovedisplayskip{4.85pt} 
	\setlength\belowdisplayskip{4.85pt}
	\begin{IEEEeqnarray}{rCl}
		\Bigl(0,\cdots,0,\underbrace{\frac{1}{2} \log\left(1 + \frac{\amp^2}{2\pi e\sigma_k^2}\right)}_k,0,\cdots,0\Bigr),\quad k\in[K]. \label{eq1: boundary points only peak}
	\end{IEEEeqnarray}}From \eqref{eq: high-SNR K-uer Rk UB only peak}, we find that these achievable rate tuples in \eqref{eq1: boundary points only peak} are exactly the corner points on the hyperplane $\mathcal{H}_\textnormal{out}$.
	
	Now, we are ready to analyze the achievable region $\mathsf{Conv}\{ \cup_{\bm{N}\in\mathfrak{D}_{\bm{N}}} \left( R_{1}^{\textnormal{in}}(\bm{N}),\cdots,R_{K}^{\textnormal{in}}(\bm{N}) \right) \}$. Note that 
	{\setlength\abovedisplayskip{4.85pt} 
	\setlength\belowdisplayskip{4.85pt}
	\begin{IEEEeqnarray}{rCl}
		\mathsf{Conv} \left\{ \cup_{\bm{N}\in\mathfrak{D}_{\bm{N}}} \left( R_{1}^{\textnormal{in}}(\bm{N}),\cdots,R_{K}^{\textnormal{in}}(\bm{N}) \right) \right\} = \mathsf{Conv} \left\{ \mathcal{R} \right\}. \label{eq: 124}
	\end{IEEEeqnarray}}With the above two observations about $\mathcal{R}$, i.e., (1) $\mathcal{R}\subseteq\mathcal{H}_\textnormal{out}$; (2) the corner points on the hyperplane $\mathcal{H}_\textnormal{out}$ can be achieved by some rate tuples in $\mathcal{R}$, we conclude that $\mathsf{Conv} \left\{ \cup_{\bm{N}\in\mathfrak{D}_{\bm{N}}} \left( R_{1}^{\textnormal{in}}(\bm{N}),\cdots,R_{K}^{\textnormal{in}}(\bm{N}) \right) \right\}$ is also a $(K-1)$-dimensional hyperplane and overlaps with $\mathcal{H}_\textnormal{out}$. Hence, the inner bound in Theorem~\ref{theorem: K-user inner bound only peak} and the outer bound in Theorem~\ref{theorem: K-user outer bound only peak} are tight at high SNR, which completes the proof.	

\end{proof}


The derived bounds on the capacity region are shown in Fig.~\ref{fig: K-user only peak}, where we assume $K=3$, $\sigma_3 = 2\sigma_2=4\sigma_1$, and $\mathrm{ASNR}_k=\frac{\amp}{\sigma_k}$, $k\in\{1,2,3\}$. In the figure, we first depict the $3$-dimensional capacity region bounds, then project the result onto $R_2$ and $R_3$ plane, $R_2$ and $R_3$ plane, and $R_2$ and $R_3$ plane, respectively. From Fig.~\ref{fig: K-user only peak}, we observe that the inner and outer bounds become tighter as SNR increases, which further validates the derivations.

\begin{figure}[!htbp]
	\centering
	\subfigure[Bounds on capacity region.]{
		\includegraphics[width=3.3in]{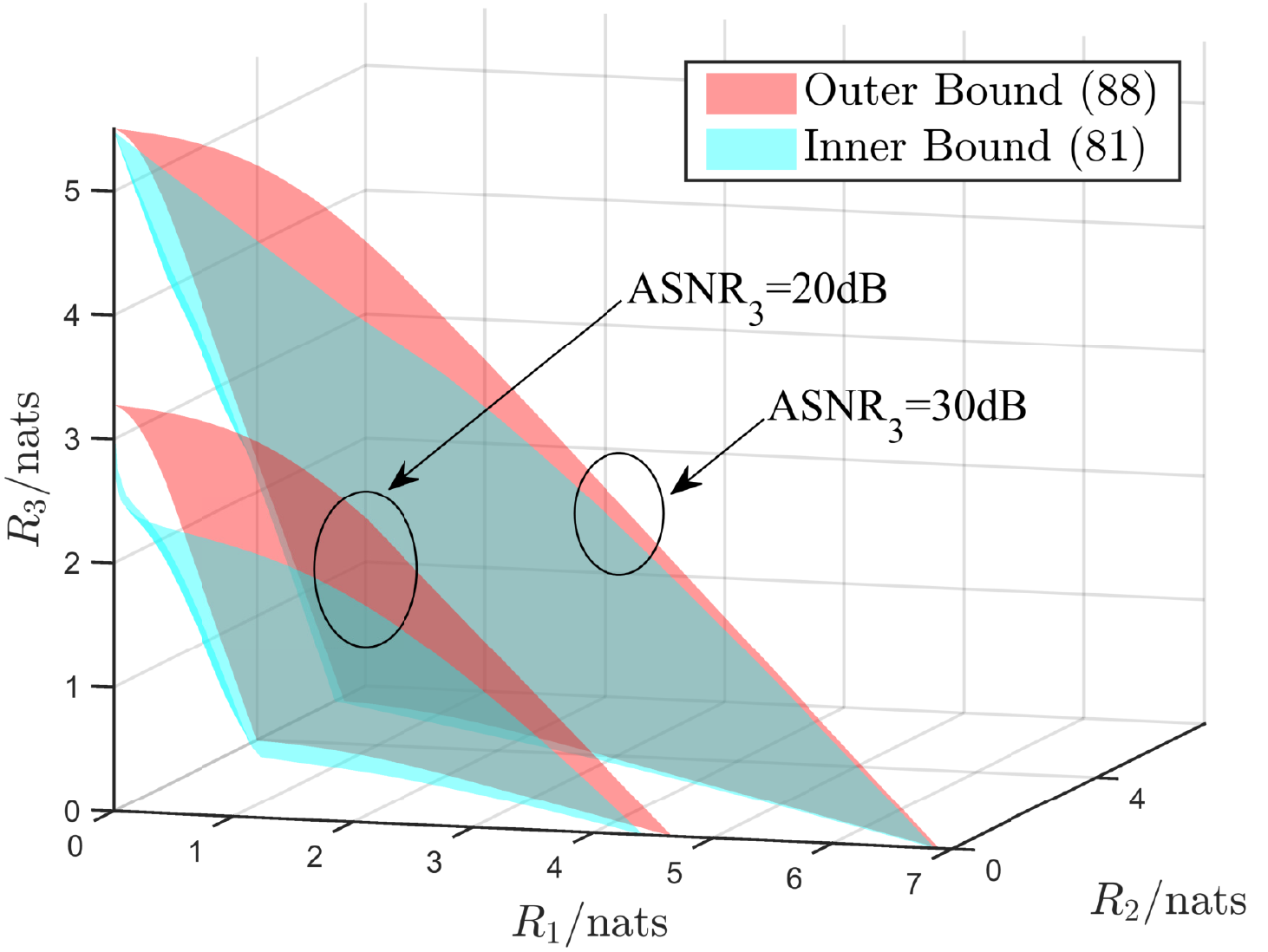}	
	}\hspace{-2.5mm}
	\subfigure[Projection on the $R_1$ and $R_2$ plane.]{
		\includegraphics[width=3.2in]{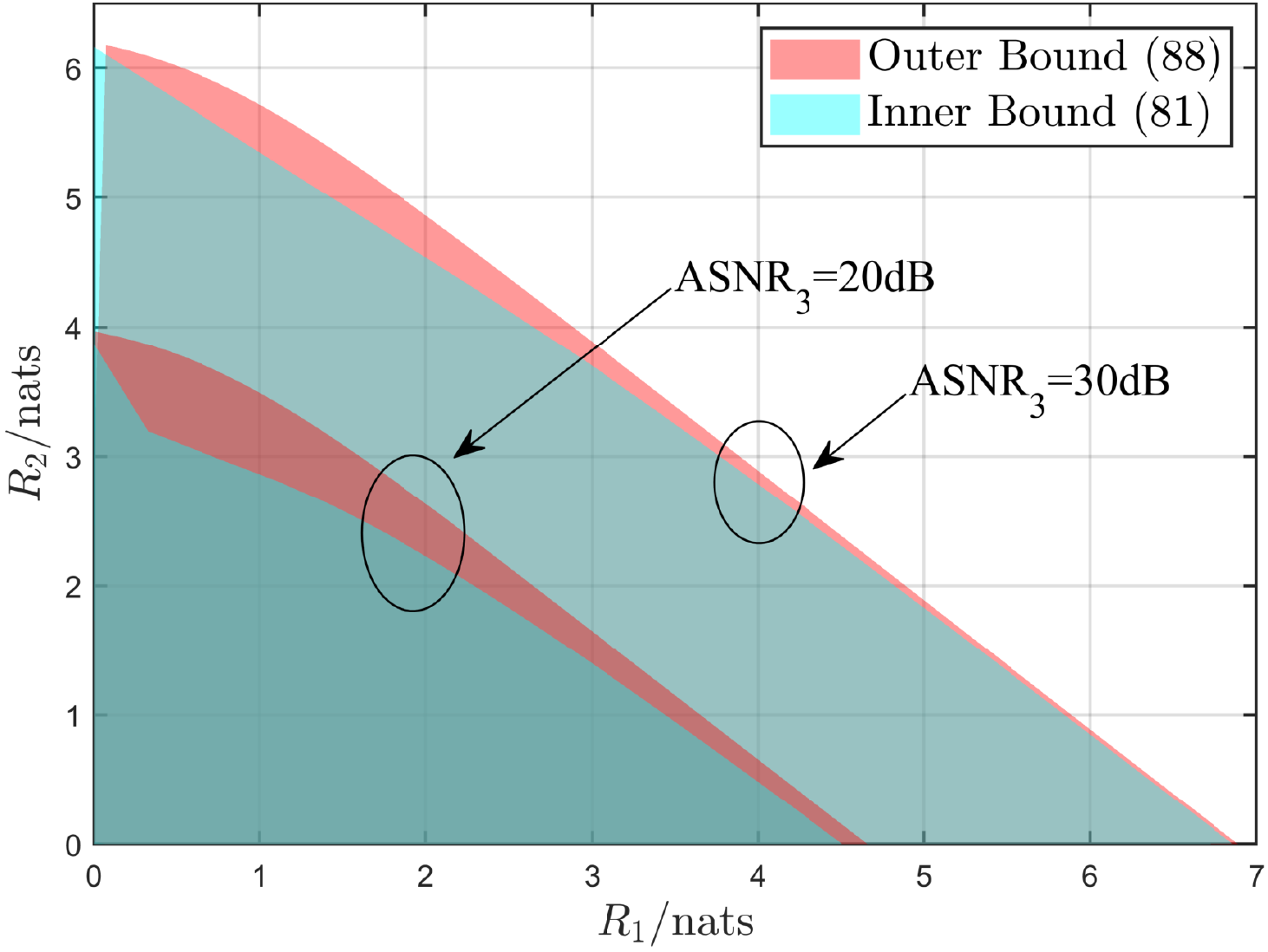}
	}
	\subfigure[Projection on the $R_2$ and $R_3$ plane.]{
		\includegraphics[width=3.2in]{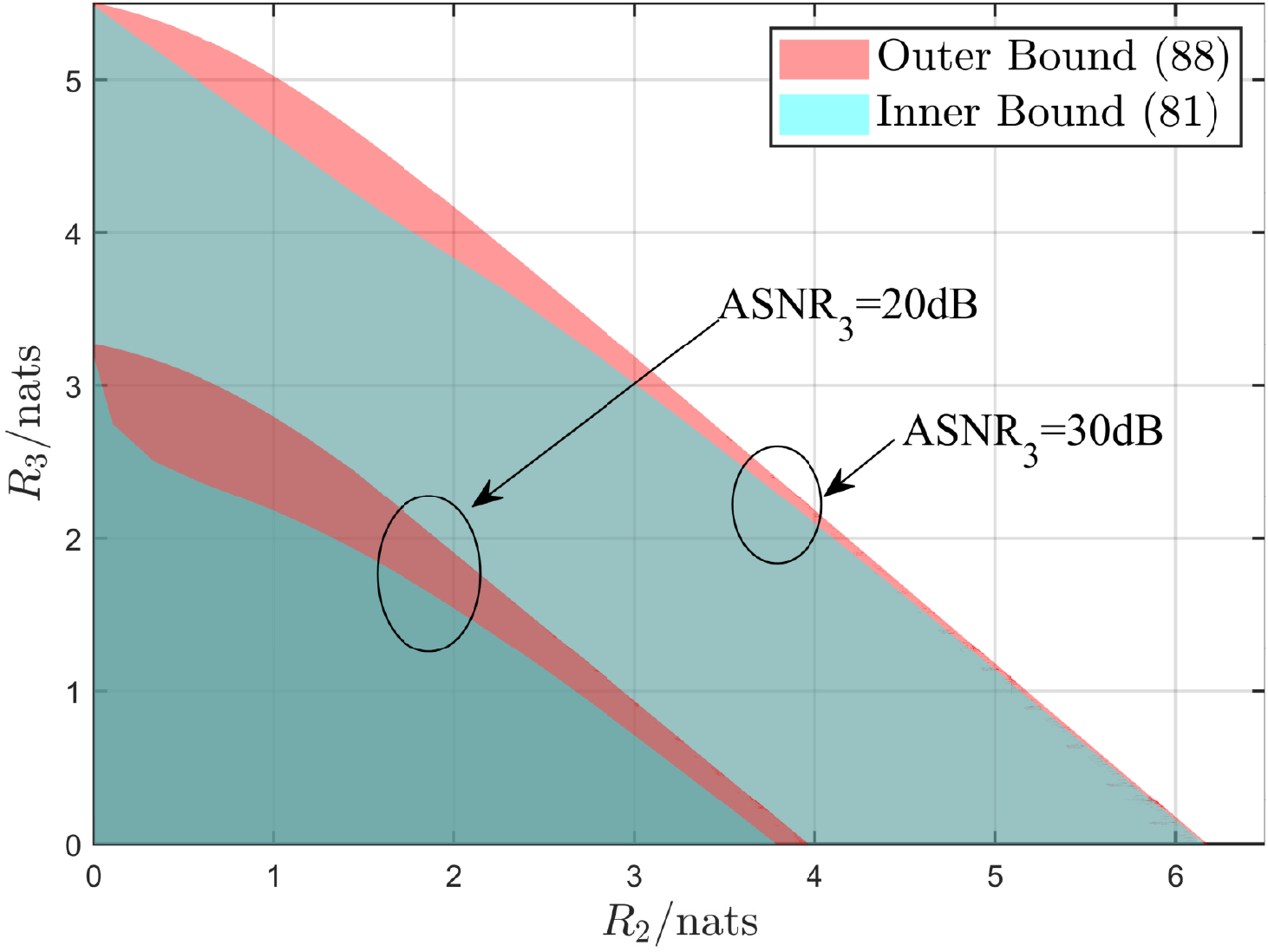}
	}\hspace{-0mm}
	\subfigure[Projection on the $R_1$ and $R_3$ plane.]{
		\includegraphics[width=3.2in]{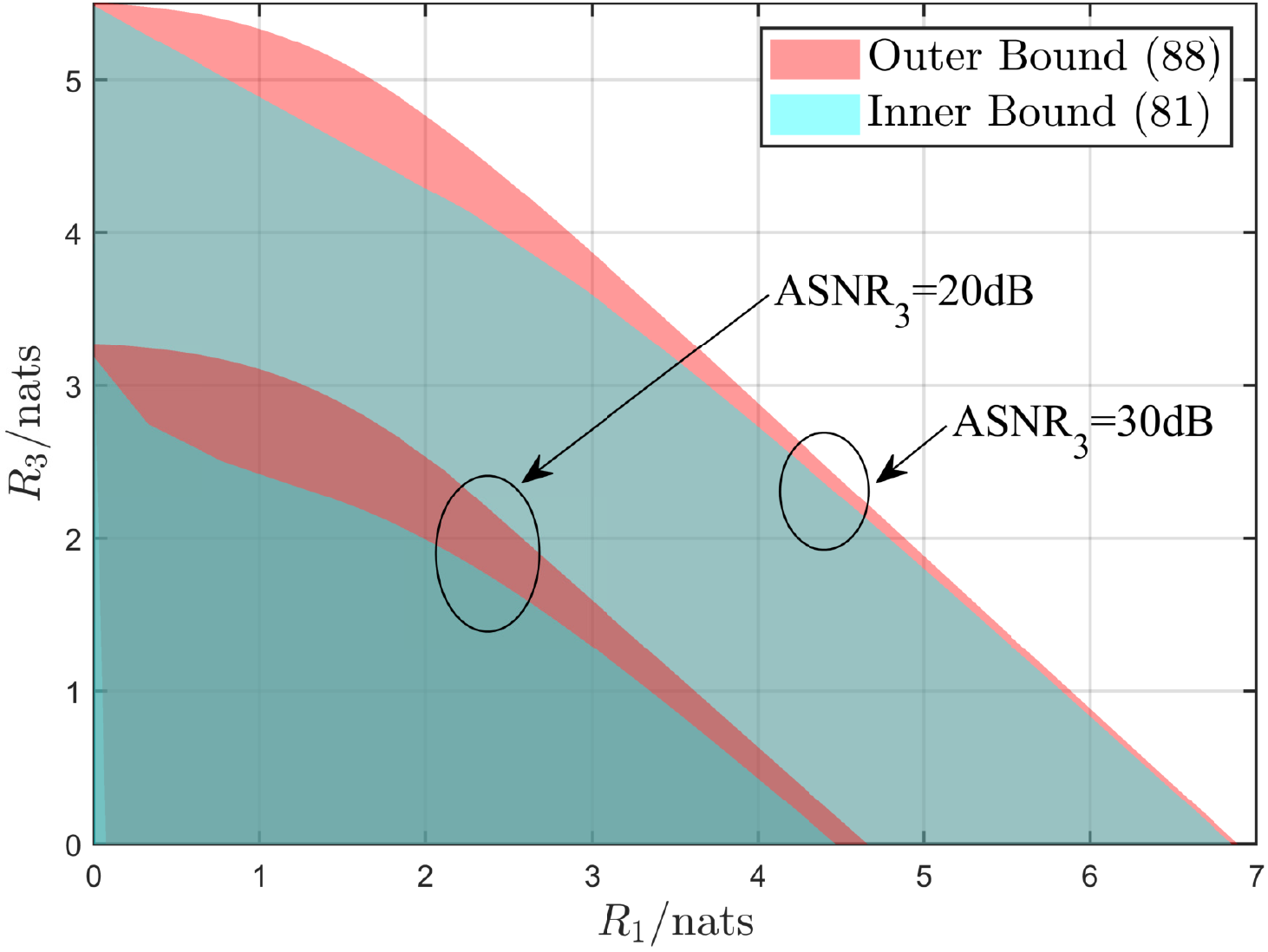}
	}	
	\caption{Bounds on capacity region of $3$-user OI-BC with peak-intensity constraint.}
	\label{fig: K-user only peak}	
	\vspace{-0.7cm}
\end{figure}

\subsection{Average-Intensity Constrained OI-BC}
The inner bound on the capacity region is proposed as follows.
\begin{theorem}[\textbf{Inner Bound}]\label{theorem: K-user inner bound only ave}
	When the input is only subject to the average-intensity constraint in \eqref{eq:ave cons}, the rate tuples in the set $\cup_{\bm{\rho}\in\mathfrak{D}_{\bm{\rho}}} \left( R_{1}^{\textnormal{in}}(\bm{\rho}),R_{2}^{\textnormal{in}}(\bm{\rho}),\cdots,R_{K}^{\textnormal{in}}(\bm{\rho}) \right)$ are all achievable for a $K$-user OI-BC, where
	{\setlength\abovedisplayskip{4.85pt} 
	\setlength\belowdisplayskip{4.85pt}
	\begin{IEEEeqnarray}{rCl}
		R_{k}^{\textnormal{in}}(\bm{\rho}) &=& \frac{1}{2} \log \left( 1+\frac{e{\rho}_k^2\EE^2}{2 \pi \sigma_k^2} \right) -\mathsf{C}_{\textnormal{ub}}\left( {\rho}_{k-1}\EE,\sigma_k\right), \quad k\in[K], \qquad\label{eq: Rk Lb only average K-user}
	\end{IEEEeqnarray}}with $\mathfrak{D}_{\bm{\rho}}=\left\{\bm{\rho}=[{\rho}_0,{\rho}_1,\cdots,{\rho}_K]: {\rho}_{k}\in[0,1],\ {\rho}_{k-1}\leq {\rho}_{k}, \ \forall k\in[K], \ {\rho}_0=0,\ {\rho}_K=1\right\}$.
\end{theorem}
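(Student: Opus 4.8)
The plan is to mirror the superposition-coding argument of Theorem~\ref{theorem: K-user inner bound only peak}, replacing the uniform decomposition (Lemma~\ref{lemma: decom of uniform}) by the exponential decomposition (Lemma~\ref{lemma: decom of exponential}) and the peak-constrained single-user bound by the average-constrained bound $\mathsf{C}_{\textnormal{ub}}(\EE,\sigma)$ of Lemma~\ref{lemma: 1-user capacity only ave}. Concretely, I would encode $M_1,\dots,M_K$ into mutually independent signals $X_1,\dots,X_K$, where $X_1\sim\textsf{Exp}({\rho}_1\EE)$ and, for $k\in\{2,\dots,K\}$,
\[
p_{X_k}(x_k)=\frac{{\rho}_{k-1}}{{\rho}_k}\,\delta(x_k)+\left(1-\frac{{\rho}_{k-1}}{{\rho}_k}\right)\frac{1}{{\rho}_k\EE}\,e^{-\frac{x_k}{{\rho}_k\EE}},\quad x_k\in[0,+\infty),
\]
and set $X=\sum_{n=1}^K X_n$. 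Writing $X_{k,\textnormal{sum}}=\sum_{n=1}^k X_n$, I would apply Lemma~\ref{lemma: decom of exponential} recursively, with the substitution $\EE\to{\rho}_k\EE$ and $\EE_1\to{\rho}_{k-1}\EE$ at stage $k$, which is legitimate precisely because ${\rho}_{k-1}\le{\rho}_k$ on $\mathfrak{D}_{\bm{\rho}}$. This yields $X_{k,\textnormal{sum}}\sim\textsf{Exp}({\rho}_k\EE)$ for every $k$; in particular $X=X_{K,\textnormal{sum}}\sim\textsf{Exp}(\EE)$ meets the average-intensity constraint~\eqref{eq:ave cons}.

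The second step is the rate computation. Instantiating $U_k=\sum_{n=k}^K X_n$ in Lemma~\ref{lemma: Kuser} and using the same conditioning reduction as in~\eqref{eq: 150}, the chosen mutual information collapses to
\[
\mathsf{I}(U_k;Y_k|U_{k+1})=\mathsf{I}(X_k;X_{k,\textnormal{sum}}+Z_k)=\mathsf{h}(X_{k,\textnormal{sum}}+Z_k)-\mathsf{h}(X_{k-1,\textnormal{sum}}+Z_k).
\]
I would lower-bound the first entropy by the EPI, using $\mathsf{h}(X_{k,\textnormal{sum}})=\log(e{\rho}_k\EE)$ for the exponential, to get $\tfrac12\log(2\pi e\sigma_k^2)+\tfrac12\log\bigl(1+e{\rho}_k^2\EE^2/(2\pi\sigma_k^2)\bigr)$; and I would upper-bound the second entropy by noting that $X_{k-1,\textnormal{sum}}\ge0$ has mean ${\rho}_{k-1}\EE$ and is independent of $Z_k$, so $\mathsf{h}(X_{k-1,\textnormal{sum}}+Z_k)\le \mathsf{C}_{\textnormal{ub}}({\rho}_{k-1}\EE,\sigma_k)+\mathsf{h}(Z_k)$, exactly the bound already used in the two-user case (Theorem~\ref{theorem: inner bound only ave}). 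Subtracting gives $R_k^{\textnormal{in}}(\bm{\rho})$, and ranging over $\bm{\rho}\in\mathfrak{D}_{\bm{\rho}}$ delivers the claimed region. The boundary conventions are consistent: ${\rho}_0=0$ forces $\mathsf{C}_{\textnormal{ub}}(0,\sigma_1)=0$ and $X_{0,\textnormal{sum}}=0$, recovering $R_1^{\textnormal{in}}$, while ${\rho}_K=1$ imposes the full average constraint on $X$.

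The part that needs the most care is the recursive decomposition in the first step: I must verify that building $X_1,\dots,X_K$ as jointly independent with the stated marginals is consistent with every partial sum being exponential. This follows by induction from Lemma~\ref{lemma: decom of exponential}, since each application requires only the pairwise independence $X_{k-1,\textnormal{sum}}\perp X_k$ supplied by joint independence, and the lemma then certifies $X_{k,\textnormal{sum}}=X_{k-1,\textnormal{sum}}+X_k\sim\textsf{Exp}({\rho}_k\EE)$. A secondary point worth flagging is why, unlike Theorem~\ref{theorem: K-user inner bound only peak}, no convex-hull or time-sharing operation appears here: the exponential decomposition is parametrized by the continuous pair ${\rho}_{k-1}\le{\rho}_k$ rather than an integer $N$, so the achievable tuples already sweep out a connected region and the plain union $\cup_{\bm{\rho}\in\mathfrak{D}_{\bm{\rho}}}$ suffices.
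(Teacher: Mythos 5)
Your proposal is correct and follows essentially the same route as the paper: the identical code construction ($X_1\sim\textsf{Exp}(\rho_1\EE)$ plus the same mixed discrete--exponential marginals for $X_2,\dots,X_K$), the recursive application of the exponential decomposition lemma to make each partial sum $X_{k,\textnormal{sum}}\sim\textsf{Exp}(\rho_k\EE)$, the instantiation $U_k=\sum_{n=k}^K X_n$ in Lemma~\ref{lemma: Kuser}, and the EPI lower bound combined with the max-entropy upper bound $\mathsf{h}(X_{k-1,\textnormal{sum}}+Z_k)\leq\mathsf{C}_{\textnormal{ub}}(\rho_{k-1}\EE,\sigma_k)+\mathsf{h}(Z_k)$. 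The paper merely compresses the rate computation into ``following similar steps in the proof of Theorem~\ref{theorem: K-user inner bound only peak}'' (and even cites the uniform-decomposition lemma in place of the exponential one, an evident typo), so your fuller write-out, including the justification of the recursion via $\rho_{k-1}\leq\rho_k$, matches its intent exactly.
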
 
\begin{proof}
	Assume $X_1$ follows $\textsf{Exp}(\rho_1\EE)$ and $\forall  k\in\{2,3,\cdots,K\}$,
	{\setlength\abovedisplayskip{4.85pt} 
	\setlength\belowdisplayskip{4.85pt}
	\begin{IEEEeqnarray}{rCl}
		p_{X_k}(x_k) 
		&=& \frac{{\rho}_{k-1}}{{\rho}_k} \delta(x_k) + \left( 1-\frac{{\rho}_{k-1}}{{\rho}_k} \right) \times \frac{1}{{\rho}_k\EE}e^{-\frac{x_k}{{\rho}_k\EE}},\ x_k\in[0,+\infty).
	\end{IEEEeqnarray}}With Lemma~\ref{lemma: decom of uniform}, we obtain that $X_{k,\textnormal{sum}}$ in \eqref{eq: X_sum def} follows $\textsf{Exp}(\rho_k\EE)$ and $X$ follows $\textsf{Exp}(\EE)$, which satisfies the average-intensity constraint in \eqref{eq:ave cons}. Then, following similar steps in the proof of Theorem~\ref{theorem: K-user inner bound only peak}, we can complete the proof.
\end{proof}

We propose the outer bound on the capacity region in Theorem~\ref{theorem: K-user outer bound only ave}, whose proof follows similar arguments as in the proof of Theorem~\ref{theorem: K-user outer bound only peak} and omitted here.	
\begin{theorem}[\textbf{Outer Bound}]\label{theorem: K-user outer bound only ave}
	When the input is only subject to the average-intensity constraint in \eqref{eq:ave cons}, the capacity region of a $K$-user OI-BC is outer bounded by\\
	$\cup_{\bm{\rho}\in\mathfrak{D}_{\bm{\rho}}} \left( R_{1}^{\textnormal{out}}(\bm{\rho}),R_{2}^{\textnormal{out}}(\bm{\rho}),\cdots,R_{K}^{\textnormal{out}}(\bm{\rho}) \right)$, where
	{\setlength\abovedisplayskip{4.85pt} 
	\setlength\belowdisplayskip{4.85pt}
	\begin{IEEEeqnarray}{rCl}
		R_{k}^{\textnormal{out}}(\bm{\rho}) &=& \frac{1}{2} \log \left\{\frac{ \sigma_k^2 + \sigma_K^2 \left( e^{ 2\mathsf{C}_{\textnormal{ub}}({\rho}_k\EE,\sigma_K) } -1 \right) }{ \sigma_k^2 + \sigma_K^2 \left( e^{ 2\mathsf{C}_{\textnormal{ub}}({\rho}_{k-1}\EE,\sigma_K) } -1 \right) } \right\},\quad k\in[K],\label{eq: Ub only ave K-user}
	\end{IEEEeqnarray}}with $\mathfrak{D}_{\bm{\rho}}=\left\{\bm{\rho}=[{\rho}_0,{\rho}_1,\cdots,{\rho}_K]: {\rho}_{k}\in[0,1],\ {\rho}_{k-1}\leq {\rho}_{k}, \ \forall k\in[K], \ {\rho}_0=0,\ {\rho}_K=1\right\}$.
\end{theorem}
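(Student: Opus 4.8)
The plan is to mirror the proof of Theorem~\ref{theorem: K-user outer bound only peak} essentially verbatim, replacing the peak amplitude $\amp$ by the average intensity $\EE$ and using the average-intensity single-user upper bound $\mathsf{C}_{\textnormal{ub}}(\EE,\sigma)$ of Lemma~\ref{lemma: 1-user capacity only ave} in place of the peak-constrained one. The only structural facts that the earlier proof uses about $\mathsf{C}_{\textnormal{ub}}$ are that it upper bounds the single-user mutual information under the relevant input constraint, that it is continuous and monotonically increasing in its intensity argument, and that it tends to $0$ as that argument tends to $0$; all of these transfer to the average-intensity bound in the regime of interest, so the entire chain of inequalities goes through with $\amp\mapsto\EE$.

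First I would treat user $K$. Using the Markov chain $U_K\rightarrow X\rightarrow Y_K$ I would sandwich $\mathsf{h}(Z_K) \leq \mathsf{h}(Y_K|U_K) \leq \mathsf{C}_{\textnormal{ub}}(\EE,\sigma_K)+\mathsf{h}(Z_K)$, and then invoke monotonicity together with the intermediate value theorem to produce some $\rho_{K-1}\in[0,1]$ with $\mathsf{h}(Y_K|U_K)=\mathsf{C}_{\textnormal{ub}}(\rho_{K-1}\EE,\sigma_K)+\mathsf{h}(Z_K)$. Substituting this into $R_K\leq \mathsf{I}(U_K;Y_K)=\mathsf{h}(Y_K)-\mathsf{h}(Y_K|U_K)$ and bounding $\mathsf{h}(Y_K)\leq \mathsf{C}_{\textnormal{ub}}(\EE,\sigma_K)+\mathsf{h}(Z_K)$ yields the $k=K$ term of \eqref{eq: Ub only ave K-user}.

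Next I would establish, by downward induction on $k$ from $K$ to $1$, the identity $\mathsf{h}(Y_k|U_k)=\frac{1}{2}\log\bigl(2\pi e\sigma_k^2+2\pi e\sigma_K^2(e^{2\mathsf{C}_{\textnormal{ub}}(\rho_{k-1}\EE,\sigma_K)}-1)\bigr)$ for some nested parameters $\rho_0\leq\cdots\leq\rho_{K-1}$ in $[0,1]$. The inductive step reuses the degraded structure $Y_{i-1}+\widetilde{Z}_i=Y_i$: the Markov chain $U_i\rightarrow U_{i-1}\rightarrow Y_{i-1}$ gives $\mathsf{h}(Y_{i-1}|U_{i-1})\leq\mathsf{h}(Y_{i-1}|U_i)$, and the conditional EPI bounds the latter by $\frac{1}{2}\log(e^{2\mathsf{h}(Y_i|U_i)}-2\pi e(\sigma_i^2-\sigma_{i-1}^2))$; combining this with the lower bound $\mathsf{h}(Y_{i-1}|U_{i-1})\geq\mathsf{h}(Y_{i-1}|X)=\frac{1}{2}\log(2\pi e\sigma_{i-1}^2)$ and again applying the intermediate value theorem produces $\rho_{i-2}\in[0,\rho_{i-1}]$ realizing the claimed form. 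In parallel, the conditional EPI applied to $Y_k+\widetilde{Z}_{k+1}=Y_{k+1}$ bounds $\mathsf{h}(Y_k|U_{k+1})\leq\frac{1}{2}\log(2\pi e\sigma_k^2+2\pi e\sigma_K^2(e^{2\mathsf{C}_{\textnormal{ub}}(\rho_k\EE,\sigma_K)}-1))$. Substituting both expressions into $R_k\leq\mathsf{I}(U_k;Y_k|U_{k+1})=\mathsf{h}(Y_k|U_{k+1})-\mathsf{h}(Y_k|U_k)$, which is valid by the Markov chain $U_{k+1}\rightarrow U_k\rightarrow Y_k$, then produces the ratio in \eqref{eq: Ub only ave K-user} for every $k\in[K-1]$, completing the derivation.

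The main obstacle I expect is the existence-of-$\rho$ step rather than any algebra: it requires $\mathsf{C}_{\textnormal{ub}}(\rho\EE,\sigma_K)$ to sweep continuously down to $0$ as $\rho\to0$, which is immediate for the high-SNR form $\frac{1}{2}\log(1+e\rho^2\EE^2/(2\pi\sigma_K^2))$ but must be handled with a little care for the exact form of Lemma~\ref{lemma: 1-user capacity only ave}; once this continuity and monotonicity are in hand, the only remaining thing to watch is the bookkeeping of the nested parameters $\rho_k$ through the induction.
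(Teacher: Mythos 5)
Your proposal is correct and is precisely the argument the paper intends: the paper omits the proof of this theorem, stating explicitly that it ``follows similar arguments as in the proof of Theorem~\ref{theorem: K-user outer bound only peak},'' and your adaptation --- the sandwich on $\mathsf{h}(Y_K|U_K)$, the downward induction on $\mathsf{h}(Y_k|U_k)$ via the conditional EPI on $Y_{i-1}+\widetilde{Z}_i=Y_i$, and the bound on $\mathsf{h}(Y_k|U_{k+1})$, all with $\amp\mapsto\EE$ --- reproduces that argument step for step. Your closing caveat is also well placed: the exact expression $\mathsf{C}_{\textnormal{ub}}(\EE,\sigma)=\frac{1}{2}\log\bigl\{\frac{e}{2\pi}\bigl(2+\frac{\EE}{\sigma}\bigr)^2\bigr\}$ does not actually vanish as $\EE\to 0$ (it tends to $\frac{1}{2}\log\frac{2e}{\pi}>0$), a gap in the intermediate-value step that the paper's own two-user proof glosses over in the same way, and which is harmless only in the high-SNR regime where the asymptotic form is used.
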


The high-SNR capacity region is directly obtained by substituting the single-user capacity result in Lemma~\ref{lemma: 1-user capacity only ave} into \eqref{eq: Rk Lb only average K-user} and \eqref{eq: Ub only ave K-user}. We summarize it as follows.

\begin{theorem}[\textbf{Asymptotic Capacity Region}]\label{theorem: high SNR K-user capacity region only ave}
	When the input is only subject to the average-intensity constraint in \eqref{eq:ave cons}, at high SNR, the capacity region of a $K$-user OI-BC asymptotically converges to the region where the rate tuple $\left( {R}_1,{R}_2,\cdots,{R}_K \right)$ satisfies
	{\setlength\abovedisplayskip{4.85pt} 
	\setlength\belowdisplayskip{4.85pt}
	\begin{IEEEeqnarray}{rCl}
		R_k & \ \dot{\leq}\ & \frac{1}{2} \log \left( \frac{ e{\rho}_k^2\amp^2 +  2\pi \sigma_k^2  }{ e{\rho}_{k-1}^2\amp^2 +  2\pi \sigma_k^2 } \right),\quad k\in[K],
	\end{IEEEeqnarray}}with $\bm{\rho}\in \mathfrak{D}_{\bm{\rho}}$ and $\mathfrak{D}_{\bm{\rho}}=\left\{\bm{\rho}: {\rho}_k \in[0,1], \ {\rho}_{k-1}\leq {\rho}_{k}, \ \forall k\in[K], \ {\rho}_0=0,\ {\rho}_K=1\right\}$.
\end{theorem}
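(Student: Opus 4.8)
The plan is to obtain the high-SNR region by the same sandwiching strategy used in the proof of Theorem~\ref{theorem: high SNR K-user capacity region only peak}, but the argument here is considerably shorter because the exponential decomposition of Lemma~\ref{lemma: decom of exponential} is parameterized \emph{continuously} rather than by integers. Concretely, I would substitute the high-SNR single-user capacity \eqref{eq: only ave high-SNR capacity} into both the inner bound \eqref{eq: Rk Lb only average K-user} of Theorem~\ref{theorem: K-user inner bound only ave} and the outer bound \eqref{eq: Ub only ave K-user} of Theorem~\ref{theorem: K-user outer bound only ave}, and then show that for every fixed $\bm{\rho}\in\mathfrak{D}_{\bm{\rho}}$ the two resulting rate tuples agree in the $\doteq$ sense, so the inner and outer families coincide and jointly pin down the boundary.

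First I would process the outer bound. Using $e^{2\mathsf{C}_{\textnormal{ub}}(\rho_k\EE,\sigma_K)}\doteq 1+\frac{e\rho_k^2\EE^2}{2\pi\sigma_K^2}$ from \eqref{eq: only ave high-SNR capacity}, the key observation is that $\sigma_K^2\bigl(e^{2\mathsf{C}_{\textnormal{ub}}(\rho_k\EE,\sigma_K)}-1\bigr)\doteq \frac{e\rho_k^2\EE^2}{2\pi}$, so the noise level $\sigma_K^2$ of the weakest user cancels and the numerator and denominator of \eqref{eq: Ub only ave K-user} collapse to $\sigma_k^2+\frac{e\rho_k^2\EE^2}{2\pi}$ and $\sigma_k^2+\frac{e\rho_{k-1}^2\EE^2}{2\pi}$, respectively. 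This would give
\begin{IEEEeqnarray}{rCl}
R_{k}^{\textnormal{out}}(\bm{\rho}) \doteq \frac{1}{2}\log\left(\frac{2\pi\sigma_k^2+e\rho_k^2\EE^2}{2\pi\sigma_k^2+e\rho_{k-1}^2\EE^2}\right),\quad k\in[K]. \nonumber
\end{IEEEeqnarray}

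Next I would process the inner bound the same way. Substituting $\mathsf{C}_{\textnormal{ub}}(\rho_{k-1}\EE,\sigma_k)\doteq \frac{1}{2}\log\bigl(1+\frac{e\rho_{k-1}^2\EE^2}{2\pi\sigma_k^2}\bigr)$ into \eqref{eq: Rk Lb only average K-user} and combining the two logarithms would yield exactly
\begin{IEEEeqnarray}{rCl}
R_{k}^{\textnormal{in}}(\bm{\rho}) \doteq \frac{1}{2}\log\left(\frac{2\pi\sigma_k^2+e\rho_k^2\EE^2}{2\pi\sigma_k^2+e\rho_{k-1}^2\EE^2}\right),\quad k\in[K], \nonumber
\end{IEEEeqnarray}
which is term-by-term identical to $R_k^{\textnormal{out}}(\bm{\rho})$ for the same $\bm{\rho}$. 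Since the achievable family of Theorem~\ref{theorem: K-user inner bound only ave} and the outer-bound family of Theorem~\ref{theorem: K-user outer bound only ave} are both indexed by $\bm{\rho}\in\mathfrak{D}_{\bm{\rho}}$ and coincide, the capacity region is squeezed between them and converges to the stated region.

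Unlike the peak-constrained case, no convex-hull / time-sharing step and no hyperplane-plus-corner-point argument (as in Theorem~\ref{theorem: high SNR K-user capacity region only peak}) is required here, because the continuous parameter $\rho$ already sweeps out the full boundary directly. The only point demanding care is the asymptotic bookkeeping: one must verify that the $\sigma_K^2$ pervading the outer bound \eqref{eq: Ub only ave K-user} is annihilated by the high-SNR expansion, so that each $R_k^{\textnormal{out}}$ ends up depending only on the per-user noise $\sigma_k^2$ and thereby matches the inner bound. I expect this cancellation to be the crux of the verification, since it is the single place where the $\doteq$ relation does genuine work.
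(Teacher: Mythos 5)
Your proposal is correct and takes essentially the same approach as the paper: the paper proves this theorem in one line, obtaining the region by direct substitution of the high-SNR single-user capacity \eqref{eq: only ave high-SNR capacity} into the inner bound \eqref{eq: Rk Lb only average K-user} and the outer bound \eqref{eq: Ub only ave K-user}, which is exactly your term-by-term matching, with the cancellation of $\sigma_K^2$ being the only nontrivial bookkeeping. Your observation that the continuous $\bm{\rho}$-parametrization makes the convex-hull and hyperplane-plus-corner-point machinery unnecessary likewise matches the paper's structure, which reserves that argument for the peak-constrained case in Theorem~\ref{theorem: high SNR K-user capacity region only peak}.
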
 

The derived bounds on the capacity region are shown in Fig.~\ref{fig: K-user only ave}, where we assume $K=3$, $\sigma_3 = 2\sigma_2=4\sigma_1$, and $\mathrm{ESNR}_k=\frac{\EE}{\sigma_k}$, $k\in\{1,2,3\}$. 

\begin{figure}[!htbp]
	\centering
	\subfigure[Bounds on capacity region.]{
		\includegraphics[width=3.3in]{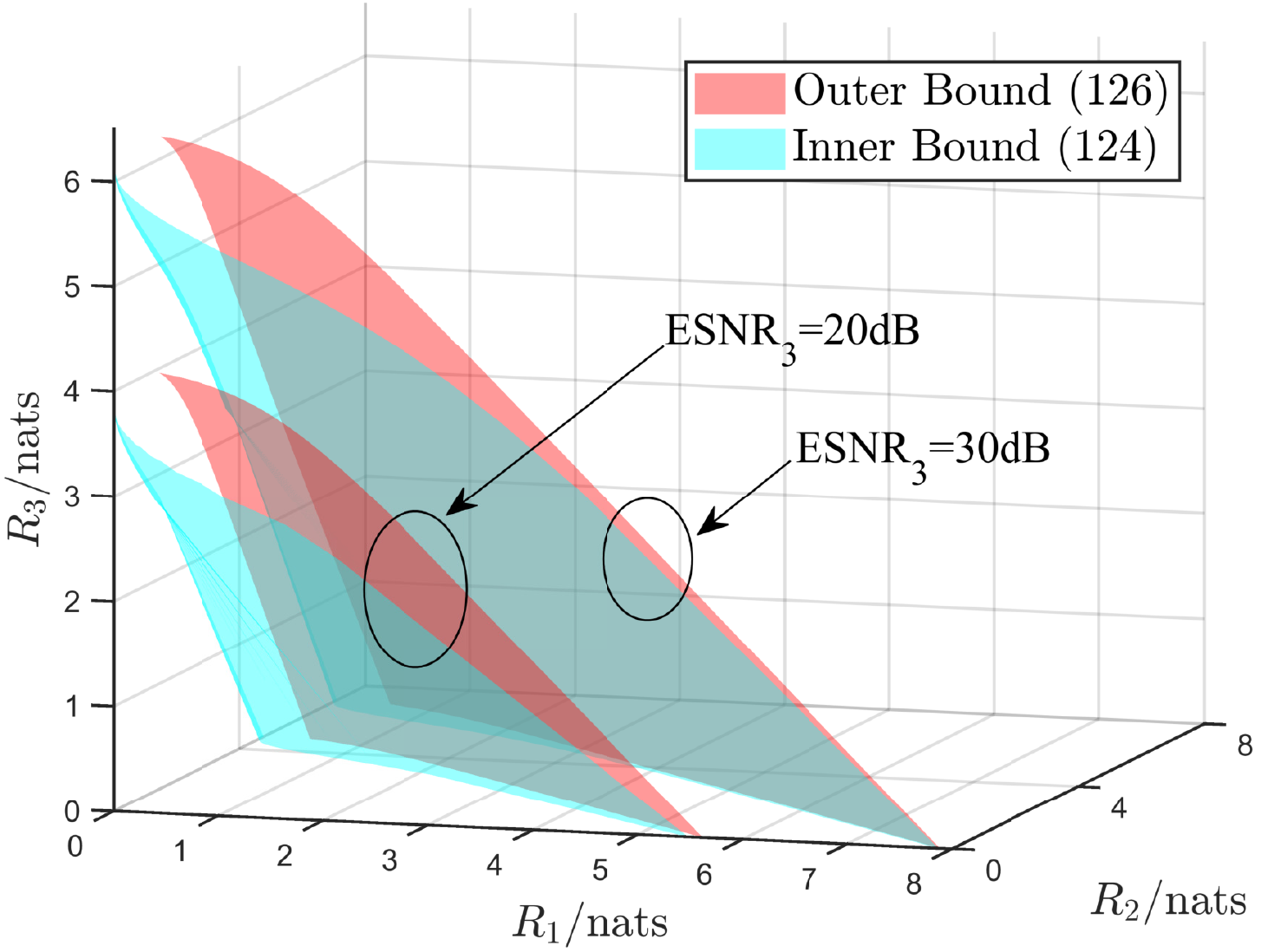}
	}\hspace{-2.5mm}
	\subfigure[Projection on the $R_1$ and $R_2$ plane.]{
		\includegraphics[width=3.2in]{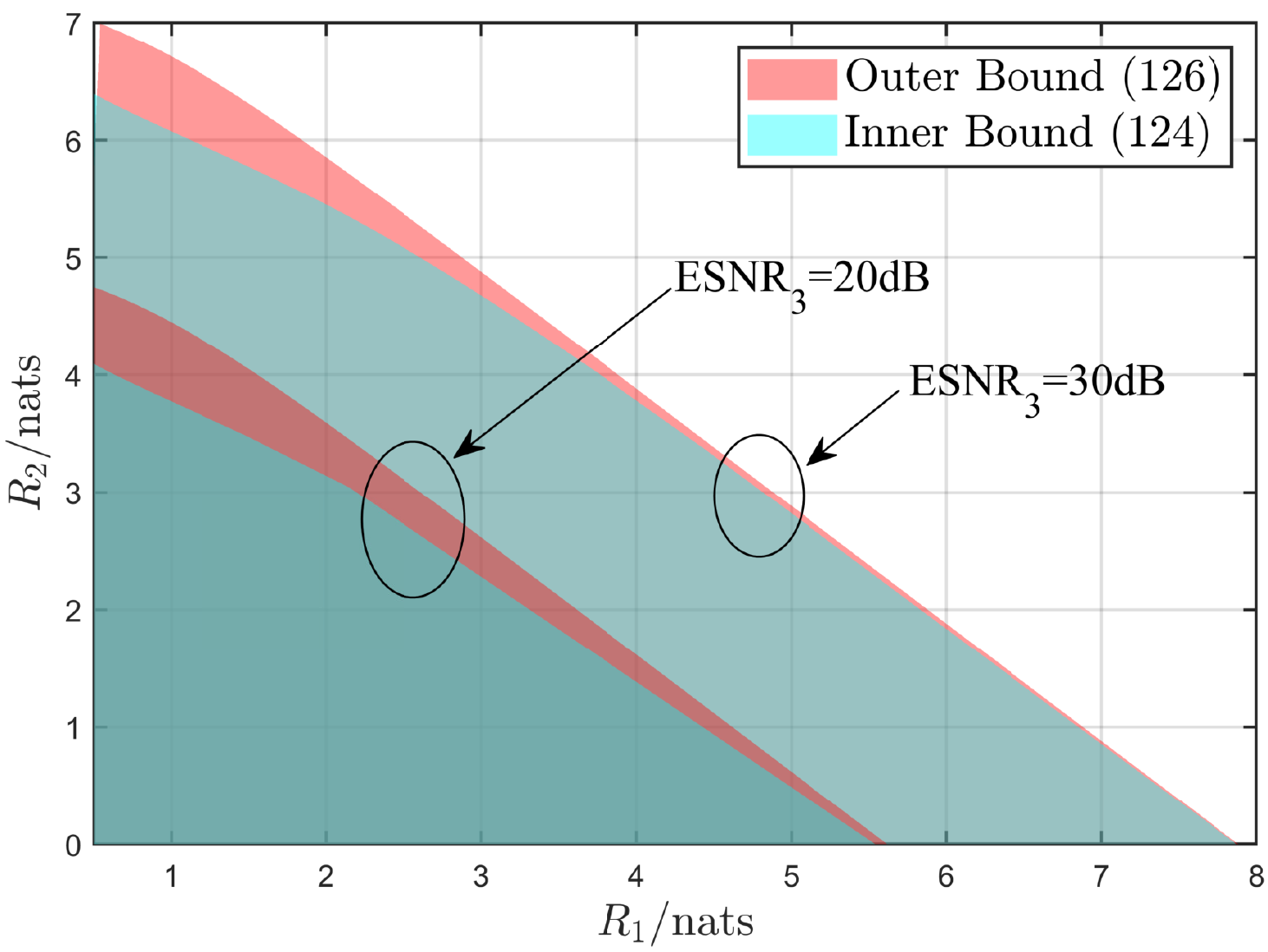}
	}
	\subfigure[Projection on the $R_2$ and $R_3$ plane.]{
		\includegraphics[width=3.2in]{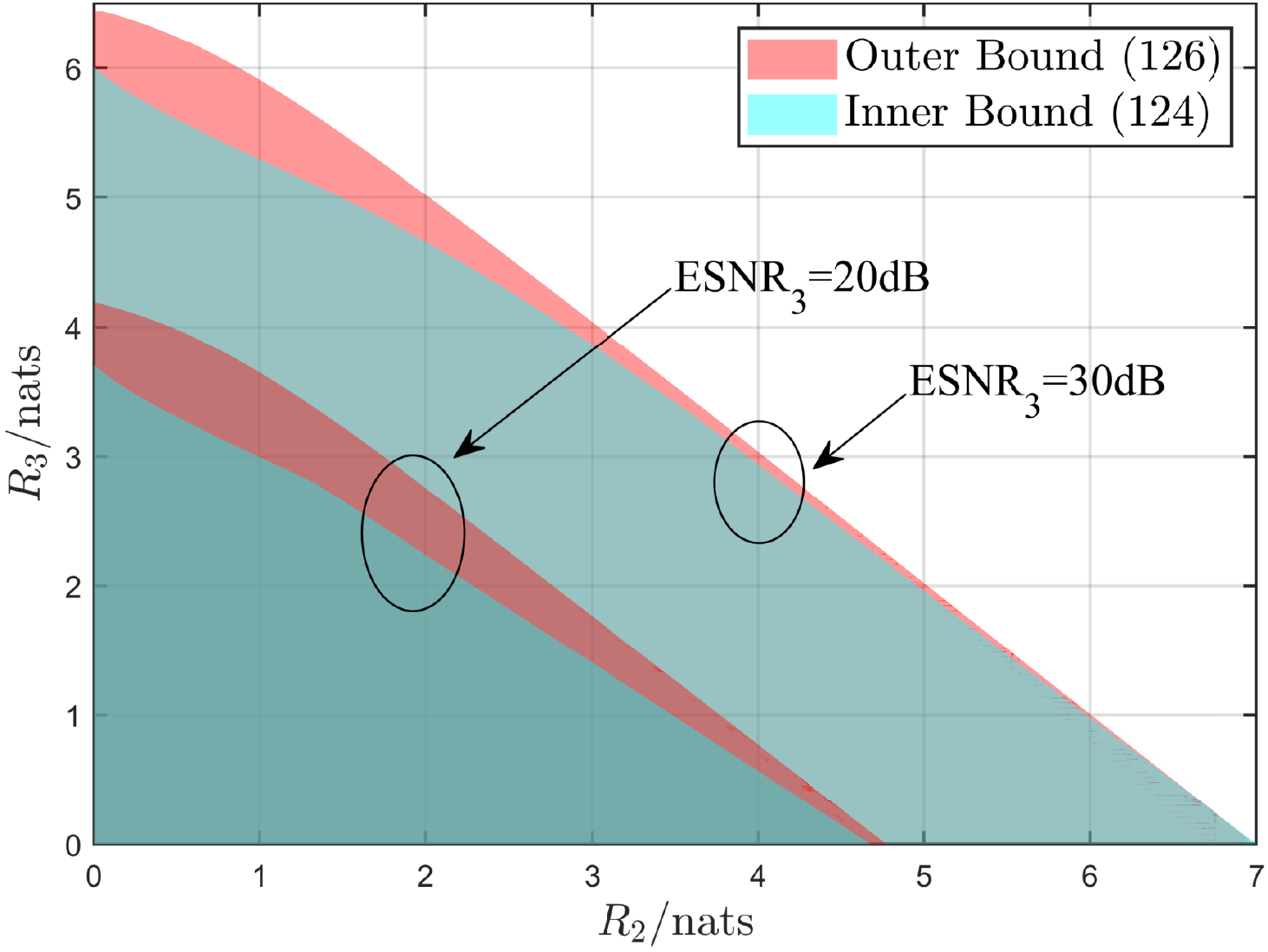}
	}\hspace{-0mm}
	\subfigure[Projection on the $R_1$ and $R_3$ plane.]{
		\includegraphics[width=3.2in]{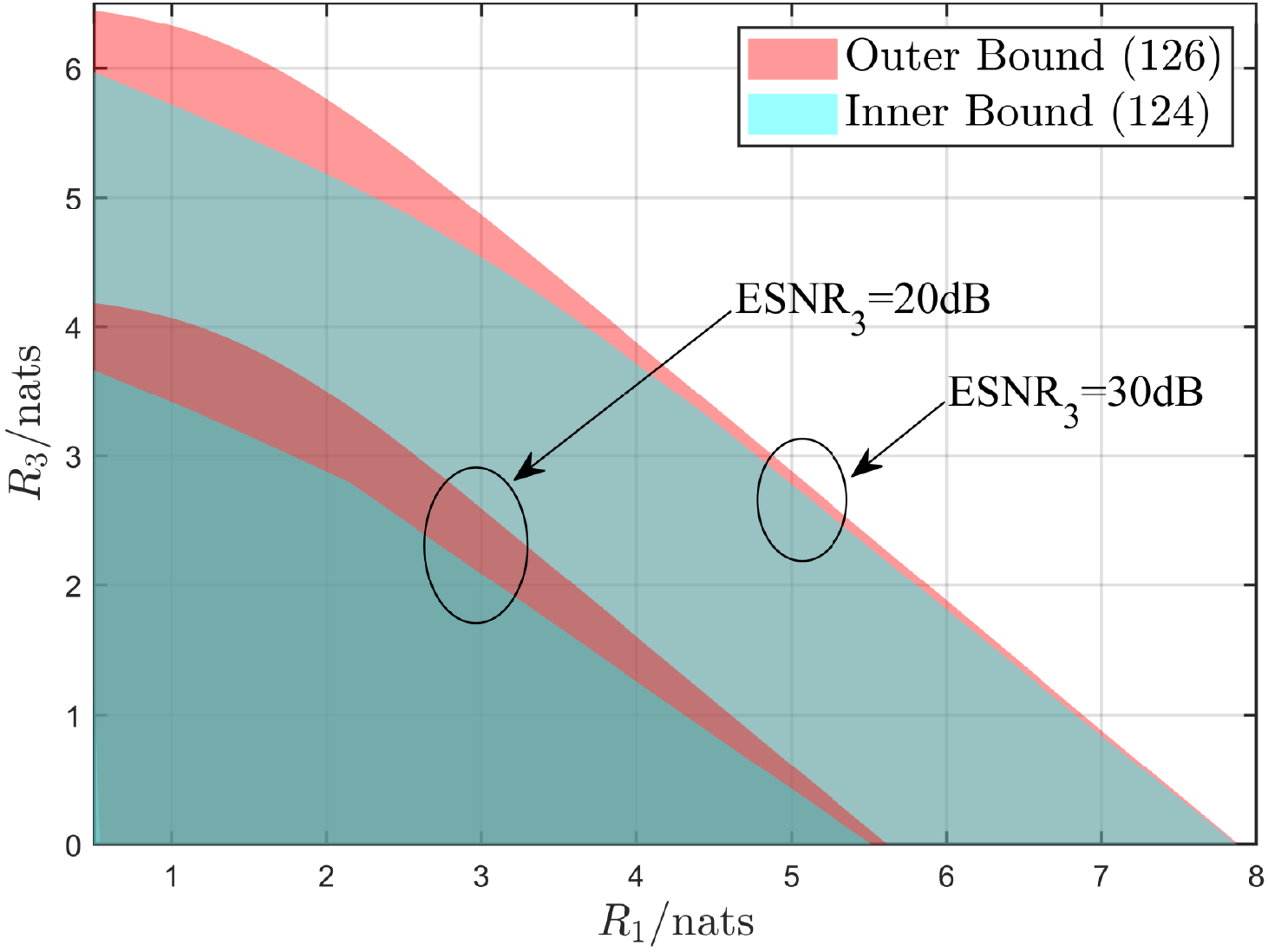}
	}	
	\caption{Bounds on capacity region of $3$-user OI-BC with average-intensity constraint.}
	\label{fig: K-user only ave}
	\vspace{-0.7cm}
\end{figure}

\subsection{Peak- and Average-Intensity Constrained OI-BC}
The inner bound on the capacity region is proposed as follows.
\begin{theorem}[\textbf{Inner Bound}]\label{theorem: K-user inner bound both power}
	When the input is subject to both peak- and average-intensity constraints in \eqref{eq:peak cons} and \eqref{eq:ave cons}, the rate tuples in the set $\mathsf{Conv}\{ \cup_{\bm{N}\in\mathfrak{D}_{\bm{N}}} \left( R_{1}^{\textnormal{in}}(\bm{N}),R_{2}^{\textnormal{in}}(\bm{N}),\cdots,R_{K}^{\textnormal{in}}(\bm{N}) \right) \}$ are all achievable for a $K$-user OI-BC, where
	{\setlength\abovedisplayskip{4.85pt} 
	\setlength\belowdisplayskip{4.85pt}
	\begin{IEEEeqnarray}{rCl}
		R_{k}^{\textnormal{in}}(\bm{N}) 
		&=& \frac{1}{2} \log \left\{ 1 + 
		\exp \left(2 - \frac{ \frac{2\mu^\star}{\prod_{n=k}^K {N}_{n}} e^{- \frac{\mu^\star}{\prod_{n=k}^K {N}_{n}} } }{ 1-e^{-\frac{\mu^\star}{\prod_{n=k}^K {N}_{n}} } }\right) \left( \frac{1- e^{- \frac{\mu^\star}{\prod_{n=k}^K {N}_{n}} } }{ \mu^\star }	\right)^2 
		\frac{\amp^2}{2 \pi e \sigma_k^2} \right\}\nonumber\\
		&& -\mathsf{C}_{\textnormal{ub}} \left( \frac{\amp}{\prod_{n=k-1}^K {N}_{n}},\sigma_k;\frac{\mu^\star}{\prod_{n=k-1}^K {N}_{n}} \right), \quad k\in[K], \qquad
		\label{eq: Rk Lb both power K-user}
	\end{IEEEeqnarray}}with $\mathfrak{D}_{\bm{N}}=\{ \bm{N}=[N_0,N_1,\cdots,N_K]: {N}_{n}\in \mathbb{N}^{+}, \ \forall n\in \{0\cup[K]\}, \ {N}_0\gg 0,\ {N}_K=1 \}$.
\end{theorem}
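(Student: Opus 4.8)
The plan is to replicate the superposition-coding argument already used for the $K$-user peak-intensity case in Theorem~\ref{theorem: K-user inner bound only peak}, replacing the uniform decomposition by the truncated-exponential decomposition of Lemma~\ref{lemma: decom of truncated exponential}. First I would encode $M_1,\ldots,M_K$ independently into $X_1,\ldots,X_K$, taking $X_1\sim\textsf{Texp}\bigl(\frac{\amp}{\prod_{n=1}^K N_n},\frac{\mu^\star}{\prod_{n=1}^K N_n}\bigr)$ and, for each $k\in\{2,\ldots,K\}$, letting $X_k$ be the discrete random variable that appears as the $S_2$-component when $\textsf{Texp}\bigl(\frac{\amp}{\prod_{n=k}^K N_n},\frac{\mu^\star}{\prod_{n=k}^K N_n}\bigr)$ is decomposed with splitting factor $N_{k-1}$. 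With the superposition $X=\sum_{n=1}^K X_n$ and the partial sums $X_{k,\textnormal{sum}}=\sum_{n=1}^k X_n$ from \eqref{eq: X_sum def}, I would prove by induction on $k$ that $X_{k,\textnormal{sum}}\sim\textsf{Texp}\bigl(\frac{\amp}{\prod_{n=k}^K N_n},\frac{\mu^\star}{\prod_{n=k}^K N_n}\bigr)$. The inductive step is one application of Lemma~\ref{lemma: decom of truncated exponential} with $\amp'=\frac{\amp}{\prod_{n=k}^K N_n}$, $\mu'=\frac{\mu^\star}{\prod_{n=k}^K N_n}$ and $N=N_{k-1}$, where the continuous component $S_1$ is exactly $\textsf{Texp}\bigl(\frac{\amp}{\prod_{n=k-1}^K N_n},\frac{\mu^\star}{\prod_{n=k-1}^K N_n}\bigr)=X_{k-1,\textnormal{sum}}$. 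Taking $k=K$ and using $N_K=1$ then gives $X\sim\textsf{Texp}(\amp,\mu^\star)$, which meets both \eqref{eq:peak cons} and \eqref{eq:ave cons} by the defining equation for $\mu^\star$.

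Next I would instantiate $U_k=\sum_{n=k}^K X_n$ in Lemma~\ref{lemma: Kuser} and evaluate the achievable rate exactly as in \eqref{eq: 150}--\eqref{eq: I(Uk;Yk|U_{k+1}) inequality}: conditioning collapses the mutual information to $\mathsf{I}(X_k;X_{k,\textnormal{sum}}+Z_k)=\mathsf{h}(X_{k,\textnormal{sum}}+Z_k)-\mathsf{h}(X_{k-1,\textnormal{sum}}+Z_k)$. I would lower bound the first term by the EPI, giving $\frac{1}{2}\log\bigl(1+e^{2\mathsf{h}(X_{k,\textnormal{sum}})}/(2\pi e\sigma_k^2)\bigr)$, and upper bound the second term by noting that $X_{k-1,\textnormal{sum}}$ is supported on $[0,\amp/\prod_{n=k-1}^K N_n]$ and realizes the peak/average pair associated with $\mu^\star/\prod_{n=k-1}^K N_n$, so that $\mathsf{h}(X_{k-1,\textnormal{sum}}+Z_k)\le \mathsf{C}_{\textnormal{ub}}\bigl(\frac{\amp}{\prod_{n=k-1}^K N_n},\sigma_k;\frac{\mu^\star}{\prod_{n=k-1}^K N_n}\bigr)+\mathsf{h}(Z_k)$.

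To match the closed form in \eqref{eq: Rk Lb both power K-user}, I would compute the differential entropy of the truncated exponential directly, obtaining $e^{2\mathsf{h}(X_{k,\textnormal{sum}})}=\exp\bigl(2-\frac{2\mu'e^{-\mu'}}{1-e^{-\mu'}}\bigr)\bigl(\frac{\amp(1-e^{-\mu'})}{\mu^\star}\bigr)^2$ with $\mu'=\mu^\star/\prod_{n=k}^K N_n$, which upon substitution produces precisely the first logarithmic term of \eqref{eq: Rk Lb both power K-user}. Combining the two bounds then yields $R_k^{\textnormal{in}}(\bm{N})$ for each $k\in[K]$, and the closure to $\mathsf{Conv}\{\cdot\}$ follows by the same time-sharing argument used in Theorems~\ref{theorem: inner bound only peak} and~\ref{theorem: K-user inner bound only peak}.

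The main difficulty I anticipate is bookkeeping rather than conceptual: carefully propagating the two products $\prod_{n=k}^K N_n$ and $\prod_{n=k-1}^K N_n$ through the iterated decomposition so that the parameters of $X_{k,\textnormal{sum}}$ and $X_{k-1,\textnormal{sum}}$ align with the two arguments of $\mathsf{C}_{\textnormal{ub}}$, verifying that the discrete component $X_k$ from Lemma~\ref{lemma: decom of truncated exponential} is a valid distribution for every $\bm{N}\in\mathfrak{D}_{\bm{N}}$, and handling the boundary index $k=1$ where the convention $N_0\gg0$ forces $\mathsf{C}_{\textnormal{ub}}\bigl(\frac{\amp}{\prod_{n=0}^K N_n},\sigma_1;\frac{\mu^\star}{\prod_{n=0}^K N_n}\bigr)\to0$. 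The one genuine subtlety is justifying the upper bound on $\mathsf{h}(X_{k-1,\textnormal{sum}}+Z_k)$: it requires the single-user converse of Lemma~\ref{lemma: 1-user capacity bpower} to apply to $X_{k-1,\textnormal{sum}}$, which is legitimate precisely because $X_{k-1,\textnormal{sum}}$ attains the peak and average values encoded by $\mu^\star/\prod_{n=k-1}^K N_n$.
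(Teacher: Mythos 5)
Your proposal is correct and follows essentially the same route as the paper: iterated application of Lemma~\ref{lemma: decom of truncated exponential} to build $X_{k,\textnormal{sum}}\sim\textsf{Texp}\bigl(\frac{\amp}{\prod_{n=k}^K N_n},\frac{\mu^\star}{\prod_{n=k}^K N_n}\bigr)$, instantiation $U_k=\sum_{n=k}^K X_n$ in Lemma~\ref{lemma: Kuser}, the EPI lower bound on $\mathsf{h}(X_{k,\textnormal{sum}}+Z_k)$, and the single-user converse bound $\mathsf{h}(X_{k-1,\textnormal{sum}}+Z_k)\le \mathsf{C}_{\textnormal{ub}}\bigl(\frac{\amp}{\prod_{n=k-1}^K N_n},\sigma_k;\frac{\mu^\star}{\prod_{n=k-1}^K N_n}\bigr)+\mathsf{h}(Z_k)$, exactly as in the paper's proof (which defers these steps to Theorem~\ref{theorem: K-user inner bound only peak}). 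Your explicit induction and entropy computation are sound, and you correctly invoke the truncated-exponential decomposition where the paper's text mistakenly cites Lemma~\ref{lemma: decom of uniform}.
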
 
\begin{proof}
	Assume $X_1$ follows $\textsf{Texp}\left(\frac{\amp}{\prod_{n=1}^K N_n},\frac{\mu^\star}{\prod_{n=1}^K N_n}\right)$ and $\forall k\in\{2,3,\cdots,K\}$,
	{\setlength\abovedisplayskip{4.85pt} 
	\setlength\belowdisplayskip{4.85pt}
	\begin{IEEEeqnarray}{rCl}
		p_{X_k}(x_k) 
		&=& \frac{ 1-e^{ - \frac{\frac{\mu^\star}{ \prod_{n=k}^KN_n }}{N_{k-1}} } }{ 1-e^{-\frac{\mu^\star}{ \prod_{n=k}^KN_n }} } 
		\sum_{n=0}^{N_{k-1}-1} 
		e^{- \frac{\frac{\mu^\star}{ \prod_{n=k}^KN_n } n }{N_{k-1}} } 
		\delta \left(x_k-\frac{n}{N_{k-1} }\times \frac{\amp}{\prod_{n=k}^KN_n} \right).\quad
	\end{IEEEeqnarray}}With Lemma~\ref{lemma: decom of uniform}, we obtain that $X_{k,\textnormal{sum}}$ in \eqref{eq: X_sum def} follows $\textsf{Texp} \left( \frac{\amp}{\prod_{n=k}^K N_n}, \frac{\mu^\star}{\prod_{n=k}^K N_n} \right)$ and $X$ follows $\textsf{Texp} ( \amp, \mu^\star )$, which satisfies the peak- and average-intensity constraints in \eqref{eq:peak cons} and \eqref{eq:ave cons}. Then, similar to the proof of Theorem~\ref{theorem: K-user inner bound only peak}, we can complete the proof.
\end{proof}

The following outer bound has been given in \cite[Theorem 5]{Chaaban2016-2}. We can also utilize the conditional EPI to provide a new proof, which is similar to the proof of Theorem~\ref{theorem: K-user outer bound only peak} and omitted here.
\begin{lemma}[\textbf{Outer Bound}]\label{lemma: K-user outer bound both peak and ave}
	When the input is subject to both peak- and average-intensity constraints in \eqref{eq:peak cons} and \eqref{eq:ave cons}, the capacity region of a $K$-user OI-BC is outer bounded by $\cup_{\bm{\rho}\in\mathfrak{D}_{\bm{\rho}}} \left( R_{1}^{\textnormal{out}}(\bm{\rho}),R_{2}^{\textnormal{out}}(\bm{\rho})\right.,\cdots,$ $\left.R_{K}^{\textnormal{out}}(\bm{\rho}) \right)$, where
	{\setlength\abovedisplayskip{4.85pt} 
	\setlength\belowdisplayskip{4.85pt}
	\begin{IEEEeqnarray}{rCl}
		R_{k}^{\textnormal{out}}(\bm{\rho}) &=& \frac{1}{2} \log \left\{\frac{ \sigma_k^2 + \sigma_K^2 \left( e^{ 2\mathsf{C}_{\textnormal{ub}}({\rho}_k\amp,\sigma_K;\mu^\star ) } -1 \right) }{ \sigma_k^2 + \sigma_K^2 \left( e^{ 2\mathsf{C}_{\textnormal{ub}} ({\rho}_{k-1}\amp,\sigma_K;\mu^\star) } -1 \right) } \right\},\quad k\in[K],\label{eq: Ub both power K-user}
	\end{IEEEeqnarray}}with $\mathfrak{D}_{\bm{\rho}}=\left\{\bm{\rho}=[{\rho}_0,{\rho}_1,\cdots,{\rho}_K]: {\rho}_{k}\in[0,1],\ {\rho}_{k-1}\leq {\rho}_{k}, \ \forall k\in[K], \ {\rho}_0=0,\ {\rho}_K=1\right\}$.
\end{lemma}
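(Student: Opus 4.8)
The plan is to replicate, essentially verbatim, the backward-induction argument used for Theorem~\ref{theorem: K-user outer bound only peak}, with the peak-only single-user bound $\mathsf{C}_{\textnormal{ub}}(\cdot,\sigma_K)$ replaced everywhere by the peak-and-average bound $\mathsf{C}_{\textnormal{ub}}(\cdot,\sigma_K;\mu^\star)$ of Lemma~\ref{lemma: 1-user capacity bpower}. Starting from Lemma~\ref{lemma: Kuser} and the Markov chain $U_{k+1}\rightarrow U_k\rightarrow Y_k$, each rate satisfies $R_k\leq \mathsf{I}(U_k;Y_k|U_{k+1})=\mathsf{h}(Y_k|U_{k+1})-\mathsf{h}(Y_k|U_k)$, so the entire proof reduces to pinning down $\mathsf{h}(Y_k|U_k)$ for every $k$ and upper bounding $\mathsf{h}(Y_k|U_{k+1})$.

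First I would treat user $K$. Since $U_K\rightarrow X\rightarrow Y_K$ is Markov, conditioning gives $\mathsf{h}(Y_K|U_K)\geq \mathsf{h}(Y_K|X)=\mathsf{h}(Z_K)$, while the fact that $X$ obeys both \eqref{eq:peak cons} and \eqref{eq:ave cons} yields $\mathsf{h}(Y_K|U_K)\leq \mathsf{h}(Y_K)\leq \mathsf{C}_{\textnormal{ub}}(\amp,\sigma_K;\mu^\star)+\mathsf{h}(Z_K)$. Invoking continuity and monotonicity of $\rho\mapsto\mathsf{C}_{\textnormal{ub}}(\rho\amp,\sigma_K;\mu^\star)$ on $[0,1]$, there exists $\rho_{K-1}\in[0,1]$ with $\mathsf{h}(Y_K|U_K)=\frac{1}{2}\log\left(2\pi e\sigma_K^2+2\pi e\sigma_K^2(e^{2\mathsf{C}_{\textnormal{ub}}(\rho_{K-1}\amp,\sigma_K;\mu^\star)}-1)\right)$. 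I would then run the same downward induction: assuming $\mathsf{h}(Y_i|U_i)=\frac{1}{2}\log\left(2\pi e\sigma_i^2+2\pi e\sigma_K^2(e^{2\mathsf{C}_{\textnormal{ub}}(\rho_{i-1}\amp,\sigma_K;\mu^\star)}-1)\right)$, I upper bound $\mathsf{h}(Y_{i-1}|U_{i-1})\leq \mathsf{h}(Y_{i-1}|U_i)$ via $U_i\rightarrow U_{i-1}\rightarrow Y_{i-1}$ and apply the conditional EPI to $Y_{i-1}+\widetilde{Z}_i=Y_i$, then lower bound $\mathsf{h}(Y_{i-1}|U_{i-1})\geq \mathsf{h}(Y_{i-1}|X)=\frac{1}{2}\log(2\pi e\sigma_{i-1}^2)$, and use the same monotonicity argument to produce $\rho_{i-2}\in[0,\rho_{i-1}]$ that closes the induction. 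Bounding $\mathsf{h}(Y_k|U_{k+1})$ by the conditional EPI on $Y_k+\widetilde{Z}_{k+1}=Y_{k+1}$ and substituting both expressions into $R_k\leq \mathsf{h}(Y_k|U_{k+1})-\mathsf{h}(Y_k|U_k)$ produces the claimed ratio \eqref{eq: Ub both power K-user}.

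The main obstacle is verifying that $\rho\mapsto\mathsf{C}_{\textnormal{ub}}(\rho\amp,\sigma_K;\mu^\star)$, with $\mu^\star$ held fixed, is continuous, vanishes at $\rho=0$, and is monotonically increasing, so that the intermediate-value step producing each $\rho_{k-1}$ is legitimate. Unlike the peak-only case, where monotonicity of $\mathsf{C}_{\textnormal{ub}}(\amp,\sigma_K)$ in $\amp$ is immediate, here $\mathsf{C}_{\textnormal{ub}}(\cdot,\sigma_K;\mu^\star)$ is defined as an infimum of the involved function $B(\beta_1,\beta_2)$ in \eqref{eq: 80a}, so the monotonicity has to be argued carefully, in the same spirit as Appendix~\ref{app: monotonicity of C_mu(A,sigma)}. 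By contrast, every conditional-EPI step transfers unchanged, since those inequalities depend only on the Gaussian increments $\widetilde{Z}_k$ and not on the input distribution; once the monotonicity is settled, the remaining algebra is identical to that of Theorem~\ref{theorem: K-user outer bound only peak}.
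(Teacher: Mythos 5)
Your proposal is correct and takes essentially the same route as the paper, which omits the proof of Lemma~\ref{lemma: K-user outer bound both peak and ave} precisely because it is the conditional-EPI backward induction of Theorem~\ref{theorem: K-user outer bound only peak} with $\mathsf{C}_{\textnormal{ub}}(\cdot,\sigma_K)$ replaced by $\mathsf{C}_{\textnormal{ub}}(\cdot,\sigma_K;\mu^\star)$ (the bound itself also appearing as Theorem~5 of \cite{Chaaban2016-2}). Your caveat about the intermediate-value step is well placed: Appendix~\ref{app: monotonicity of C_mu(A,sigma)} only treats the high-SNR closed form with the scaled parameter $\rho\mu^\star$, whereas for this finite-SNR lemma with $\mu^\star$ held fixed it suffices to verify continuity of $\rho\mapsto\mathsf{C}_{\textnormal{ub}}(\rho\amp,\sigma_K;\mu^\star)$ and that it vanishes as $\rho\to 0$, since existence of each $\rho_{k-1}\in[0,\rho_k]$ then follows from the intermediate value theorem without requiring full monotonicity.
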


The high-SNR capacity region is proposed as follows.
\begin{theorem}[\textbf{Asymptotic Capacity Region}]\label{theorem: high SNR K-user capacity region both power}
	When the input is subject to both peak- and average-intensity constraints in  \eqref{eq:peak cons} and \eqref{eq:ave cons}, at high SNR, the capacity region of a $K$-user OI-BC asymptotically converges to the region where the rate tuple $\left( {R}_1,{R}_2,\cdots,{R}_K \right)$ satisfies
	{\setlength\abovedisplayskip{4.85pt} 
	\setlength\belowdisplayskip{4.85pt}
	\begin{IEEEeqnarray}{rCl}
		R_k & \ \dot{\leq}\ & \frac{1}{2}\log \left(\frac{ \exp  \left(2 - \frac{2\rho_k\mu^\star e^{-\rho_k\mu^\star}}{1-e^{-\rho_k\mu^\star}}\right) \left( \frac{1- e^{-\rho_k\mu^\star} }{\mu^\star}
			\right)^2 \amp^2 + 2\pi e\sigma_k^2 }{ \exp \left(2 - \frac{2\rho_{k-1}\mu^\star e^{-\rho_{k-1}\mu^\star}}{1-e^{-\rho_{k-1}\mu^\star}}\right) \left( \frac{1- e^{-\rho_{k-1}\mu^\star} }{\mu^\star}
			\right)^2 \amp^2 + 2\pi e\sigma_k^2 } \right), \label{eq: high SNR K-user both power}
	\end{IEEEeqnarray}}with $\bm{\rho}\in \mathfrak{D}_{\bm{\rho}}$ and $\mathfrak{D}_{\bm{\rho}}=\left\{\bm{\rho}: {\rho}_k \in[0,1], \ {\rho}_{k-1}\leq {\rho}_{k}, \ \forall k\in[K], \ {\rho}_0=0,\ {\rho}_K=1\right\}$.
\end{theorem}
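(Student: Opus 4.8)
The plan is to combine the two techniques already developed in the paper: the scaled-shape-parameter outer-bound derivation from the two-user proof of Theorem~\ref{theorem: high SNR capacity region peak and ave}, and the hyperplane-matching argument from the $K$-user peak-only proof of Theorem~\ref{theorem: high SNR K-user capacity region only peak}. I would not invoke Lemma~\ref{lemma: K-user outer bound both peak and ave} off the shelf, because its outer bound carries the fixed shape parameter $\mu^\star$, whereas the decomposed inputs of the inner bound carry shape parameters $\mu^\star/\prod_{n=k}^K N_n$; matching the two at high SNR requires an outer bound whose shape parameter scales with $\rho_k$.

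First I would re-derive an outer bound valid at high SNR by repeating the inductive construction of $\mathsf{h}(Y_k|U_k)$ in the proof of Theorem~\ref{theorem: K-user outer bound only peak}, but replacing $\mathsf{C}_{\textnormal{ub}}(\rho_{k-1}\amp,\sigma_K)$ by $\mathsf{C}_{\textnormal{ub}}(\rho_{k-1}\amp,\sigma_K;\rho_{k-1}\mu^\star)$. The monotonicity of $\rho\mapsto\mathsf{C}_{\textnormal{ub}}(\rho\amp,\sigma_K;\rho\mu^\star)$ established in Appendix~\ref{app: monotonicity of C_mu(A,sigma)}, together with the facts that it vanishes as $\rho\to 0$ and equals $\mathsf{C}_{\textnormal{ub}}(\amp,\sigma_K;\mu^\star)$ at $\rho=1$, lets me apply the intermediate value theorem at each induction step to produce a chain $0=\rho_0\leq\rho_1\leq\cdots\leq\rho_K=1$, exactly as in the two-user case. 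Substituting the high-SNR single-user result~\eqref{eq: peak and ave high-SNR capacity} then collapses the $\sigma_K^2$ prefactor against the $\sigma_K^2$ inside $\mathsf{C}_{\textnormal{ub}}$; the factor $\rho_k^2$ cancels because $(\frac{1-e^{-\rho_k\mu^\star}}{\rho_k\mu^\star})^2(\rho_k\amp)^2=(\frac{1-e^{-\rho_k\mu^\star}}{\mu^\star})^2\amp^2$, yielding precisely the claimed form~\eqref{eq: high SNR K-user both power}.

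Next I would show that the boundary of this outer bound is a $(K-1)$-dimensional hyperplane. Writing $g(\rho)=\exp(2-\frac{2\rho\mu^\star e^{-\rho\mu^\star}}{1-e^{-\rho\mu^\star}})(\frac{1-e^{-\rho\mu^\star}}{\mu^\star})^2$ with $g(0)=0$, the high-SNR outer bound has the same algebraic shape as the peak-only expression~\eqref{eq: high-SNR K-uer Rk UB only peak} with $\rho_k^2$ replaced throughout by $g(\rho_k)$. Hence the slope computation leading to~\eqref{eq: slope} carries over verbatim, giving $\partial R_K^{\textnormal{out}}/\partial R_k^{\textnormal{out}}\doteq -1$ for every $k\in[K-1]$, so the boundary is a hyperplane, which I denote $\mathcal{H}_{\textnormal{out}}$.

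Finally I would match the inner bound to $\mathcal{H}_{\textnormal{out}}$ exactly as in the proof of Theorem~\ref{theorem: high SNR K-user capacity region only peak}. Substituting~\eqref{eq: peak and ave high-SNR capacity} into the inner bound~\eqref{eq: Rk Lb both power K-user} of Theorem~\ref{theorem: K-user inner bound both power} and choosing the correspondence $\rho_k=1/\prod_{n=k}^K N_n$, the shape parameter $\mu^\star/\prod_{n=k}^K N_n$ becomes $\rho_k\mu^\star$, so each $R_k^{\textnormal{in}}(\bm{N})\doteq R_k^{\textnormal{out}}(\bm{\rho})$ and the un-convexified inner region $\mathcal{R}$ satisfies $\mathcal{R}\subseteq\mathcal{H}_{\textnormal{out}}$; letting $N_{k-1}\to\infty$ and $N_n\to 1$ for $n\geq k$ achieves the corner points of $\mathcal{H}_{\textnormal{out}}$, so $\mathsf{Conv}\{\mathcal{R}\}$ fills the entire hyperplane. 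The hard part will be the first step: re-deriving the scaled-shape outer bound hinges entirely on the monotonicity of $\mathsf{C}_{\textnormal{ub}}(\rho\amp,\sigma_K;\rho\mu^\star)$, which is what makes the intermediate value argument inside the induction legitimate, whereas everything downstream is a transcription of the peak-only argument under the substitution $\rho_k^2\mapsto g(\rho_k)$.
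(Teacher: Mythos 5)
Your proposal is correct and follows essentially the same route as the paper: the paper likewise bypasses Lemma~\ref{lemma: K-user outer bound both peak and ave} and derives a fresh high-SNR outer bound in which the shape parameter scales as $\rho_{k}\mu^\star$, using the Appendix~\ref{app: monotonicity of C_mu(A,sigma)} monotonicity for the intermediate-value step inside the conditional-EPI induction of Theorem~\ref{theorem: K-user outer bound only peak}, and then matches it against the inner bound of Theorem~\ref{theorem: K-user inner bound both power} by the hyperplane argument of Theorem~\ref{theorem: high SNR K-user capacity region only peak}. The only difference is expository: you spell out the $\rho_k^2\mapsto g(\rho_k)$ transcription of the slope and corner-point computations (and the cancellation $\bigl(\tfrac{1-e^{-\rho\mu^\star}}{\rho\mu^\star}\bigr)^2(\rho\amp)^2=\bigl(\tfrac{1-e^{-\rho\mu^\star}}{\mu^\star}\bigr)^2\amp^2$), which the paper leaves implicit under ``similar arguments.''
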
 
\begin{proof}
	The proof follows similar arguments as in the proof of Theorem~\ref{theorem: high SNR K-user capacity region only peak}. Before that, we need to derive a new outer bound which is valid in the high SNR regime. Note that
	{\setlength\abovedisplayskip{4.85pt} 
	\setlength\belowdisplayskip{4.85pt}
	\begin{IEEEeqnarray}{rCl}
	\mathsf{h}(Y_K|U_K)	&\geq& \mathsf{h}(Z_K),\\
	\mathsf{h}(Y_K|U_K) &\leq& \mathsf{C}_{\textnormal{ub}}(\amp,\sigma_K;\mu^\star) + \mathsf{h}(Z_K).
	\end{IEEEeqnarray}}At high SNR, $\mathsf{C}_{\textnormal{ub}}(\rho\amp,\sigma_K;\rho\mu^\star)$, $\rho\in[0,1]$, is monotonically increasing with respect to $\rho$ and approaches zeros when $\rho$ tends to $0$. Hence, there exists $\rho_{K-1}\in[0,1]$ such that
{\setlength\abovedisplayskip{4.85pt} 
\setlength\belowdisplayskip{4.85pt}
\begin{IEEEeqnarray}{rCl}
	\mathsf{h}(Y_K|U_K) &=& \mathsf{C}_{\textnormal{ub}}({\rho}_{K-1}\amp,\sigma_K;\rho_{K-1}\mu^\star) + \mathsf{h}(Z_K)\\
	&=& \frac{1}{2} \log \left( 2\pi e\sigma_K^2 + 2\pi e\sigma_K^2 ( e^{ 2 \mathsf{C}_{\textnormal{ub}}({\rho}_{K-1}\amp,\sigma_K;\rho_{K-1}\mu^\star) } -1 ) \right). \label{eq: 182}
\end{IEEEeqnarray}}Furthermore, similar to the steps from \eqref{eq: h(Yk|Uk) equality} to \eqref{eq: EPI3}, we can obtain that
{\setlength\abovedisplayskip{4.85pt} 
\setlength\belowdisplayskip{4.85pt}
\begin{IEEEeqnarray}{rCl}
	\mathsf{h}(Y_k|U_k) 
	&=& \frac{1}{2} \log \left( 2\pi e\sigma_k^2 + 2\pi e\sigma_K^2 ( e^{ 2 \mathsf{C}_{\textnormal{ub}}({\rho}_{k-1}\amp,\sigma_K;\rho_{k-1}\mu^\star) } -1 ) \right), \ k\in[K-1],\\
	\mathsf{h}(Y_k|U_{k+1}) 
	&\leq& \frac{1}{2} \log \left( 2\pi e\sigma_{k}^2 + 2\pi e\sigma_K^2 ( e^{ 2 \mathsf{C}_{\textnormal{ub}}( {\rho}_{k}\amp,\sigma_K;{\rho}_{k}\mu^\star ) } -1 ) \right), \ k\in[K-1],
\end{IEEEeqnarray}}where $\bm{\rho}=[\rho_1,\cdots,\rho_K]\in\mathfrak{D}_{\bm{\rho}}$.
Then, by Lemma~\ref{lemma: Kuser}, at high SNR, we have
{\setlength\abovedisplayskip{4.85pt} 
\setlength\belowdisplayskip{4.85pt}
\begin{IEEEeqnarray}{rCl}
	R_k &\leq& \mathsf{h}(Y_k|U_{k+1}) - \mathsf{h}(Y_k|U_{k})\\
	&\ \dot{\leq} \ & \frac{1}{2}\log \left(\frac{ \exp \left( 2 - \frac{2\rho_k\mu^\star e^{-\rho_k\mu^\star}}{1-e^{-\rho_k\mu^\star}}\right) \left( \frac{1- e^{-\rho_k\mu^\star} }{\mu^\star}
		\right)^2 \amp^2 + 2\pi e\sigma_k^2 }{ \exp \left( 2 - \frac{2\rho_{k-1}\mu^\star e^{-\rho_{k-1}\mu^\star}}{1-e^{-\rho_{k-1}\mu^\star}}\right) \left( \frac{1- e^{-\rho_{k-1}\mu^\star} }{\mu^\star}
		\right)^2 \amp^2 + 2\pi e\sigma_k^2 } \right), \ k\in[K].	\label{eq: 190}
\end{IEEEeqnarray}}Finally, combining the above newly derived outer bound and the inner bound in Theorem~\ref{theorem: K-user inner bound both power}, we can complete the proof.
\end{proof}

The derived bounds on the capacity region are shown in Fig.~\ref{fig: K-user bpower}, where we assume $K=3$, $\sigma_3 = 2\sigma_2=4\sigma_1$, $\alpha=0.4$, and $\mathrm{ASNR}_k=\frac{\amp}{\sigma_k}$, $k\in\{1,2,3\}$. 

\begin{figure}[ht]
	\centering
	\subfigure[Bounds on capacity region.]{
		\includegraphics[width=3.3in]{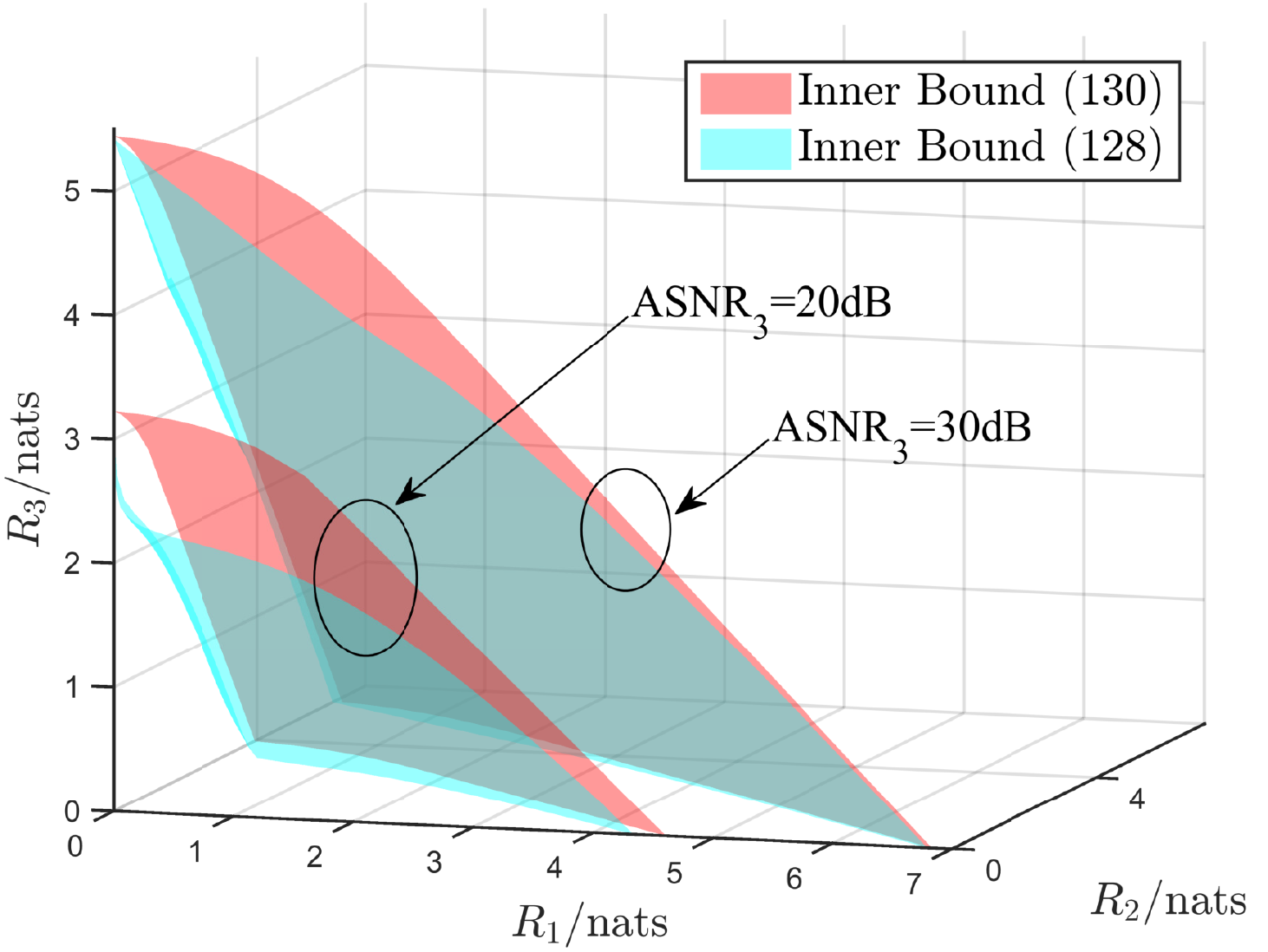}
	}\hspace{-2.5mm}		
	\subfigure[Projection on the $R_1$ and $R_2$ plane.]{
		\includegraphics[width=3.2in]{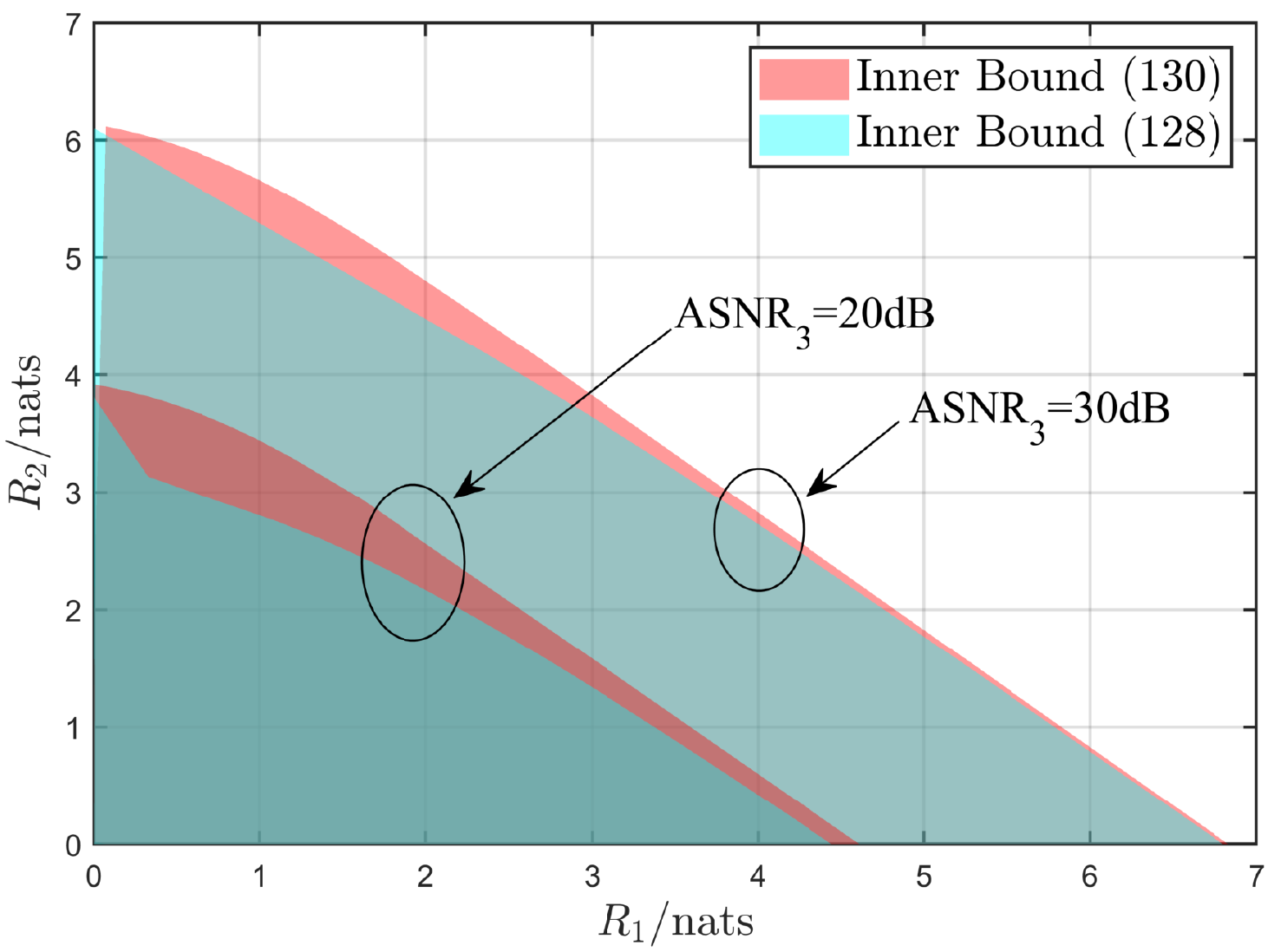}
	}
	\subfigure[Projection on the $R_2$ and $R_3$ plane.]{
		\includegraphics[width=3.2in]{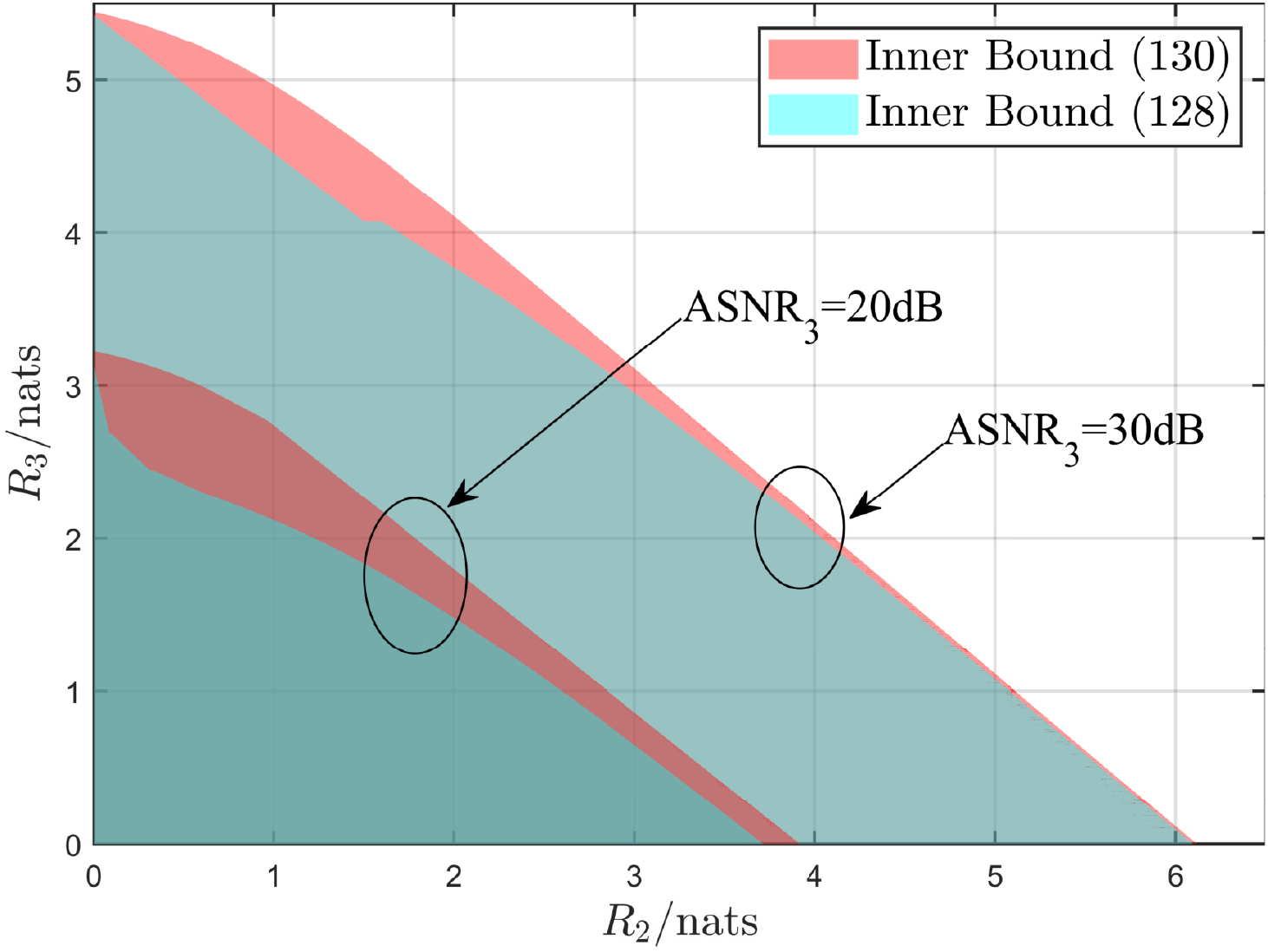}
	}\hspace{-0mm}
	\subfigure[Projection on the $R_1$ and $R_3$ plane.]{
		\includegraphics[width=3.2in]{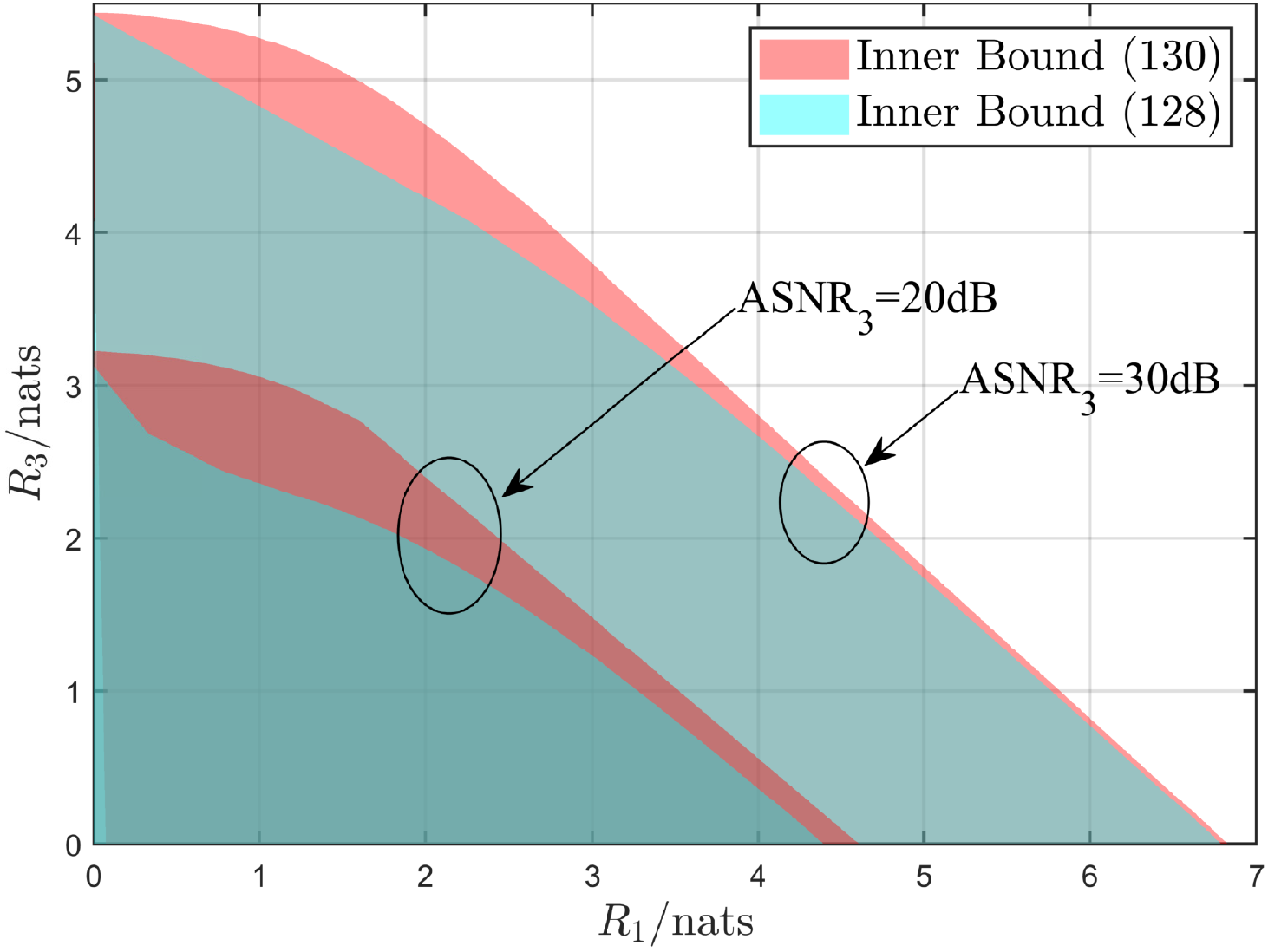}
	}	
	\caption{Bounds on capacity region of $3$-user OI-BC with peak- and average-intensity constraint when $\alpha=0.4$.}
	\vspace{-0.7cm}
	\label{fig: K-user bpower}
\end{figure}

\section{Conclusion Remarks}\label{sec: conclusion}
\label{sec:conclusion}
In this paper, we characterize the capacity region of OI-BCs. Three different input constraints are considered, i.e., (1) only peak-intensity constraint; (2) only average-intensity constraint; (3) both peak- and average-intensity constraints. We first consider two-user OI-BCs. New inner bounds are obtained by carefully designing the input for each user and adopting the SC scheme; new outer bounds on the capacity region are obtained by applying the conditional EPI. The inner and outer bounds asymptotically match at high SNR. Then we extend our results to the general $K$-user OI-BCs without loss of asymptotic optimality at high SNR. As an extension of this work, it would be interesting to study the impact of channel fading on the capacity region of OI-BCs, which could refer to \cite{Zhu2002,Bayaki2009,Yang2014}.

%

\appendices

\section{Monotonicity of $\mathsf{C}_{\textnormal{ub}}(\rho\amp,\sigma;\rho\mu)$ at High SNR}\label{app: monotonicity of C_mu(A,sigma)}
Note that at high SNR,
{\setlength\abovedisplayskip{4.85pt} 
	\setlength\belowdisplayskip{4.85pt}
\begin{IEEEeqnarray}{rCl}
	\mathsf{C}_{\textnormal{ub}}(\amp,\sigma;\mu) \doteq \frac{1}{2}\log 
	\left(1+ \exp \left(2 - \frac{2\mu e^{-\mu}}{1-e^{-\mu}}\right) \left( \frac{1- e^{-\mu} }{\mu}
	\right)^2 \frac{\amp^2 }{2\pi e\sigma^2} \right).
\end{IEEEeqnarray}}We denote a function $g_1(\rho)=\mathsf{C}_{\textnormal{ub}}(\rho\amp,\sigma;\rho\mu)$, i.e.,
{\setlength\abovedisplayskip{4.85pt} 
	\setlength\belowdisplayskip{4.85pt}
\begin{IEEEeqnarray}{rCl}
	g_1(\rho) 
	&=& \frac{1}{2}\log 
	\left(1+ \exp \left( 2 - \frac{ 2\rho\mu e^{-\rho\mu} }{1-e^{-\rho\mu}}\right) \left( \frac{1- e^{-\rho\mu} }{\mu}
	\right)^2 \frac{ \amp^2 }{2\pi e\sigma^2} \right),\nonumber\\
	&&\qquad\qquad\qquad\qquad \rho\in[0,1],\quad \mu>0,\quad \amp>0,\quad \sigma>0.
\end{IEEEeqnarray}}Fixed $\mu$, $\amp$, and $\sigma$, we find that the monotonicity of $g_1(\rho)$ is equivalent to that of the following functions:
{\setlength\abovedisplayskip{4.85pt} 
\setlength\belowdisplayskip{4.85pt}
\begin{align}
	g_2(x) &= \exp \left( 2 - \frac{2x e^{-x}}{1-e^{-x}}\right) \times (1-e^{-x})^2, \ x\geq0,\\
	g_3(x) &= \exp \left( 1 - \frac{x e^{-x}}{1-e^{-x}} \right) \times (1-e^{-x}), \ x\geq0,
\end{align}}where $g_2(x) = \bigl(g_3(x)\bigr)^2$. To prove that $g_3(x)$ is monotonically increasing with respect to $x$, we only need to analyze the following function:
{\setlength\abovedisplayskip{4.85pt} 
\setlength\belowdisplayskip{4.85pt}
\begin{align}
	g_4(x) &= 1 - \frac{x e^{-x}}{1-e^{-x}}, \quad x\geq0
\end{align}}The derivation of $g_4(x)$ is given by
{\setlength\abovedisplayskip{4.85pt} 
\setlength\belowdisplayskip{4.85pt}
\begin{align}
	g_4^\prime(x) 	
	&= \frac{ xe^{-x} + e^{-2x} - e^{-x} }{ (1-e^{-x})^2 } ,\quad x\geq0 
\end{align}}As $x$ increases form $0$ to $+\infty$, the numerator $xe^{-x} + e^{-2x} - e^{-x}$ first increases and then decreases with respect to $x$ . Since $g_4^\prime(0) =0$ and $g_4^\prime(+\infty) =0$. Thus, wen can obtain that
{\setlength\abovedisplayskip{4.85pt} 
\setlength\belowdisplayskip{4.85pt}
\begin{align}
	g_4^\prime(x) 
	\geq 0 ,\quad x\geq0 
\end{align}}Finally, we can conclude that $g_1(\rho) $ is monotonically increasing with respect to $\rho$. Equivalently, $\mathsf{C}_{\textnormal{ub}}(\rho\amp,\sigma;\rho\mu)$ is monotonically increasing with respect to $\rho$ at high SNR.

\section{Proof of Eq. (111)}
\label{app: proof of eq. 2}
Combined with Theorem~\ref{theorem: K-user outer bound only peak}, we have
{\setlength\abovedisplayskip{4.85pt} 
	\setlength\belowdisplayskip{4.85pt}
\begin{IEEEeqnarray}{rCl}
	R_{K}^{\textnormal{out}}
	&=& \frac{1}{2} \log \left( \frac{ \sigma_K^2 + \sigma_K^2( e^{ 2 \mathsf{C}_{\textnormal{ub}}({\rho}_K\amp,\sigma_K) } -1 ) }{ \sigma_K^2 + \sigma_K^2( e^{ 2 \mathsf{C}_{\textnormal{ub}}({\rho}_{K-1}\amp,\sigma_K) } -1 ) }\right)\\
	&\doteq& \frac{1}{2} \log\left( 1+ \frac{\amp^2}{2\pi e \sigma_K^2} \right) 
	- \frac{1}{2}\log \left(  1 + \left( e^{ 2 \mathsf{C}_{\textnormal{ub}}({\rho}_{K-1}\amp,\sigma_K) } -1 \right) \right), \label{eq: 134}.
\end{IEEEeqnarray}}where \eqref{eq: 134} follows from the single-user capacity result in Lemma~\ref{lemma: 1-user capacity only peak} and ${\rho}_K=1$. To characterize the relationship between $R_{K}^{\textnormal{out}}$ and $R_{k}^{\textnormal{out}}$, $k\in[K-1]$, we assume 
{\setlength\abovedisplayskip{4.85pt} 
	\setlength\belowdisplayskip{4.85pt}
\begin{IEEEeqnarray}{rCl}
	 e^{ 2 \mathsf{C}_{\textnormal{ub}}({\rho}_k\amp,\sigma_K) } -1 \doteq \sum_{m=1}^{k} \frac{\sigma_m^2}{\sigma_K^2}\left( e^{2R_{m}^{\textnormal{out}}} -1 \right) \prod_{n=m+1}^k e^{ 2R_{n}^{\textnormal{out}}}, \quad k\in[K-1].\label{eq: 135}
\end{IEEEeqnarray}}We resort to mathematical induction to prove \eqref{eq: 135}. Recall that
{\setlength\abovedisplayskip{4.85pt} 
	\setlength\belowdisplayskip{4.85pt}
\begin{IEEEeqnarray}{rCl}
	R_{1}^{\textnormal{out}} 
	&=& \frac{1}{2} \log \left( \frac{ \sigma_1^2 + \sigma_K^2( e^{ 2 \mathsf{C}_{\textnormal{ub}}({\rho}_1\amp,\sigma_K) } -1 ) }{ \sigma_1^2 + \sigma_K^2( e^{ 2 \mathsf{C}_{\textnormal{ub}}({\rho}_0\amp,\sigma_K) } -1 ) }\right)\\
	&\doteq& \frac{1}{2} \log \left( 1 + \frac{ \sigma_K^2 }{ \sigma_1^2 } \left( e^{ 2 \mathsf{C}_{\textnormal{ub}}({\rho}_1\amp,\sigma_K) } -1 \right) \right).
\end{IEEEeqnarray}}Then we have
{\setlength\abovedisplayskip{4.85pt} 
	\setlength\belowdisplayskip{4.85pt}
\begin{IEEEeqnarray}{rCl}
	e^{ 2 \mathsf{C}_{\textnormal{ub}}({\rho}_1\amp,\sigma_K) } -1  
	&\doteq& \frac{\sigma_1^2}{\sigma_K^2}\left( e^{2R_{1}^{\textnormal{out}}} -1 \right) .
\end{IEEEeqnarray}}Hence, \eqref{eq: 135} is true if $k=1$. Next, we fix a particular $i\in\{1,\cdots,K-2\}$ and assume \eqref{eq: 135} is true if $k=i$, i.e.,
{\setlength\abovedisplayskip{4.85pt} 
	\setlength\belowdisplayskip{4.85pt}
\begin{IEEEeqnarray}{rCl}
	e^{ 2 \mathsf{C}_{\textnormal{ub}}({\rho}_i\amp,\sigma_K) } -1 
	&\doteq& \sum_{m=1}^{i} \frac{\sigma_m^2}{\sigma_K^2} \left( e^{2R_{m}^{\textnormal{out}}} -1 \right) \prod_{n=m+1}^i e^{ 2R_{n}^{\textnormal{out}}}, 
\end{IEEEeqnarray}}It follows that
{\setlength\abovedisplayskip{4.85pt} 
	\setlength\belowdisplayskip{4.85pt}
\begin{IEEEeqnarray}{rCl}
	R_{i+1}^{\textnormal{out}} 
	&\doteq& \frac{1}{2} \log \left( \frac{ \sigma_{i+1}^2 + \sigma_K^2( e^{ 2 \mathsf{C}_{\textnormal{ub}}({\rho}_{i+1}\amp,\sigma_K) } -1 ) }{ \sigma_{i+1}^2 + \sigma_K^2 \sum_{m=1}^{i} \frac{\sigma_m^2}{\sigma_K^2} \left( e^{2R_{m}^{\textnormal{out}}} -1 \right) \prod_{n=m+1}^i e^{ 2R_{n}^{\textnormal{out}}} }\right),
\end{IEEEeqnarray}}and 
{\setlength\abovedisplayskip{4.85pt} 
	\setlength\belowdisplayskip{4.85pt}
\begin{IEEEeqnarray}{rCl}
	e^{ 2 \mathsf{C}_{\textnormal{ub}}({\rho}_{i+1}\amp,\sigma_K) } -1 
	&\doteq& e^{2R_{i+1}^{\textnormal{out}}} \times \left( \frac{\sigma_{i+1}^2}{\sigma_K^2} + \sum_{m=1}^{i} \frac{\sigma_m^2}{\sigma_K^2} \left( e^{2R_{m}^{\textnormal{out}}} -1 \right) \prod_{n=m+1}^i e^{ 2R_{n}^{\textnormal{out}}} \right) - \frac{\sigma_{i+1}^2}{\sigma_K^2}\\
	&=& \sum_{m=1}^{i+1} \frac{\sigma_m^2}{\sigma_K^2} \left( e^{2R_{m}^{\textnormal{out}}} -1 \right) \prod_{n=m+1}^{i+1} e^{ 2R_{n}^{\textnormal{out}}}.
\end{IEEEeqnarray}}Therefore, \eqref{eq: 135} is also true if $k=i+1$. Finally, by mathematical induction, we conclude that \eqref{eq: 135} holds for every $k\in[K-1]$.

Substituting \eqref{eq: 135} into \eqref{eq: 134}, the relationship between $R_{K}^{\textnormal{out}}$ and $R_{k}^{\textnormal{out}}$, $k\in[K-1]$, can be characterized by 
{\setlength\abovedisplayskip{4.85pt} 
	\setlength\belowdisplayskip{4.85pt}
\begin{IEEEeqnarray}{rCl}
	R_{K}^{\textnormal{out}} 
	&\doteq& \frac{1}{2} \log\left( 1+ \frac{\amp^2}{2\pi e \sigma_K^2} \right) 
	- \frac{1}{2}\log \left(  1 + \sum_{m=1}^{K-1} \frac{\sigma_m^2}{\sigma_K^2}\left( e^{2R_{m}^{\textnormal{out}}} -1 \right) \prod_{n=m+1}^{K-1} e^{ 2R_{n}^{\textnormal{out}}} \right),\quad
\end{IEEEeqnarray}}which concludes the proof.

\footnotesize
\normalsize

\end{document}